\definecolor{darkred}  {rgb}{0.5,0,0}
\definecolor{darkblue} {rgb}{0,0,0.5}
\definecolor{darkgreen}{rgb}{0,0.5,0}
\newcommand{\be}{\begin{equation}}
	\newcommand{\ee}{\end{equation}}
\newcommand{\ba}{\begin{array}}
	\newcommand{\ea}{\end{array}}
\newcommand{\bea}{\begin{eqnarray}}
	\newcommand{\eea}{\end{eqnarray}}
\renewcommand{\>}{\rangle}
\newcommand{\<}{\langle}
\newcommand{\calA}{{\cal A }}
\newcommand{\calH}{{\cal H }}
\newcommand{\calE}{{\cal E }}
\newcommand{\calD}{{\cal D }}
\newcommand{\calS}{{\cal S }}
\newcommand{\calQ}{{\cal Q }}
\newcommand{\calR}{{\cal R }}
\newcommand{\calZ}{{\cal Z}}
\newcommand{\calW}{{\cal W}}
\newcommand{\CC}{\mathbb{C}}
\newcommand{\RR}{\mathbb{R}}
\newcommand{\la}{\langle}
\newcommand{\ra}{\rangle}
\newcommand{\eps}{\epsilon}
\DeclareMathOperator*{\Ex}{\mathbb{E}}
\newcommand{\zee}{{\cal Z}}
\NewDocumentCommand{\trace}{m o}{%
	\ \mathrm{Tr}#1 %
	\IfValueT{#2}{\ensuremath{_{#2}}}%
}
\DeclareDocumentCommand\outer{ s m g }
{ 
	\IfBooleanTF{#1}
	{ 
		\IfNoValueTF{#3}
		{\vphantom{#2}\left\lvert\smash{#2}\middle\rangle\!\middle\langle\smash{#2}\right\rvert}
		{\vphantom{#2#3}\left\lvert\smash{#2}\middle\rangle\!\middle\langle\smash{#3}\right\rvert}
	}
	{ 
		\IfNoValueTF{#3}
		{\left\lvert{#2}\middle\rangle\!\middle\langle{#2}\right\rvert}
		{\left\lvert{#2}\middle\rangle\!\middle\langle{#3}\right\rvert}
	}
}
\newcommand{\ket}[1]{\left| #1\right\rangle}        
\newcommand{\bra}[1]{\left\langle #1\right|}        
\newcommand{\ketbra}[1]{| #1 \rangle \! \mspace{1mu}\langle #1| }
\newtheorem{dfn}{Definition}
\newtheorem{prop}{Proposition}
\newtheorem{claim}{Claim}
\newtheorem{lemma}{Lemma}
\newtheorem{corol}{Corollary}
\newtheorem{fact}{Fact}
\newtheorem{theorem}{Theorem}
\newtheorem*{theorem*}{Theorem}
\newtheorem{problem}{Problem}
\newtheorem{rmk}{Remark}
\newtheorem*{problem*}{Problem}
\newcommand{\footremember}[2]{%
	\footnote{#2}
	\newcounter{#1}
	\setcounter{#1}{\value{footnote}}%
}
\newcommand{\footrecall}[1]{%
	\footnotemark[\value{#1}]%
}
\title{On the complexity of quantum partition functions}
\author{Sergey Bravyi\footremember{ibm}{IBM Quantum, IBM T.J. Watson Research Center}
	\and
	Anirban Chowdhury \footremember{iqc}{Institute for Quantum Computing, University of Waterloo, Canada}\footremember{co}{Department of Combinatorics and Optimization, University of Waterloo, Canada}
	\and David Gosset\footrecall{iqc}
	\footrecall{co}
	\and Pawel Wocjan\footrecall{ibm}
}
\date{}
\begin{document}

	\maketitle

	\begin{abstract} 
		The partition function and free energy of a quantum many-body system determine its physical properties in thermal equilibrium.
		Here we study the computational complexity of approximating these quantities for $n$-qubit local Hamiltonians. 
		First, we report a classical algorithm with $\mathrm{poly}(n)$ runtime which approximates the free energy of a given
		$2$-local Hamiltonian provided that it satisfies a certain denseness condition. 
		Our algorithm combines the variational characterization of the free energy and 
		convex relaxation methods. It contributes to a body of work 
		on efficient approximation algorithms for dense instances of optimization problems which are hard in the general case, and can be viewed as simultaneously extending existing algorithms for (a) the ground energy of dense $2$-local Hamiltonians, and (b) the free energy of dense classical Ising models. Secondly, we establish polynomial-time equivalence between the problem of
		approximating the free energy of local Hamiltonians
		and three other natural quantum approximate counting problems, including the problem of
		approximating the number of witness states accepted by a $\mathsf{QMA}$ verifier. These results suggest that simulation of quantum many-body systems in thermal equilibrium may precisely capture the complexity of a broad family of computational problems that has yet to be defined or characterized in terms of known complexity classes. Finally, we summarize state-of-the-art classical and quantum algorithms for approximating the free energy and show how to
		improve their runtime and memory footprint. 
	\end{abstract}


	\section{Introduction}
	\label{sec:intro}
	
	Predicting properties of a quantum many-body system that emerge from 
	its microscopic description in terms of constituent particles and interactions among them
	is a fundamental problem in physics. Many properties of a system in thermal equilibrium are determined by
	the partition function 
	\[
	\zee=\trace[e^{-\beta H}],
	\]
	where $\beta$ is the inverse temperature
	and $H$ is the Hamiltonian describing the system. 
	The partition function appears as a normalization factor
	in the thermal equilibrium  Gibbs state $\rho=e^{-\beta H}/\zee$  and determines important  thermodynamic
	quantities such as the Helmholtz free energy 
	\begin{equation}
		F=-(1/\beta)\log{\zee}.
		\label{eq:fbeta}
	\end{equation}
	The ability to calculate the free energy
	and its derivatives  with respect to the temperature
	and Hamiltonian parameters such as external fields
	is instrumental for 
	mapping out the phase diagram of the system
	and predicting physical properties of each phase.
	Accordingly, the problem of estimating the  free energy
	has been extensively studied, both in the physics and computer science
	communities. Here we investigate this problem for $k$-local Hamiltonians~\cite{kitaev2002book} that describe a system of $n$ qubits with interactions among subsets of at most $k$ qubits. Despite its fundamental significance, little is known about the computational complexity of estimating the free energy of a local Hamiltonian to a given additive error (or equivalently the partition function to a given relative error).   All we can say is that the problem is $\mathsf{QMA}$-hard, even for $2$-local Hamiltonians \cite{kempe2006complexity}, and can be solved in polynomial time using a $\#\mathsf{P}$ oracle~\cite{brown2011computational}.  As we discuss below, precisely characterizing the complexity of this problem  appears to be challenging. Nevertheless, the $\mathsf{QMA}$-hardness result indicates that the best one can hope for is an efficient algorithm for the free energy of some special classes of local Hamiltonians, or which achieves a less stringent approximation.
	
	\paragraph{Efficient approximation algorithm for dense $2$-local Hamiltonians}
	Our first contribution is an efficient classical algorithm that approximates the free energy
	for any $n$-qubit $2$-local Hamiltonian $H=\sum_{i,j} H_{ij}$ satisfying a denseness condition
	\be
	\label{dense_condition}
	\max_{i,j} \|H_{ij}\| \le O(1/n^2) \sum_{i,j} \| H_{ij}\|.
	\ee
	In other words, if one picks a pair of qubits $i,j$ uniformly at random then
	the average interaction strength $\mathbb{E}_{i,j} \|H_{ij}\|$ is  within a constant factor of the maximum
	interaction strength $\max_{i,j} \|H_{ij}\|$. For example, if 
	qubits are located at vertices of a graph such that $\|H_{ij}\|=1$ whenever $i,j$ are connected by an edge and
	$H_{ij}=0$ otherwise then the Hamiltonian is dense iff the graph has $\Omega(n^2)$ edges.
	Our algorithm takes as input
	a dense $2$-local Hamiltonian $H$, the inverse temperature $\beta$,  
	a precision parameter $\epsilon$, and outputs  an estimator $\tilde{F}$ satisfying
	\be
	\label{intro1}
	|F-\tilde{F}| \le \epsilon \sum_{i,j} \|H_{ij}\|.
	\ee
	The algorithm has  runtime $n^{O(1/\epsilon^2)}$ which is polynomial in $n$ for any constant
	$\epsilon>0$.
	Note that the runtime does not depend on the temperature. Moreover, while the above result applies to $2$-local Hamiltonians, we believe that a very similar approach can be used to prove the analogous statement for $k$-local Hamiltonians.
	A  more precise statement of our result can be found in Theorem~\ref{thm:free} in Section~\ref{sec:dense}. 
	
	In most cases of interest the right hand side of Eq.~\eqref{intro1} is expected to increase with $n$ (even for constant $\epsilon$), so our result falls short of providing a small constant additive error approximation of $F$. However, we do not expect that the latter approximation guarantee can be achieved by any polynomial time algorithm.
	Indeed, it can be easily shown that the problem of approximating the free energy with a small additive error
	has the same complexity for dense and for general $2$-local Hamiltonians\footnote{Indeed,
		suppose $H_1$ is a $2$-local Hamiltonian on $n$ qubits and $H_2$
		is a dense $2$-local Hamiltonian on $m$ qubits 
		such that the free energy of $H_2$ is easy to compute (for example, one can choose 
		$H_2=\sum_{1\le i<j\le m} Z_i Z_j$).  Then a Hamiltonian
		$H=H_1\otimes I + I \otimes H_2$ is dense for $m=poly(n)$ and
		the free energy of $H$ is $F=F_1+F_2$,
		where $F_i$ is the free energy of $H_i$. 
	}.
	As noted above, the latter problem is $\mathsf{QMA}$-hard. The assumption that the Hamiltonian is dense also cannot be removed without compromising the quality of approximation or the runtime of the algorithm.
	Indeed, the famous PCP theorem~\cite{dinur2007pcp} implies that approximating the ground energy of general (not necessarily dense) $2$-local Hamiltonians
	with an additive error $\epsilon \sum_{i,j} \|H_{ij}\|$ and a sufficiently
	small constant $\epsilon>0$ is
	$\mathsf{NP}$-hard, see for instance~\cite{alimonti1997hardness}. 
	This is true even for classical (diagonal) Hamiltonians associated with optimization
	problems such as Max-Cut~\cite{alimonti1997hardness}.
	This hardness of approximation carries over to the free energy $F$
	since the latter approximates the ground energy for large $\beta$. In particular, a straightforward argument\footnote{Using the fact that the derivative of the free energy with respect to temperature $T=\beta^{-1}$ is upper bounded by the entropy of the Gibbs state, which is at most $n\log(2)$.} shows that the free energy with $\beta=O(n\delta^{-1})$ attains a $\delta$-additive error approximation to the ground energy.
	
	Our result complements existing classical methods for simulating
	quantum systems in thermal equilibrium. 
	For example, quantum Monte Carlo (QMC)  is a suite of classical
	simulation methods that map the quantum partition function to the one
	describing a classical spin system and approximate the latter using
	Markov Chain Monte Carlo algorithms~\cite{suzuki1993quantum}. 
	However, the quantum-to-classical mapping
	employed by all QMC algorithms is well defined only for Hamiltonians avoiding the so-called sign problem
	(aka stoquastic Hamiltonians)~\cite{bravyi2006complexity}. 
	Furthermore, the efficiency of QMC hinges on 
	the rapid mixing property of the underlying Markov Chain which has been
	rigorously proved only in a few special cases~\cite{crosson2021rapid,bravyi2017polynomial}.
	Another popular technique for estimating the free energy 
	is the high-temperature expansion~\cite{barvinok2016combinatorics,harrow2020classical}.
	It works by computing 
	the Taylor series of the log-partition function 
	$\log{\zee}$ considered as a function of the variable $\beta$ at the point $\beta=0$
	and truncating the series at a sufficiently high order.
	The method is provably efficient assuming that the partition  function  is zero-free
	in a  neighborhood of the interval $[0,\beta]$ in the complex plane~\cite{harrow2020classical}.
	The desired zero-freeness property is expected to hold for all temperatures
	above the phase transition point where  
	a complex zero of the partition function approaches the real axis.
	
	Our result also extends a body of work on
	efficient approximation algorithms for dense instances of 
	classical and quantum optimization
	problems which  are $\mathsf{NP}$- and $\mathsf{QMA}$-hard respectively
	in the general case~\cite{arora1999polynomial,brandao2013product,gharibian2012approximation}. For example, Ref.~\cite{gharibian2012approximation} described an algorithm for dense local Hamiltonians and showed that it gives a good approximation to the minimum energy $\mathrm{Tr}(H\rho)$ achieved by any $n$-qubit product state $\rho=\rho_1\otimes \rho_2\ldots \otimes \rho_n$. Later it was shown \cite{brandao2013product} that the same algorithm gives a good approximation to the true ground energy. Ref. \cite{brandao2013product} also described a different approximation algorithm based on convex relaxation methods for the ground energy of $2$-local Hamiltonians on graphs with low threshold rank (see \cite{brandao2013product} for a definition of this property). As far as we know, the techniques used in these papers for minimizing ground energies cannot be directly applied to the problem of estimating the free energy.
	
	Instead, the starting point for our algorithm is a beautiful paper by Risteski~\cite{risteski2016partition}
	that showed how to approximate the free energy of classical Ising-type Hamiltonians
	satisfying a denseness condition analogous to Eq.~(\ref{dense_condition}).
	Following  Risteski~\cite{risteski2016partition}, we use a variational characterization of the free energy,
	namely 
	\begin{equation}
		F = \min_\rho f(\rho) \quad \text{where}\quad  f(\rho)=\mathrm{Tr}(\rho H)+ (1/\beta) \mathrm{Tr}(\rho \log{\rho})
		\label{eq:varf}
	\end{equation}
	and the minimization is over all $n$-qubit density matrices $\rho$.
	To make this optimization problem tractable, we consider its relaxation defined in terms of all $m$-qubit reduced density matrices satisfying local consistency conditions. The optimal value $\tilde{F}$ of the relaxed problem gives a lower bound on the free energy $F$. Here $m$, which determines the size of the convex program, depends only on the precision $\epsilon$ and a parameter quantifying the denseness of the Hamiltonian. Since we consider 2-local Hamiltonians, the $m$-qubit reduced density matrices suffice to give a relaxation for the average energy term in $f(\rho)$. This idea has been used previously in classical computer science to obtain efficient algorithms for dense instances of constraint satisfaction problems; the relaxed program there is defined in terms of marginals of classical probability distributions~\cite{yoshida2014approximation}. Risteski's insight was that a suitable relaxation to the entropy term in Eq.~\eqref{eq:varf}, a \emph{pseudo-entropy} function, can be computed from these marginals~\cite{risteski2016partition}.

	We define a relaxation to the quantum free energy minimization problem following a similar roadmap. We show that the relaxed problem is convex (though it is not a semidefinite program). Thus its optimal value $\tilde{F}$ can be computed efficiently using known convex optimization methods \cite{bertsimas2004solving,grotschel2012geometric}. The last step of our algorithm is a \emph{rounding map} that converts the optimal solution of the relaxed problem (i.e. a collection of $m$-qubit reduced density matrices) to an $n$-qubit density matrix $\rho$ such that $f(\rho)\le \tilde{F} + \epsilon \sum_{ij} \|H_{ij}\|$. This is the only step of our algorithm that requires the denseness condition Eq.~(\ref{dense_condition}). Combining the inequalities$\tilde{F}\le F\le f(\rho)\le \tilde{F} + \epsilon \sum_{i,j} \|H_{ij}\|$ gives the desired approximation error bound  Eq.~(\ref{intro1}).

	A key technical component is our rounding technique, which has its origins in the propagation sampling method of Ref. \cite{barak2011rounding} and its quantum variant proposed by  Br\~{a}ndao and Harrow~\cite{brandao2013product} to obtain an  
	approximation algorithm for the ground energy of $2$-local Hamiltonians on graphs with low threshold rank. For the considered family of graphs the quantum rounding map proposed in Ref.~\cite{brandao2013product} has the feature that the value of the energy $\mathrm{Tr}(\rho H)$ of the output of the map can be bounded in terms of its value before rounding. Here we require this \textit{energy preservation} property for a different family of graphs, but more importantly we require a version of the quantum rounding map that also plays nicely with the relaxation to the entropy. We are able to make this work by combining technical ingredients from Refs. \cite{risteski2016partition, yoshida2014approximation, brandao2013product}; in addition to the energy preservation property we show that the quantum rounding map has the crucial feature that the von Neumann entropy of its output (a density matrix) is at least as large as the pseudo-entropy function of its input (a collection of $m$-qubit reduced density matrices).  
	
	Our result simultaneously generalizes the efficient algorithm for the ground energy of dense $2$-local Hamiltonians from Refs.\cite{brandao2013product,gharibian2012approximation} (obtained as a special case in which $\beta$ is sufficiently large) and the one for the free energy of dense classical Ising models \cite{risteski2016partition}. It is worth noting, however, that if we consider the special case of ground energies of dense $2$-local Hamiltonians, our algorithm is different and arguably simpler than the one from Refs.~\cite{brandao2013product,gharibian2012approximation}\footnote{The algorithm in those works is based on approximating the minimum ground energy over $n$-qubit product states \cite{gharibian2012approximation} and then uses the fact \cite{brandao2013product} that the optimal product state energy is a good approximation to the true ground energy.}.

	\paragraph{Complexity of partition functions}
	Next we consider the computational complexity of approximating the free energy $F$
	for more  general Hamiltonians
	describing a system of $n$ qubits with $k$-qubit interactions, where $k=O(1)$.
	Such a system can be described by a $k$-local Hamiltonian 
	\[
	H=\sum_{S\subseteq \{1,\ldots,n\}}\; H_S
	\]
	where each term $H_S$ is a hermitian operator acting non-trivially only on the subset of qubits $S$
	and $H_S=0$ unless $|S|\le k$.
	Below we assume that $\|H_S\| \le \mathrm{poly}(n)$ for all $S$. 
	Since a small additive error approximation of the free energy  is equivalent to a small relative error approximation
	of the partition function, we shall focus on the latter.
	Thus we are interested in the following problem. 
	
	\begin{problem*}[\bf Quantum Partition Function (QPF)]
		Given a  $k$-local Hamiltonian $H$ acting on $n$ qubits,
		inverse temperature $\beta\le \mathrm{poly}(n)$, 
		and a precision parameter $\delta\ge \mathrm{poly}(1/n)$. Compute an estimate $\tilde{Z}$ such that 
		\[
		(1-\delta) \zee  \le \tilde{Z} \le (1+\delta) \zee.
		\]
		Here $\zee=\mathrm{Tr}(e^{-\beta H})$.
	\end{problem*}
	We note that it is crucial here that we consider \textit{relative error} estimation of the partition function. Previous works have considered the less challenging problem of additively approximating the (normalized) partition function \cite{brandao08,cade2017quantum,chowdhury2020computing}. These works show that the latter problem admits an efficient quantum algorithm and, for Hamiltonians with locality $k=O(\log(n))$, is complete for the complexity class $\mathsf{DQC}_1$ of problems that can be solved in polynomial time with only ``one clean qubit" \cite{knill1998power}. 
	
	In contrast to the ground energy, the partition function $\zee$ depends on all eigenvalues
	of the Hamiltonian as well as their \textit{degeneracy}. Thus
	one can view the QPF problem as a quantum analogue of 
	approximate counting — a fundamental task in classical computational complexity.
	In particular, it is known that the problem of approximating the partition function of a classical local
	Hamiltonian with a small relative error is 
	contained in the complexity class $\mathsf{BPP}^{\mathsf{NP}}$, essentially due to a fundamental result of Stockmeyer~\cite{stockmeyer1983complexity}. We can interpret this result as stating that the 
	classical version of the QPF problem  is not much harder than $\mathsf{NP}$,
	which is surprising because computing the partition function exactly is $\#\mathsf{P}$-hard. 
	On the other hand, if we allow $\beta$ to be complex, the 
	task of approximating $\zee$
	with a constant relative error is known to be $\#\mathsf{P}$-hard~\cite{goldberg2017complexity},
	even in the case of classical Hamiltonians. The discrepancy between the problem complexity for real and complex $\beta$’s suggests that non-negativity of the Gibbs state should play an important role in the complexity analysis.

	The fact that Stockmeyer's approximate counting method does not generalize to the quantum case has been observed previously \cite{aharonov2008pursuit}.
	This may explain why the complexity of the QPF problem
	-- a basic and natural
	question from a physics perspective, remains largely open.
	A recent result by Cubitt et al.~\cite{cubitt2018universal} gives us hope that this question may have
	a mathematically appealing answer. The work~\cite{cubitt2018universal}  introduced the notion of universal quantum Hamiltonians
	and showed (among other things) that the QPF problem for
	$k$-local Hamiltonian has  the same complexity for any constant $k\ge 2$. (In fact, one can further specialize $2$-local Hamiltonians to the anti-ferromagnetic Heisenberg
	model on a two-dimensional lattice of qubits or even certain 1D qudit systems~\cite{cubitt2018universal, zhou2021strongly}.)
	The same universality result also applies to natural families of fermionic Hamiltonians
	relevant for quantum chemistry and material science~\cite{cubitt2018universal}.
	This suggests that the  QPF problem is complete
	for some ``universal" complexity class that can be
	defined in terms of a suitable computational model and does not depend
	on details of the considered Hamiltonians, such as the locality parameter $k$
	(as long as $k\ge 2$)
	or whether the underlying system consists of qubits (spins) or fermions.
	
	Given this state of affairs, and the apparently challenging nature of characterizing the complexity of the QPF
	problem in terms of known complexity classes, here we use a standard computer science dodge -- we look for other computational problems which are equivalent  to QPF under polynomial time reductions
	with the hope that these other problems may ultimately be easier to understand.
	Namely, we show that the QPF problem is equivalent to approximating the following quantities:
	\begin{enumerate}
		\item[(1)] The expected value of a local observable 
		in the Gibbs state of a $k$-local Hamiltonian.
		\item[(2)] The number of eigenvalues of a $k$-local Hamiltonian $H$ within a given interval.
		\item[(3)] The number of witness states accepted by a $\mathsf{QMA}$ verifier\footnote{Here $\mathsf{QMA}$ verifier is a polynomial-size quantum circuit
			followed by a measurement of some designated output qubit. The circuit
			takes as input a witness state and possibly ancilla qubits
			initialized in $|0\rangle$. A witness state is accepted if the 
			probability of the measurement outcome `1' is above a specified threshold.}
		
	\end{enumerate}
	We require an approximation within a small additive error in case (1) and
	a small relative error in cases (2,3). For a formal definition of the above problems
	see Section~\ref{sec:equivalence}. 
	Problems~(1,2) are ubiquitous in condensed matter physics
	since expected values of local observables and the density of states
	(i.e. the number of eigenstates contained in a given energy interval)
	provide important insights into properties of a quantum material.
	Problem~(3) can be viewed as a natural quantum generalization of the approximate counting task (the exact version of this problem introduced by Brown et al~\cite{brown2011computational}
	is known to be $\#\mathsf{P}$-hard).
	We also reproduce the result of Cubitt et al.~\cite{cubitt2018universal}  showing that
	the QPF problems for $2$-local and $k$-local Hamiltonians with any
	constant $k$ are equivalent to each other. Our proof is slightly more direct than the one of~\cite{cubitt2018universal} as we do not use perturbation theory gadgets; our technique appears to share some features of later works which used Kitaev's circuit-to-Hamiltonian mapping \cite{zhou2021strongly, kohler2020translationally, kohler2021general}. Note however that reducing $k$-local partition functions to 2-local partition functions is simpler than showing that such Hamiltonians are universal in the sense considered in Refs. \cite{cubitt2018universal, zhou2021strongly, kohler2020translationally, kohler2021general}.
	
	\paragraph{Improved exponential-time algorithms via Clifford compression}
	We conclude the paper by revisiting exponential-time classical and quantum
	algorithms solving the QPF problem for general local Hamiltonians. We show that a \textit{Clifford compression technique}~\cite{gosset2019compressed, huang2020predicting} based on the unitary 2-design property of the 
	Clifford group~\cite{cleve2015near,dankert2009exact} can be used to improve the runtime of state-of-the art classical algorithms for QPF. We also show how the Clifford compression technique can be used to almost halve the memory footprint of state-of-the-art quantum algorithms for QPF without compromising their runtime.
	
	First, we give a classical algorithm that solves the QPF problem for any $n$-qubit
	$k$-local Hamiltonian $H$ 
	with $\beta \|H\|\le b$ in time  
	\be
	\label{intro2}
	O\left(
	(b + \log{(1/\delta)})\ell 2^n\delta^{-1}  + n^2 2^n\delta^{-1} + \delta^{-4}
	\right),
	\ee
	where $\ell\le O(n^k)$ is the number of non-zero $k$-local terms that appear in $H$.
	For example, if $H$ is a geometrically local Hamiltonian on a regular lattice with bounded strength interactions
	then $\|H\|=O(n)$ and $\ell=O(n)$.
	For such Hamiltonians the runtime in Eq.~(\ref{intro2})
	becomes $O((1+\beta) n^2 2^n/\delta)$, ignoring
	terms logarithmic in $1/\delta$ and assuming $2^n\ge 1/\delta^3$. Our algorithm is based on the so-called
	Hutchinson’s estimator~\cite{hutchinson1989stochastic}  and its improved version
	known as Hutch++~\cite{meyer2021hutch++}.
	Namely, suppose $A$ is a positive semidefinite real matrix of size $d\times d$
	such that a matrix-vector product $A|\psi\ra$ can be computed
	in time $t(A)$ for any input vector $|\psi\ra\in \RR^d$.
	Hutch++~\cite{meyer2021hutch++} is a classical algorithm that approximates
	$\mathrm{Tr}(A)$ within a relative error $\delta$ 
	and has runtime $O(t(A)/\delta + d/\delta^2)$.
	This algorithm is asymptotically optimal in the query complexity model~\cite{meyer2021hutch++}. 
	We report a modified version 
	of Hutch++ which has runtime
	$O(t(A)/\delta+d\log^2{(d)}/\delta)$, assuming that $d\ge 1/\delta^3$.
	This improves the runtime scaling
	with  $\delta$ almost quadratically.
	Our construction applies the original Hutch++ algorithm to a certain compressed
	version of the matrix $A$ obtained by projecting $A$ onto  the logical subspace
	of a random stabilizer code with $O(\log{1/\delta})$ logical qubits.
	It shares some features of the methods used in Refs.~\cite{gosset2019compressed, huang2020predicting}  to obtain compressed classical descriptions of quantum states. 
	We emphasize that the improved Hutch++
	algorithm is purely classical, even though its analysis relies on quantum techniques.
	Specializing the improved Hutch++ to the case $A=\mathrm{Re}(e^{-\beta H})$ with a $k$-local Hamiltonian $H$
	gives a classical algorithm for the  QPF problem with runtime
	Eq.~(\ref{intro2}), see Section~\ref{sec:hutch} for details. 
	
	Secondly, we give a memory-efficient quantum algorithm for the QPF problem which requires only $O(\log(n)+\log(1/\delta))$ ancilla qubits. This is an improvement in space requirements over  an earlier algorithm due to Poulin and Wocjan~\cite{poulin09sampling} which needed an $\Omega(n)$-sized ancilla register. 
	The original algorithm constructs a quantum circuit to prepare a purification of a finite-temperature Gibbs state using Hamiltonian simulation and quantum phase estimation. There, the normalized partition function $\zee/2^n$ becomes the probability of post-selecting the purified Gibbs state and can be determined with quantum amplitude estimation~\cite{brassard2002amplitude}. The overhead of $\Omega(n)$ ancilla qubits in this approach is a consequence of purifying an $n$-qubit mixed state, and cannot be reduced using, e.g., modern methods for eigenvalue transformations~\cite{chowdhury2017quantum,gilyen2019singular,vanApeldoorn2020quantumsdpsolvers}.  
	
	We use the unitary 2-design construction of Section~\ref{sec:hutch} to circumvent the purification step. Instead, we approximate the normalized partition function as a transition probability in a randomly chosen ``compressed'' quantum circuit which acts on fewer qubits. The running-time of our algorithm is $\tilde O(\sqrt{2^n/\zee}\cdot \beta/\delta)$, comparable to the running-time obtained by improving the algorithm of Ref.~\cite{poulin09sampling} with newer techniques~\cite{chowdhury2017quantum,gilyen2019singular,vanApeldoorn2020quantumsdpsolvers}. 
	For technical reasons, the quantum algorithm assumes that each local term in the Hamiltonian is also positive semi-definite but note that this is not critical for the savings in ancilla.
	
	The rest of this paper is organized as follows. We describe our approximation algorithm for the free energy of dense $2$-local Hamiltonians in Section~\ref{sec:dense}. We establish polynomial time reductions between the QPF problem and three other quantum approximate counting problems in Section \ref{sec:QPF}. We review the Hutch++ algorithm and describe the Clifford compression technique that leads to its improved version in Section~\ref{sec:hutch}. Finally, in Section~\ref{sec:quantum} we show how Clifford compression can be incorporated into a quantum algorithm for the QPF problem leading to a reduced memory footprint.

	\section{Approximation algorithm for dense $2$-local Hamiltonians}
	\label{sec:dense}
	In this Section we describe our efficient classical approximation algorithm for the free energy of dense $2$-local Hamiltonians.
	
	Let us consider an $n$-qubit Hamiltonian
	\[
	H=\sum_{i,j=1}^{n} H_{ij}
	\] 
	where each local term $H_{ij}$ acts nontrivially only on qubits $i,j$. Let $\Gamma_{ij}=\|H\|_{ij}$ and
	\[
	\Gamma=\sum_{i,j} \Gamma_{ij}
	\]
	We assume without loss of generality that $\Gamma_{ii}=0$ for all $i$. We say that the Hamiltonian $H$ is $\Delta$-dense if 
	\[
	\Gamma \geq \Delta n^2 \max_{i,j} \Gamma_{ij}.
	\]

	We shall be interested in the free energy Eq.~\eqref{eq:fbeta}. For ease of notation and without loss of generality in this Section we consider the case $\beta=1$ and
	\begin{equation}
		F\equiv-\log(\mathrm{Tr}(e^{-H})).
		\label{eq:fone}
	\end{equation}
	Here and below we use the natural logarithm. The free energy at any other value $\beta\neq 1$ can be obtained by computing Eq.~\eqref{eq:fone} with the substitution $H\leftarrow \beta H$ in Eq.~\eqref{eq:fone} and then dividing by $\beta$, i.e. $F\leftarrow F/\beta$. Using this procedure to estimate the free energy Eq.~\eqref{eq:fbeta} for any value of $\beta$ we obtain the same runtime and approximation guarantee stated in Theorem \ref{thm:free}\footnote{To see this, note that $\Gamma_{ij}\leftarrow \beta \Gamma_{ij}$ and $\Gamma\leftarrow \beta \Gamma$ as well.}.
	
	Risteski \cite{risteski2016partition} described a nontrivial approximation algorithm for dense \textit{classical} $2$-local Hamiltonians. Here we establish the following
	quantum generalization of Risteski's result. 
	
	\begin{theorem}
		Suppose $H$ is a $\Delta$-dense $2$-local $n$-qubit Hamiltonian and let $\epsilon\in (0,1)$ be given. Suppose $n\geq \epsilon^{-3}\Delta^{-2}$. There is a classical algorithm with runtime $n^{O(\Delta^{-1}\epsilon^{-2})}$ which computes an estimate $\tilde{F}$ such that $|F-\tilde{F}|\leq \epsilon \Gamma$.
		\label{thm:free}
	\end{theorem}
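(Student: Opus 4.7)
The plan is to adapt Risteski's variational/convex-relaxation scheme for dense classical Ising models to the quantum setting, using a quantum rounding map in the spirit of Br\~{a}ndao--Harrow to simultaneously handle the entropy and energy terms. Starting from the Gibbs variational principle $F = \min_\rho\{\mathrm{Tr}(\rho H) + \mathrm{Tr}(\rho\log\rho)\}$, I would fix a parameter $m = \Theta(\Delta^{-1}\epsilon^{-2})$ and relax the optimization variable to a collection $\mu = \{\rho_S\}_{|S|\le m}$ of density matrices on subsets $S$ of at most $m$ qubits, subject to the linear local-consistency constraints $\mathrm{Tr}_{S\setminus T}(\rho_S) = \rho_T$ for $T\subset S$, together with $\rho_S \succeq 0$ and $\mathrm{Tr}(\rho_S)=1$. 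The energy term is linear in $\mu$ since $H$ is $2$-local, and for the entropy I would substitute a quantum \emph{pseudo-entropy} $\tilde S(\mu)$ built as an average of conditional von Neumann entropies along random chains of subsets, mirroring Risteski's classical construction. Because the von Neumann entropy is concave and all constraints are linear, the relaxation $\tilde F = \min_\mu\{\sum_{i,j}\mathrm{Tr}(\rho_{\{i,j\}}H_{ij}) - \tilde S(\mu)\}$ is a convex program in $n^{O(m)}$ real variables and can be solved to additive accuracy $\epsilon\Gamma$ in time $n^{O(\Delta^{-1}\epsilon^{-2})}$ using standard ellipsoid-method machinery.

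The lower bound $\tilde F \le F$ then follows by checking that the marginals of a Gibbs-optimal $\rho$ form a feasible $\mu$ whose pseudo-entropy is at most $-\mathrm{Tr}(\rho\log\rho)$, via a quantum chain-rule argument using strong subadditivity. The heart of the proof is the matching upper bound $F \le \tilde F + \epsilon\Gamma$, obtained by constructing a rounding map $\Phi$ converting the optimal $\mu$ into an explicit $n$-qubit state. Following propagation sampling, $\Phi$ samples a random ``anchor'' set $A$ of size $\Theta(m)$, draws a classical outcome $x$ by measuring $\rho_A$ in its eigenbasis, and outputs the product state $\bigotimes_{i\notin A}\rho_{i\mid A=x}$, where $\rho_{i\mid A=x}$ is the single-qubit conditional extracted from $\rho_{A\cup\{i\}}$; the final state is the classical mixture of these product states over $(A,x)$. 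Two properties of $\Phi$ must then be proved: (i) \emph{energy preservation}, $\mathrm{Tr}(\Phi(\mu)H) \le \sum_{i,j}\mathrm{Tr}(\rho_{\{i,j\}}H_{ij}) + \epsilon\Gamma$, which uses the $\Delta$-denseness assumption and a second-moment bound on typical 2-qubit marginals $\rho_{\{i,j\}\mid A=x}$ to control the error from replacing them with products of single-qubit conditionals (the Br\~{a}ndao--Harrow step recast for denseness), and (ii) \emph{entropy preservation}, $S(\Phi(\mu))\ge \tilde S(\mu)$, which follows because $\Phi(\mu)$ is a classical mixture of product states whose entropy additively decomposes into exactly the single-qubit conditional entropies appearing in $\tilde S$, together with concavity of $S$ under the mixture over $(A,x)$.

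Combining these inequalities yields $F \le f(\Phi(\mu)) \le \tilde F + \epsilon\Gamma$ and hence $|F - \tilde F| \le \epsilon\Gamma$. The main obstacle is choosing $\tilde S$, the anchor distribution used by $\Phi$, and the measurement basis so that (i) and (ii) hold compatibly: Risteski handles the classical analogue at the level of marginal distributions, but the quantum conditional von Neumann entropy lacks a clean outcome-by-outcome decomposition, so one must define $\tilde S$ carefully (plausibly as an expectation over random orderings in the style of Yoshida) and use strong subadditivity together with concavity of $S$ to match the energy-side and entropy-side bookkeeping without any loss that scales with $n$. The exponent $m=\Theta(\Delta^{-1}\epsilon^{-2})$ should then emerge naturally from this balance -- the denseness condition forces $m\gtrsim \Delta^{-1}\epsilon^{-2}$ for the energy second-moment concentration to yield error $\epsilon\Gamma$ -- which in turn fixes the runtime at $n^{O(m)}$ as claimed.
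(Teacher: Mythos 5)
Your high-level plan matches the paper's strategy closely: Gibbs variational principle, relaxation to $k$-local pseudodensity matrices with local consistency, a pseudo-entropy surrogate, convex optimization of the relaxed objective, a rounding map, the lower bound $\tilde F\le F$ via strong subadditivity, and the upper bound via energy- and entropy-preservation of the rounding. The pseudo-entropy you sketch (averaged over random chains in the style of Yoshida) differs slightly from the paper's $S_k(\sigma)=\min_{|C|<k}\{S(C)_\sigma+\sum_{j\notin C}S(j|C)_\sigma\}$, but this is innocuous — both choices satisfy the two properties actually used. (Incidentally, for the lower bound you wrote that the pseudo-entropy of $\rho$'s marginals should be \emph{at most} $-\mathrm{Tr}(\rho\log\rho)=S(\rho)$; you need it \emph{at least} $S(\rho)$, otherwise $f_k$ would be larger than $f$, not smaller. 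Presumably a sign slip.)

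The genuine gap is your choice of measurement basis in the rounding map. You propose to measure $\rho_A$ in its \emph{eigenbasis}. The paper instead measures every anchor qubit in a uniformly random single-qubit Pauli basis ($X$, $Y$, or $Z$). This is not a cosmetic choice — it is what makes the energy-preservation bound go through. The paper's energy argument converts the pseudodensity matrix and the rounded state into a single \emph{classical} $k$-local pseudodistribution (by applying the same fixed measurement channel $\Lambda^{\otimes |S|}$ to every reduced state $\sigma_S$), then invokes the classical Yoshida--Zhou telescoping/Pinsker lemma to bound the average one-norm distance, and finally translates this back to operator one-norms via the $\Lambda$-distortion bound $\|\Lambda^{\otimes\ell}(Q)\|_1\ge 6^{-\ell}\|Q\|_1$. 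With eigenbasis measurement this reduction breaks: the measurement basis depends on the anchor set $A$ (through the eigenbasis of $\sigma_A$), so the induced outcome distributions for different anchor sets are not marginals of one consistent classical pseudodistribution, and the telescoping identity $\mathbb{E}_{m,C,j}[H(i|C)-H(i|C\cup\{j\})]$ on which the whole bound rests requires the conditioning to be with respect to a single fixed dephasing rather than a state-dependent one. You flag this difficulty yourself in your last paragraph as ``the main obstacle'' without resolving it; the resolution in the paper is precisely the random product-Pauli measurement plus the distortion claim, which is the missing ingredient in your proposal.
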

	
	The proof of Theorem \ref{thm:free}, following the strategy from Ref. \cite{risteski2016partition}, begins with a variational characterization of the free energy which is given below. This variational characterization involves optimizing over the set of $n$-qubit density matrices, which is too costly for our purposes. We then describe a relaxation of the free energy which involves optimizing over a convex set of \textit{pseudodensity matrices} of comparatively much smaller size. The optimal value of the convex relaxation can be computed efficiently on a classical computer using known techniques \cite{bertsimas2004solving}. Finally, we show that the optimal value of the relaxation is close to the true free energy. This step uses a rounding which maps pseudodensity matrices to density matrices in a suitable way.

	Let us write $\mathcal{B}$ for the set of $n$-qubit density matrices. For any $\sigma\in \mathcal{B}$, define $S(\rho)=-\mathrm{Tr}(\rho\log(\rho))$ to be its von Neumann entropy. Our starting point is the following expression for the free energy.

	\begin{lemma}[Variational characterization of free energy]
		\begin{equation}
			F=\min_{\rho\in \mathcal{B}} f(\rho) \qquad \text{where} \qquad f(\rho)\equiv  \mathrm{Tr}(H\rho)-S(\rho).
			\label{eq:varchar}
		\end{equation}
	\end{lemma}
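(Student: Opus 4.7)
The plan is to prove this standard Gibbs variational principle by exhibiting the Gibbs state $\sigma = e^{-H}/\zee$ (with $\zee = \mathrm{Tr}(e^{-H})$) as the minimizer, then using Klein's inequality (non-negativity of quantum relative entropy) to establish that no other density matrix can achieve a smaller value of $f$. This is the cleanest path because $\log \sigma$ has an explicit form that makes both directions of the argument essentially a one-line computation.

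First I would verify $f(\sigma) = F$ directly. Since $\sigma = e^{-H}/\zee$ is positive definite with spectrum in $(0,1]$, we have $\log \sigma = -H - (\log \zee)\, I$, so
\[
S(\sigma) = -\mathrm{Tr}(\sigma \log \sigma) = \mathrm{Tr}(\sigma H) + \log \zee,
\]
which gives $f(\sigma) = \mathrm{Tr}(\sigma H) - S(\sigma) = -\log \zee = F$. In particular the minimum is achieved, so it suffices to show $f(\rho) \geq F$ for all $\rho \in \mathcal{B}$.

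For the lower bound, I would invoke Klein's inequality, which states that for any two density matrices $\rho, \sigma$ (with $\sigma$ positive definite)
\[
D(\rho \,\|\, \sigma) \equiv \mathrm{Tr}(\rho \log \rho) - \mathrm{Tr}(\rho \log \sigma) \geq 0,
\]
with equality iff $\rho = \sigma$. Taking $\sigma$ to be the Gibbs state as above and substituting $\log \sigma = -H - (\log \zee)\, I$ yields
\[
0 \leq -S(\rho) + \mathrm{Tr}(\rho H) + \log \zee = f(\rho) - F,
\]
so $f(\rho) \geq F$ for every $\rho \in \mathcal{B}$, completing the proof.

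There is essentially no obstacle here beyond invoking Klein's inequality, which can itself be proved in one line from the operator inequality $x \log x - x \log y \geq x - y$ applied spectrally (or from the joint convexity of relative entropy). Since the paper is aimed at a reader familiar with quantum information, I would simply cite Klein's inequality as a standard fact. The identity holds for arbitrary Hermitian $H$, with no assumption of locality or denseness, which is why the subsequent relaxation machinery will be built on top of this characterization.
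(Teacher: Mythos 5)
Your proposal is correct and follows essentially the same route as the paper: both invoke non-negativity of quantum relative entropy (Klein's inequality) with $\sigma = e^{-H}/\mathrm{Tr}(e^{-H})$ as the Gibbs state, and both observe that $f(\sigma) = F$ so the minimum is attained. You spell out the computation of $\log\sigma$ a bit more explicitly, but the argument is the same.
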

	\begin{proof}
		The lemma is a simple consequence of the fact \cite{nielsenchuang} that for any two density matrices $\rho, \sigma$, the quantum relative entropy is nonnegative, i.e., 
		\[
		S(\rho)\leq -\mathrm{Tr}(\rho\log(\sigma)).
		\]
		Applying this with $\sigma=e^{-H}/\mathrm{Tr}(e^{-H})$ we see that for any density matrix $\rho$ we have
		\[
		\mathrm{Tr}(H\rho)-S(\rho)\geq F=f(\sigma).
		\]
	\end{proof}
	
	We are interested in algorithms which run in time $\mathrm{poly}(n)$ so we cannot afford to even store a single $n$-qubit density matrix, which means that directly computing Eq.~\eqref{eq:varchar} is off the table. As noted above, we will consider a relaxation of this optimization problem which involves optimizing over a data structure with smaller size. In particular, a $k$-local \textit{pseudodensity matrix} $\rho$ is a collection of density matrices on subsets $S$ of qudits of size at most $k$, 
	\begin{equation}
		\rho=\{\rho_S: S\subseteq [n], |S|\leq k\}.
		\label{eq:ps1}
	\end{equation}
	Each  $\rho_S$ must be a valid $|S|$-qubit density matrix, i.e., 
	\begin{equation}
		\rho_S\in \mathbb{C}^{2^{|S|}\times 2^{|S|}} \quad \text{and} \quad \mathrm{Tr}(\rho_S)=1 \quad \text{ and } \quad  \rho_S\geq 0.
		\label{eq:ps2}
	\end{equation}
	Moreover, the collection of density matrices must be consistent in the sense that for all subsets $C,D$ of size at most $k$ we have
	\begin{equation}
		\mathrm{Tr}_{C\setminus D} (\rho_C) =\mathrm{Tr}_{D\setminus C}(\rho_D).
		\label{eq:ps3}
	\end{equation}
	We write $\mathcal{B}_k$ for the set of $k$-local, $n$-qubit pseudodensity matrices.

	Note that the function $f(\rho)$ is well defined for density matrices $\rho\in \mathcal{B}$ and we would like a substitute which takes pseudodensity matrix input. To this end we define a quantum pseudo-entropy function $S_k: \mathcal{B}_k\rightarrow \mathbb{R}$ as follows:
	\begin{equation}\label{eq:pseudoS}
		S_{k}(\sigma) \equiv \min_{C\subseteq [n], |C|<k} \{S(C)_{\sigma} + \sum_{j\notin C }S(j|C)_{\sigma}\},
	\end{equation}
	where $S(C)_\sigma\equiv S(\sigma_C)$ and
	\begin{equation}\label{eqn:quant_cond_entropy}
		S(j |C)_{\sigma} \equiv S(C \cup \{j\})_{\sigma} - S(C)_{\sigma}.
	\end{equation}
	Note that these entropies are all well defined since $|C|\leq k-1$ and $\sigma\in \mathcal{B}_k$.

	By analogy with Eq.~\eqref{eq:varchar} we define $f_k:\mathcal{B}_k\rightarrow \mathbb{R}$ as
	\[
	f_k(\sigma)=\sum_{i,j}\mathrm{Tr}(H_{ij}\sigma_{ij})- S_k(\sigma).
	\]
	
	Let us define 
	\begin{equation}
		f^{\star}_k\equiv \min_{\sigma\in \mathcal{B}_k} f_k(\sigma).
		\label{eq:fstar}
	\end{equation}

	\begin{lemma}
		The optimal value $f_k^{\star}$ can be computed on a classical computer with runtime $n^{O(k)}$.
		\label{lem:optim}
	\end{lemma}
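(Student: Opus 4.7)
The plan is to recognize Eq.~\eqref{eq:fstar} as a convex optimization problem over a spectrahedral feasible region of dimension $n^{O(k)}$ and then invoke a standard convex programming method such as the ellipsoid method \cite{grotschel2012geometric, bertsimas2004solving}. First I would parameterize $\sigma \in \mathcal{B}_k$ by the matrix entries of its blocks $\{\sigma_S\}_{|S|\le k}$, giving $\sum_{|S|\le k} 4^{|S|} = O(n^k 4^k)$ real variables. The trace conditions Eq.~\eqref{eq:ps2} and the consistency conditions Eq.~\eqref{eq:ps3} are \emph{linear} equations in these variables, and each block $\sigma_S$ must lie in the PSD cone, so $\mathcal{B}_k$ is convex. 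A separation oracle for $\mathcal{B}_k$ runs in $n^{O(k)}$ time: check the $n^{O(k)}$ linear constraints directly, and for each block diagonalize a $2^k\times 2^k$ matrix to either certify positivity or output a negative-eigenvector outer product as a separating hyperplane.

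Second I would establish that $f_k$ is convex on $\mathcal{B}_k$. The energy term $\sum_{i,j}\mathrm{Tr}(H_{ij}\sigma_{ij})$ is linear in $\sigma$. For the entropy term I would use two well-known facts: the von Neumann entropy $\tau \mapsto S(\tau)$ is concave, and the quantum conditional entropy $\tau_{AB}\mapsto S(A|B)_\tau = S(\tau_{AB})-S(\tau_B)$ is also concave in the joint state, a standard consequence of strong subadditivity (see e.g.\ the textbook \cite{nielsenchuang} invoked in the preceding lemma). Hence, for each fixed $C$ with $|C|<k$, the map $\sigma \mapsto S(C)_\sigma + \sum_{j\notin C} S(j|C)_\sigma$ is concave (it is a sum of concave functions of the blocks $\sigma_C$ and $\sigma_{C\cup\{j\}}$, which depend linearly on $\sigma$). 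Taking a pointwise minimum over $C$ preserves concavity, so $S_k$ is concave and therefore $-S_k$, and hence $f_k$, is convex.

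Third I would give a first-order oracle for $f_k$ that runs in time $n^{O(k)}$. Evaluating $f_k(\sigma)$ requires, for each of the $O(n^{k-1})$ candidate subsets $C$, summing $O(n)$ entropies of reduced states on at most $k$ qubits, each computable by diagonalizing a $2^k\times 2^k$ matrix. A subgradient of $-S_k$ at $\sigma$ is then obtained by selecting any argmin $C^\star$ and taking the supergradient at $\sigma$ of the corresponding concave function (valid since $-S_k$ is a pointwise maximum of convex functions). Feeding the separation oracle from the first step and this first-order oracle into the ellipsoid method yields an $\epsilon$-approximation to $f_k^\star$ in time polynomial in the dimension and $\log(1/\epsilon)$, i.e.\ $n^{O(k)}$ overall.

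The main obstacle I foresee is the non-smoothness of the objective at the boundary of the PSD cone, where the gradient of $\tau \mapsto -S(\tau)$ diverges as $\tau$ becomes singular, and the resulting unbounded Lipschitz constants are a concern for the polynomial running time of the ellipsoid method. I would handle this by optimizing over the mildly shrunk set $\{(1-\eta)\sigma + \eta\sigma^{\mathrm{mix}} : \sigma \in \mathcal{B}_k\}$, where $\sigma^{\mathrm{mix}}_S = I/2^{|S|}$, with $\eta$ chosen inverse-polynomially small. On this set every block has minimum eigenvalue at least $\eta/2^k$, so both $f_k$ and its subgradients are bounded by $\mathrm{poly}(n)$; the error introduced in $f_k^\star$ by the shrinking is $O(\eta\,\mathrm{poly}(n))$ by concavity of entropy, which can be driven below the target precision while keeping bit complexity polynomial.
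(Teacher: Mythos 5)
Your proposal is correct and follows essentially the same route as the paper: cast Eq.~\eqref{eq:fstar} as convex optimization over the convex set $\mathcal{B}_k$, argue that $f_k$ is convex because each function $S(C)_\sigma+\sum_{j\notin C}S(j|C)_\sigma$ is concave (using concavity of the von Neumann entropy and of the quantum conditional entropy in the joint state) so that $S_k$, as a pointwise minimum of concave functions, is concave, and then appeal to a general-purpose polynomial-time convex optimization method. Your concavity argument via concavity of conditional entropy is exactly the one the paper invokes, and your dimension count $n^{O(k)}$ matches.

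The differences are in the plumbing rather than the substance. You parameterize $\mathcal{B}_k$ by raw matrix entries of the blocks $\{\sigma_S\}$ and feed a separation oracle plus a subgradient oracle into the ellipsoid method; the paper instead parameterizes by Pauli coefficients (a linear change of variables) and uses the Bertsimas--Vempala random-walk algorithm (Theorem~\ref{thm:bv}), which requires only a \emph{membership} oracle for $\mathcal{B}_k$ and an \emph{evaluation} oracle for $f_k$---no subgradients. That trade is immaterial here since you can produce subgradients anyway, but it does mean the paper never needs to discuss non-smoothness of $-S_k$ explicitly. Your last paragraph, in which you regularize by blending with the maximally mixed pseudodensity matrix to bound the Lipschitz constant near the boundary of the PSD cone, is a legitimate and arguably more careful treatment of the finite-precision issue than the paper's one-line appeal to Theorem~4.3.13 of Gr\"otschel, Lov\'asz, Schrijver; your bound $O(\eta\,\mathrm{poly}(n))$ on the induced error is correct (the energy term changes by $O(\eta\Gamma)$ and each entropy is at most $n\log 2$, so the entropy term changes by $O(\eta n)$). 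In short: same strategy, a slightly heavier machine (ellipsoid with separation) where the paper uses a slightly lighter one (membership-only), and a more explicit handling of the boundary blow-up.
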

	Lemma \ref{lem:optim} follows from the fact that $\mathcal{B}_k$ is a convex set and for each $C\subseteq [n]$ of size $<k$ the function $S(C)_{\sigma} + \sum_{j\notin C }S(j|C)_{\sigma}$ is a concave function of the pseudodensity matrix (represented suitably as a real vector, see Section \ref{sec:optimize} for details). Therefore $S_k(\sigma)$, a minimum of concave functions, is concave, and the function $f_k(\sigma)$ is convex.  We may then compute the minimum $f_k^{\star}$ over $\sigma$ using efficient convex optimization algorithms such as the one from Ref.~\cite{bertsimas2004solving} which takes time $n^{O(k)}$ as we discuss in Section \ref{sec:optimize}. 
	
	It remains to show that $f_k^{\star}$ is a good approximation to the free energy $F$ for suitably chosen $k$. To show this we will use the following two lemmas. 
	
	\begin{lemma}
		Suppose $\rho\in \mathcal{B}$ is an $n$-qubit density matrix. Then $S_k(\rho)\geq S(\rho)$.
		\label{lem:relaxation}
	\end{lemma}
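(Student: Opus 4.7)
The plan is to show that for \emph{every} subset $C\subseteq[n]$ with $|C|<k$, the quantity $S(C)_\rho+\sum_{j\notin C}S(j|C)_\rho$ upper-bounds $S(\rho)$; since $S_k(\rho)$ is the minimum of such quantities over $C$, this yields $S_k(\rho)\geq S(\rho)$. The two ingredients I need are the chain rule for von Neumann entropy and strong subadditivity.

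Fix $C$ with $|C|<k$ and enumerate the complement as $[n]\setminus C=\{j_1,j_2,\ldots,j_m\}$ in an arbitrary order. Applying the chain rule (which follows immediately by telescoping the definition of conditional entropy), I can write
\[
S(\rho)=S(C)_\rho+\sum_{i=1}^{m}\bigl(S(C\cup\{j_1,\ldots,j_i\})_\rho - S(C\cup\{j_1,\ldots,j_{i-1}\})_\rho\bigr)=S(C)_\rho+\sum_{i=1}^{m}S(j_i\mid C,j_1,\ldots,j_{i-1})_\rho.
\]
By strong subadditivity, conditioning on additional systems can only decrease the conditional entropy, i.e.\ $S(A|BD)\leq S(A|B)$ for any tripartition. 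Applying this with $A=\{j_i\}$, $B=C$, $D=\{j_1,\ldots,j_{i-1}\}$ gives $S(j_i\mid C,j_1,\ldots,j_{i-1})_\rho\leq S(j_i\mid C)_\rho$ term by term, and summing yields
\[
S(\rho)\leq S(C)_\rho+\sum_{i=1}^{m}S(j_i\mid C)_\rho=S(C)_\rho+\sum_{j\notin C}S(j\mid C)_\rho.
\]
Since the right-hand side depends on $C$ but the left-hand side does not, I can take the minimum over all $C\subseteq[n]$ with $|C|<k$ on the right and conclude $S(\rho)\leq S_k(\rho)$.

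The only mildly subtle point is bookkeeping: I must make sure that when $\rho\in\mathcal{B}$ is a genuine $n$-qubit density matrix, the pseudo-entropy expression $S_k(\rho)$ is well-defined by taking marginals of $\rho$ on every subset of size at most $k$. This is immediate because the marginals automatically satisfy the consistency conditions \eqref{eq:ps3}. I do not expect any real obstacle here; the entire content of the lemma is the single application of strong subadditivity after the chain-rule expansion.
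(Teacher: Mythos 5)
Your proof is correct and uses essentially the same argument as the paper: a chain-rule decomposition of $S(\rho)$ anchored at $C$, followed by a term-by-term application of strong subadditivity ($S(A|BD)\le S(A|B)$) to drop the extra conditioning. The only cosmetic difference is that you prove the inequality for every $C$ and then minimize, whereas the paper fixes $C$ to be the minimizer from the outset; these are logically identical.
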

	\begin{proof}
		The definition of the quantum conditional entropy for any density matrix $\rho$ gives
		\begin{equation}
			S(\rho) = \sum_{j=1}^n S(j|{1,\dots,j-1})_\rho \leq \sum_{j=1}^n S(\rho_j)\;,
		\end{equation}
		the second step following from the non-negativity of the quantum mutual information. 
		
		Now let $C\subseteq [n]$ be such that $S_k(\rho)=S(C)_\rho+\sum_{j\notin C}S(j|C)_\rho$. For ease of notation let us suppose that $C=\{1,2,\ldots, |C|\}$. Similarly to the above argument we may write 
		\begin{align}
			S(\rho) &= S(C)_\rho + \sum_{j=|C|+1}^{n} S(j|{C\cup\{j-1,j-2,\dots ,|C|+1})_\rho \\
			&\le S(C)_{\rho} + \sum_{j=|C|+1}^{n} S(j|C)_\rho\\
			&=S_k(\rho).
		\end{align}
		the second step now following from the strong subadditivity of the quantum mutual information~\cite{watrous_2018}. 
	\end{proof}
	\begin{lemma}
		There exists a \textbf{rounding} which maps any pseudodensity matrix $\sigma\in \mathcal{B}_{k}$ to a density matrix $\rho\in \mathcal{B}$ such that
		\begin{equation}
			S(\rho)\geq S_k(\sigma).
			\label{eq:entrop}
		\end{equation}
		Furthermore, if $n\geq \epsilon^{-3}\Delta^{-2}$ then we may choose $k=O(\Delta^{-1} \epsilon^{-2})$ such that
		\begin{equation}
			\left|\sum_{i,j} \mathrm{Tr}(H_{ij}\sigma_{ij})-\mathrm{Tr}(\rho H)\right|\leq \epsilon \Gamma.
			\label{eq:energ}
		\end{equation}
		\label{lem:round}
	\end{lemma}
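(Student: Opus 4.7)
I would construct the rounding map following the propagation sampling strategy of Refs.~\cite{barak2011rounding,brandao2013product,risteski2016partition}. Given $\sigma\in\mathcal{B}_k$, pick a subset $C\subseteq[n]$ with $|C|\le k-2$ (so all marginals $\sigma_C$, $\sigma_{C\cup\{j\}}$, and $\sigma_{C\cup\{i,j\}}$ supplied by $\sigma$ are valid density matrices). Measure $\sigma_C$ in the computational basis, obtaining outcome $x$ with probability $p(x)=\langle x|\sigma_C|x\rangle$, and form the conditional single-qubit state $\sigma_{j|x}=p(x)^{-1}\langle x|_C\sigma_{C\cup\{j\}}|x\rangle_C$ for each $j\notin C$. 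Define the output
\[
\rho=\sum_x p(x)\,|x\rangle\langle x|_C\otimes\bigotimes_{j\notin C}\sigma_{j|x},
\]
which is manifestly a valid $n$-qubit density matrix.

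For the entropy bound Eq.~\eqref{eq:entrop}, the classical-quantum structure of $\rho$ gives $S(\rho)=H(X)+\sum_x p(x)\sum_{j\notin C}S(\sigma_{j|x})$. Computational basis measurement has Shannon entropy $H(X)\ge S(\sigma_C)=S(C)_\sigma$, and for each $j\notin C$ the monotonicity of mutual information under a local measurement on $C$ gives $\sum_x p(x)S(\sigma_{j|x})\ge S(j|C)_\sigma$. Summing over $j$ and using that $S_k(\sigma)$ is a minimum over conditioning sets of size less than $k$ yields $S(\rho)\ge S(C)_\sigma+\sum_{j\notin C}S(j|C)_\sigma\ge S_k(\sigma)$.

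For the energy bound Eq.~\eqref{eq:energ}, I split the sum over pairs into a boundary contribution with $\{i,j\}\cap C\neq\emptyset$ and a bulk contribution with $\{i,j\}\cap C=\emptyset$. There are at most $O(kn)$ boundary pairs; each contributes at most $2\max_{i,j}\|H_{ij}\|\le 2\Gamma/(\Delta n^2)$ by $\Delta$-denseness, for a total boundary error of $O(k\Gamma/(\Delta n))$. For a bulk pair, computational-basis dephasing on $C$ preserves the $\{i,j\}$-marginal, so $\sum_x p(x)\sigma_{ij|x}=\sigma_{ij}$, and thus $\rho_{ij}-\sigma_{ij}=\sum_x p(x)(\sigma_{i|x}\otimes\sigma_{j|x}-\sigma_{ij|x})$. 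Quantum Pinsker applied per $x$, Jensen's inequality, and the data-processing inequality $\sum_x p(x)I(i{:}j)_{\sigma_{ij|x}}\le I(i{:}j|C)_\sigma$ together yield $\|\rho_{ij}-\sigma_{ij}\|_1\le\sqrt{2\,I(i{:}j|C)_\sigma}$.

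The main technical obstacle is choosing $C$ with $|C|\le k-2$ such that $\mathbb{E}_{i,j\notin C}I(i{:}j|C)_\sigma=O(1/k)$. This is the pseudodensity-matrix analog of Brand\~ao-Harrow's conditional-mutual-information decay lemma, and I would establish it by a chain rule argument on a random nested sequence of conditioning sets, using only $k$-qubit marginals of $\sigma$. With such a $C$, Cauchy-Schwarz combined with $\max_{i,j}\|H_{ij}\|\le\Gamma/(\Delta n^2)$ bounds the bulk error by $O(\Gamma/\sqrt{k\Delta})$. Choosing $k=\Theta(\Delta^{-1}\epsilon^{-2})$ makes the bulk error at most $\epsilon\Gamma/2$, and the boundary error is at most $\epsilon\Gamma/2$ provided $n\ge\Omega(k/(\Delta\epsilon))=\Omega(\epsilon^{-3}\Delta^{-2})$, matching the hypothesis.
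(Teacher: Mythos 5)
Your entropy argument is sound in structure, and the boundary/bulk split with the boundary estimate $O(k\Gamma/(\Delta n))$ is correct, but there is a genuine gap in the bulk energy bound: the step
\[
\sum_x p(x)\,I(i{:}j)_{\sigma_{ij|x}}\ \le\ I(i{:}j|C)_\sigma
\]
is not a data-processing inequality, and it is false in general. Quantum conditional mutual information is \emph{not} monotone under a channel applied to the conditioning register. For example, take a single qubit $C$ and set
\[
\sigma_{ijC}=\tfrac12\Bigl(|{+}\rangle\langle{+}|_C\otimes|00\rangle\langle00|_{ij}+|{-}\rangle\langle{-}|_C\otimes|11\rangle\langle11|_{ij}\Bigr).
\]
Then $I(i{:}j|C)_\sigma=0$, yet measuring $C$ in the computational basis gives $\sigma_{ij|0}=\sigma_{ij|1}=\tfrac12(|00\rangle\langle00|+|11\rangle\langle11|)$, so $\sum_x p(x)I(i{:}j)_{\sigma_{ij|x}}=\log 2>0$. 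So you cannot bound the post-measurement conditional mutual information by the quantum one this way, and the Pinsker chain you set up has no valid right-hand side to telescope.

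The paper sidesteps this by never comparing quantum and post-measurement CMI. It measures each qubit of $C$ in a \emph{random} Pauli basis (an informationally complete POVM), which produces a genuinely classical pseudodistribution $q$; it then proves the needed 1-norm decay bound entirely at the classical level (the telescoping in Lemma~\ref{lem:yz}, via classical Pinsker applied to $q$), and finally transfers the classical 1-norm bound to the quantum 1-norm $\|\sigma_{ij}-\eta_{ij}\|_1$ via the distortion bound $\|\Lambda^{\otimes\ell}(Q)\|_1\ge 6^{-\ell}\|Q\|_1$ (Claim~\ref{claim:dist}). Your use of the computational basis cannot support such a transfer — computational-basis dephasing annihilates off-diagonal Pauli components, so there is no lower bound of this form for it. Relatedly, the paper's rounding averages over \emph{all} sizes $m<k$ and \emph{all} subsets $C$ of size $m$; this averaging is what makes the classical telescoping produce the $\sqrt{\log(d)/(k\Delta)}$ bound. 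Your ``choose one good $C$'' phrasing would need an explicit existence argument, which in the paper falls out of the averaged bound. You also explicitly defer the pseudodensity-matrix CMI decay lemma (``I would establish it by a chain rule argument\ldots''); in the paper that is precisely Lemma~\ref{lem:yz}, and it is proved on the classical side, not the quantum side — which is the other reason the paper's architecture differs from your sketch.
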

	The proof of Lemma \ref{lem:round} is provided in Section \ref{sec:round}
	\begin{proof}[\textbf{Proof of Theorem \ref{thm:free}}]
		Let us choose $k=O(\Delta^{-1}\epsilon^{-2})$ as prescribed in Lemma \ref{lem:round} so that Eqs.~(\ref{eq:entrop}, \ref{eq:energ}) hold. With this choice our estimate of the free energy is $f^{\star}_k$. Lemma \ref{lem:optim} states that we can compute $f^{\star}_k$ using a runtime $n^{O(\Delta^{-1}\epsilon^{-2})}$. Below we show that 
		\begin{equation}
			0\leq F-f^{\star}_k\leq \epsilon \Gamma.
			\label{eq:uplower}
		\end{equation}
		To this end, let $\sigma\in \mathcal{B}_k$ be such that $f^{\star}_k=f_k(\sigma)$, and let $\rho$ be the image of $\sigma$ under the rounding map described in Lemma \ref{lem:round}. We have
		\[
		F-f^{\star}_k=F-f_k(\sigma)\leq f(\rho)-f_k(\sigma)
		\label{eq:ffstar1}
		\]
		where we used the fact that $F\leq f(\rho)$ for all $\rho\in \mathcal{B}$. Substituting the definitions in the above gives
		\begin{align}
			F-f^{\star}_k&\leq \mathrm{Tr}(H\rho)-\sum_{i,j} \mathrm{Tr}(H_{ij} \sigma_{ij}) +S_k(\sigma)-S(\rho)\\
			& \leq \mathrm{Tr}(H\rho)-\sum_{i,j} \mathrm{Tr}(H_{ij} \sigma_{ij})\\
			&\leq \epsilon \Gamma
		\end{align}
		where in the second line we used Eq.~\eqref{eq:entrop} and in the last line we used Eq.~\eqref{eq:energ}. This establishes the upper bound in Eq.~\eqref{eq:uplower}. For the lower bound we use Lemma \ref{lem:relaxation} from which we directly see that $f(\rho)\geq f_k(\rho)$ for any $\rho\in \mathcal{B}$. Therefore
		\[
		F=\min_{\rho\in \mathcal{B}} f(\rho)\geq \min_{\rho\in \mathcal{B}} f_k(\rho)\geq \min_{\sigma\in \mathcal{B}_k} f_k(\sigma)=f^{\star}_k,
		\]
		which completes the proof. In the above we used the fact that the marginals on $\leq k$ qubits of any n-qubit density matrix $\rho$ define a valid pseudodensity matrix $\sigma \in \mathcal{B}_k$, and therefore the minimum of $f_k$ over the latter set can only be smaller than the minimum over $\mathcal{B}$.
		
	\end{proof}
	
	\subsection{The rounding map \label{sec:round}}
	In this Section we define the rounding map from Lemma \ref{lem:round} and prove the lemma. Let $P_1=X,P_2=Y,P_3=Z$ be the single-qubit Pauli operators. For $b=\{1,2,3\}$ let $|\psi_{b,0}\rangle, |\psi_{b,1}\rangle$ denote eigenstates of $P_b$ with eigenvalues $1$ and $-1$ respectively. For $x=(b,r)$ with $b=b_1b_2\ldots b_m \in \{1,2,3\}^m$ and $r=r_1r_2\ldots r_m \in \{0,1\}^m$ we write
	\[
	|\psi_{x}\rangle=|\psi_{b_1,r_1}\rangle\otimes |\psi_{b_2,r_2}\rangle\otimes\ldots \otimes |\psi_{b_m,r_m}\rangle. \qquad x=(b,r)\in \{1,2,3\}^m\times \{0,1\}^m.
	\]
	Let us write $\mathcal{I}(m)=\{1,2,3\}^m\times \{0,1\}^m$ for the index set.
	
	To round a pseudodensity matrix $\sigma\in \mathcal{B}_k$ to a density matrix $\rho\in \mathcal{B}$ , we use a modification of the procedure used by Br\~{a}ndao and Harrow in Ref.~\cite{brandao2013product}. In particular, 
	\begin{align}\label{eq:roundrho}
		\rho = \mathbb{E}_{0\leq m< k}\Ex_{C\subseteq [n],|C|=m} \sum_{x\in \mathcal{I}(m)} p_C(x) \ketbra{\psi_{x}}_C\otimes \prod_{i\notin C} \rho_i^{(x)}\;.
	\end{align}
	Here each expectation is taken with respect to the uniform distribution, i.e., $\mathbb{E}_{0\leq m<k}(\cdot)=\frac{1}{k}\sum_{m=0}^{k} (\cdot)$. Moreover, $p_C$ is a probability distribution and $\rho_i^{(x)}$ is a single-qubit state defined by
	\begin{equation}
		p_C(x)=3^{-m}\langle \psi_x|\sigma_C|\psi_x\rangle \qquad \qquad \rho_i^{(x)} = \frac{{\mathrm Tr}_C((\ketbra{\psi_{x}}_C\otimes I)\sigma_{C\cup\{i\}})}{3^m p_C(x)}.
		\label{eq:pc}
	\end{equation}
	Note that since $\sigma\in\mathcal{B}_k$ and $|C|=m<k$, the density matrix $\sigma_{C\cup\{i\}}$ appearing above is well defined.
	
	\paragraph{Proof of Eq.~\eqref{eq:entrop}}
	For a fixed $m\in \{0,\ldots, k-1\}$ and $C\subseteq [n]$ of size $|C|=m$ let
	\begin{equation}
		\eta=\sum_{x\in \mathcal{I}(m)} p_C(x) \ketbra{\psi_{x}}_C\otimes \prod_{i\notin C} \rho_i^{(x)}
		\label{eq:rhotil}
	\end{equation}
	be the state associated with $m,C$ in Eq.~\eqref{eq:roundrho}. For ease of notation we suppress the dependence of $\eta$ on $m,C$. Below we show that 
	\begin{equation}
		S(\eta)\geq S_k(\sigma).
		\label{eq:etasigma}
	\end{equation}
	Eq.~\eqref{eq:entrop} then follows using Eq.~\eqref{eq:roundrho} and concavity of the Von Neumann entropy.
	
	\begin{lemma}
		\label{lemma:in}
		We have 
		\begin{equation}
			S(C)_{\eta}\geq S(C)_\sigma.
			\label{eq:sc}
		\end{equation}
		Ordering the qubits so that $C=\{1,2,\ldots, m\}$, we have, for each $i\geq m+1$,
		\begin{equation}
			S(i|\{1,2,\ldots, i-1\})_{\eta}\geq S(i|C)_{\sigma}.
			\label{eq:sicineq}
		\end{equation}
	\end{lemma}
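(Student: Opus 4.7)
My plan is to introduce an auxiliary classical register $X$ that records the random index $x \in \mathcal{I}(m)$ appearing in the definition of $\eta$, and reduce both inequalities to standard entropy facts via strong subadditivity and the data-processing inequality. Concretely, define the cq extension
\[
\tau \equiv \sum_{x \in \mathcal{I}(m)} p_C(x)\, |x\rangle\langle x|_X \otimes |\psi_x\rangle\langle\psi_x|_C \otimes \prod_{j \notin C} \rho_j^{(x)},
\]
so that $\eta = \mathrm{Tr}_X(\tau)$ and, crucially, $\tau$ has a product structure on the physical qubits conditional on $X=x$.

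For Eq.~\eqref{eq:sc}, the key observation is that the $C$-marginal of $\eta$ satisfies
\[
\eta_C = \sum_x p_C(x) |\psi_x\rangle\langle\psi_x| = \frac{1}{3^m} \sum_{b \in \{1,2,3\}^m} \mathcal{P}_b(\sigma_C),
\]
where $\mathcal{P}_b$ denotes the pinching (dephasing) channel in the product Pauli eigenbasis of $P_{b_1} \otimes \cdots \otimes P_{b_m}$. Each $\mathcal{P}_b$ is unital, so their uniform average $\Phi$ is also unital with $\eta_C = \Phi(\sigma_C)$; since unital channels do not decrease the von Neumann entropy, we conclude $S(C)_\eta \geq S(C)_\sigma$.

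For Eq.~\eqref{eq:sicineq}, I would first apply strong subadditivity on $\tau$ to enlarge the conditioning register by $X$:
\[
S(i|\{1,\ldots,i-1\})_\eta = S(i|\{1,\ldots,i-1\})_\tau \geq S(i|\{1,\ldots,i-1\}, X)_\tau.
\]
Conditioned on $X = x$, the state on $C$ is the pure product $|\psi_x\rangle$ and the state on $\{m+1,\ldots,n\}$ factors as $\bigotimes_j \rho_j^{(x)}$, so qubit $i$ decouples from $\{1,\ldots,i-1\}$ entirely and the right-hand side collapses to $\sum_x p_C(x)\, S(\rho_i^{(x)})$.

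It remains to lower bound this classical average of single-qubit entropies by $S(i|C)_\sigma$. For each fixed $b \in \{1,2,3\}^m$, projectively measuring $C$ in the basis $\{|\psi_{b,r}\rangle\}_r$ and recording the outcome classically is a quantum channel $\mathcal{M}_b$ acting on $C$; applied to $\sigma_{C \cup \{i\}}$, it yields the cq state $\sum_r q_{b,r}\, |r\rangle\langle r|_C \otimes \rho_i^{(b,r)}$ with $q_{b,r} = 3^m p_C(b,r)$, whose conditional entropy equals $\sum_r q_{b,r}\, S(\rho_i^{(b,r)})$. The data-processing inequality for conditional entropy under channels on the conditioning system then gives $\sum_r q_{b,r}\, S(\rho_i^{(b,r)}) \geq S(i|C)_\sigma$ for every $b$, and averaging over $b$ yields the desired bound. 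The only real obstacle is bookkeeping---making sure the SSA direction, the $X$-decoupling step, and the data-processing direction all go the right way for \emph{conditional} rather than unconditional entropies.
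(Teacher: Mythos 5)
Your proof is correct, and it takes a route that is recognizably different from, and arguably cleaner than, the one in the paper.

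For Eq.~\eqref{eq:sc} the paper applies concavity of the von~Neumann entropy to $\eta_C = 3^{-m}\sum_b \mathcal{E}_b(\sigma_C)$ and then invokes, for each fixed $b$, the fact that a projective measurement in a complete orthonormal basis cannot decrease entropy; your observation that the \emph{averaged} map is unital and that unital channels are entropy-nondecreasing is equivalent but gets there in a single step.

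For Eq.~\eqref{eq:sicineq} the divergence is more substantial. The paper never introduces a side register: it writes the $\{1,\ldots,i\}$-marginal of $\eta$ as a mixture $\mathbb{E}_b[\eta_b]$, invokes concavity of conditional entropy, and then proves a separate claim (Claim~\ref{claim:markov}) that each $\eta_b$ saturates strong subadditivity, i.e., $S(i\mid 1,\ldots,i-1)_{\eta_b} = S(i\mid C)_{\eta_b}$, before finally bounding $S(i\mid C)_{\eta_b} \geq S(i\mid C)_\sigma$ via data processing for mutual information. You instead adjoin the classical register $X$ recording the full index $x=(b,r)$, apply strong subadditivity as an \emph{inequality} (not a saturation claim) to drop to $S(i\mid\{1,\ldots,i-1\},X)_\tau$, and then use the fact that conditioned on $X=x$ qubit $i$ is fully decoupled, collapsing that quantity to $\sum_x p_C(x)S(\rho_i^{(x)})$. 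This sidesteps both the concavity-of-conditional-entropy step and the Markov-saturation claim, replacing them by a single SSA application plus a trivial decoupling observation. Your final step — lower-bounding $\sum_x p_C(x)S(\rho_i^{(x)})$ by $S(i\mid C)_\sigma$ via data processing for conditional entropy under a channel on the conditioning system, averaged over $b$ — is the same core inequality that the paper uses (phrased there as monotonicity of mutual information together with the observation $S(i)_{\eta_b}=S(i)_\sigma$). All the inequalities in your chain go the right way, the cq bookkeeping is correct, and the argument is, if anything, easier to audit than the paper's because it avoids the explicit Markov-chain structure computation.
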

	\begin{proof}
		We first show Eq.~\eqref{eq:sicineq}. Using Eq.~\eqref{eq:rhotil} and writing $x=(b,r)$, we see that for $i\geq m+1$  the reduced density matrix of $\eta$ on qubits $\{1,\ldots, i\}$ is
		\begin{equation}
			\sum_{(b,r)\in \mathcal{I}(m)} p_C(b,r)|\psi_{b,r}\rangle\langle \psi_{b,r}|\otimes  \prod_{m< j\leq i}\rho_j^{(b,r)}=\mathbb{E}_{b} \left[\eta_b\right]
			\label{eq:qop}
		\end{equation}
		where 
		\begin{equation}
			\label{eq:kappab}
			\eta_b=3^m\sum_{r} p_C(b,r)|\psi_{b,r}\rangle\langle \psi_{b,r}|\otimes  \prod_{k\leq j\leq i}\rho_j^{(b,r)}.
		\end{equation}
		Using Eq.~\eqref{eq:qop} and concavity of the conditional entropy (cf.Ref.\cite{nielsenchuang} Corollary 11.13) we have
		\begin{equation}
			S(i|\{1,2,\ldots, i-1\})_\eta\geq \mathbb{E}_b\left[S(i|\{1,2,\ldots, i-1\})_{\eta_b}\right].
			\label{eq:seb}
		\end{equation}
		For any $b$, the state $\eta_b$ has the special feature that 
		\begin{claim} For each $i\geq m+1$ we have
			\label{claim:markov}
			\[
			S(i|\{1,2,\ldots, i-1\})_{\eta_b}=S(i|C)_{\eta_b}.
			\]
		\end{claim}
		\begin{proof}
			The claim states that $\eta_b$ achieves equality in the strong subadditivity inequality $S(Z|X,Y)\leq S(Z|X)$ with respect to subsystems $Z=\{i\}, X=C, Y=\{m+1,\ldots, i-1\}$. To see this note from Eq.~\eqref{eq:kappab} that the reduced density matrix of $\eta_b$ on subsystems $XYZ$ has the form
			\begin{equation}
				(\eta_b)_{XYZ}=\sum_{r} q_r c^r_X\otimes d^r_{Y}\otimes e^{r}_Z
				\label{eq:formofa}
			\end{equation}
			where $\{q_r\}$ is a probability distribution, and for each $r$,  $c^r,d^r,e^r$ are quantum states of subsystems $X,Y$ and $Z$ respectively. Moreover, $c^r$ and $c^{r'}$ have support on orthogonal subspaces whenever $r\neq r'$. Using the form Eq.~\eqref{eq:formofa} of our density matrix and standard facts about Von Neumann entropy (e.g., Eq. 11.57 of Ref.~\cite{nielsenchuang}) we can explicitly compute
			\[
			S(Z|XY)_{\eta_b}=\sum_{r} q_r S(e^r_Z)=S(Z|X)_{\eta_b}.
			\]
		\end{proof}
		Using Claim \ref{claim:markov} in Eq.~\eqref{eq:seb} we get
		\begin{equation}
			S(i|\{1,2,\ldots, i-1\})_{\eta_b}\geq \mathbb{E}_b\left[S(i|C)_{\eta_b}\right].
			\label{eq:sib}
		\end{equation}
		To complete the proof of Eq.~\eqref{eq:sicineq} we now show that $S(i|C)_{\eta_b}\geq S(i|C)_\sigma$ for all $b$. This follows from the fact that the reduced density matrix of $\eta_b$ on $C\cup\{i\}$ is
		\begin{align}
			(\eta_b)_{C\cup\{i\}}&= \sum_{r} 3^m p_C(b,r)|\psi_{b,r}\rangle\langle \psi_{b,r}|\otimes  \rho_i^{(b,r)}\\
			&=\sum_{r} |\psi_{b,r}\rangle\langle \psi_{b,r}| \sigma_{C\cup\{i\}}|\psi_{b,r}\rangle\langle \psi_{b,r}|\label{eq:measb}\\
			&=\mathcal{E}_b\otimes I_i(\sigma_{C\cup\{i\}})
		\end{align}
		where $\mathcal{E}_b$ is the CPTP map that describes measurement in the basis described by $b$. Since CPTP maps do not increase mutual information (cf. Thm 11.15 of Ref.~\cite{nielsenchuang}), Eq.~\eqref{eq:qop} implies
		\begin{equation}
			S(i)_{\sigma}-S(i|C)_{\sigma}\geq S(i)_{\eta_b}-S(i|C)_{\eta_b}.
			\label{eq:mi}
		\end{equation}
		Looking at Eq.~\eqref{eq:measb} we see that the reduced density matrix of  $\eta_b$ on qubit $i$ is equal to $\sigma_i$. Therefore $S(i)_{\sigma}=S(i)_{\eta_b}$ and Eq.~\eqref{eq:mi} implies
		\begin{equation}
			S(i|C)_{\eta_b}\geq S(i|C)_{\sigma}.
			\label{eq:sic}
		\end{equation}
		Plugging Eq.~\eqref{eq:sic} into Eq.~\eqref{eq:sib} gives  Eq.~\eqref{eq:sicineq}.
		
		To show Eq.~\eqref{eq:sc} note that the reduced density matrix of $\eta$ on subsystem $C$ is given by
		\begin{equation}
			\eta_C=3^{-m}\sum_{b} \mathcal{E}_b (\sigma_C)
			\label{eq:eb}
		\end{equation}
		where as before, $\mathcal{E}_b$ is the CPTP map corresponding to a measurement in the orthonormal basis defined by $b$. Using Eq.~\eqref{eq:eb} and fact that Von Neumann entropy is concave, we get
		\[
		S(C)_\eta \geq 3^{-m}\sum_{b} S(\mathcal{E}_b (\sigma_C)).
		\]
		Now we have $S(\mathcal{E}_b (\sigma_C))\geq S(\sigma_C)$ for each $b$ since $\mathcal{E}_b$ descibes measurement in a complete orthonormal basis (cf. Theorem 11.9 of Ref.~\cite{nielsenchuang}). Plugging into the above we arrive at $S(C)_{\eta}\geq S_\sigma(C)$.
	\end{proof}
	Finally, we establish Eq.~\eqref{eq:etasigma}, completing the proof of Eq.~\eqref{eq:entrop}.
	\begin{corol}
		\begin{equation}
			S(\eta)\geq S(C)_\sigma+\sum_{i\notin C} S (i|C)_{\sigma}\geq S_k(\sigma)
			\label{eq:skrho}
		\end{equation}
		
	\end{corol}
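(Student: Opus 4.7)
The plan is to derive the corollary as a direct consequence of Lemma \ref{lemma:in} combined with the chain rule for von Neumann entropy and the definition of $S_k(\sigma)$.

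First, I would order the qubits so that $C = \{1, 2, \ldots, m\}$ (as in the statement of Lemma \ref{lemma:in}) and apply the chain rule to $\eta$ with respect to this ordering, namely
\[
S(\eta) \;=\; S(C)_\eta \;+\; \sum_{i=m+1}^{n} S(i \,|\, \{1, 2, \ldots, i-1\})_\eta.
\]
This chain rule holds for any $n$-qubit density matrix by iterating the definition of conditional entropy, so it applies to $\eta$ in particular.

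Next I would substitute the two inequalities from Lemma \ref{lemma:in} term by term. The lemma gives $S(C)_\eta \geq S(C)_\sigma$ for the leading term, and $S(i \,|\, \{1, \ldots, i-1\})_\eta \geq S(i \,|\, C)_\sigma$ for each $i \geq m+1$. Summing these bounds produces
\[
S(\eta) \;\geq\; S(C)_\sigma \;+\; \sum_{i \notin C} S(i \,|\, C)_\sigma,
\]
which is the first inequality in the corollary.

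For the second inequality, I would simply observe that because $|C| = m < k$, the subset $C$ is a feasible point in the minimization defining $S_k(\sigma)$ in Eq.~\eqref{eq:pseudoS}, so the quantity $S(C)_\sigma + \sum_{i \notin C} S(i \,|\, C)_\sigma$ is at least the minimum, namely $S_k(\sigma)$. This completes the chain. I do not foresee any genuine obstacle here: Lemma \ref{lemma:in} already packages all the work, and the only conceptual point is to verify that the chain rule ordering one chooses for $\eta$ matches the one in which $C$ appears as a prefix, so that Lemma \ref{lemma:in}'s bounds can be plugged in without further manipulation.
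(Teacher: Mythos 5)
Your proposal is correct and follows essentially the same approach as the paper: order the qubits so that $C$ is a prefix, apply the entropy chain rule to $\eta$, substitute Eqs.~\eqref{eq:sc} and~\eqref{eq:sicineq} from Lemma~\ref{lemma:in} term by term, and then use that $|C|=m<k$ makes $C$ feasible in the minimization defining $S_k(\sigma)$.
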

	\begin{proof}
		Let us order the qubits so that $C=\{1,\ldots, m\}$. We have
		\[
		S(\eta)=S(C)_\eta+\sum_{j=m+1}^{n} S(j|\{1,2,\ldots, j-1\})_{\eta}\\
		\]
		Substituting Eqs.(\ref{eq:sc}, \ref{eq:sicineq}) into the above completes the proof of the first (leftmost) inequality in Eq.~\eqref{eq:skrho}. The second inequality follows from the definition Eq.~\eqref{eq:pseudoS} and the fact that $|C|=m<k$.
		
	\end{proof}
	\paragraph{Proof of Eq.~\eqref{eq:energ}}
	To establish the bound Eq.~\eqref{eq:energ} on the energy difference before and after rounding, we follow a strategy from Ref. \cite{brandao2013product}. In particular, we relate the pseudodensity matrix $\sigma$ and its rounding $\rho$ to certain classical \textit{pseudodistributions} (a classical analogue of pseudodensity matrices, defined below), and the difference in their energies to a certain expression involving the 1-norm distances between these pseudodistributions.  To bound this expression involving 1-norm distances, we use the following lemma which is a direct consequence of Corollary 1 from Ref. \cite{yoshida2014approximation}. For completeness we include a proof below.
	
	A $t$-local pseudodistribution $p$ over $n$ $d$-ary variables is a collection $\{p_S\}$ of probability distributions corresponding to all subsets $S\subseteq [n]$ of size $|S|\leq t$. Here $p_S:\{1,2,\ldots, d\}^{|S|}\rightarrow \mathbb{R}$ is a  probability distribution over variables which each can take $d$ values. The distributions $p_S$ and $p_T$ are required to have consistent marginals on variables in $S\cap T$. 
	
	\begin{lemma}[\cite{yoshida2014approximation}]
		Suppose $\omega:[n]\times [n]\rightarrow \mathbb{R}$ is a probability distribution such that $\omega(i,i)=0$ for all $i$ and $\Delta n^2\omega(i,j)\leq 1$ for all $i,j\in [n]$. Let $p$ be any $t$-local  pseudodistribution over $n$ $d$-ary variables. Then for any $k<t$ we have 
		\begin{equation}\label{eq:1norm}
			\frac{1}{k}\sum_{0\leq m< k} \; \Ex_{\substack{C\subseteq [n]\\|C|=m}}\; \Ex_{(i,j)\sim \omega} \left( \|p_{ij}- \mathbb{E}_{x_C} p_{i|x_C}p_{j |x_C}\|_1\right) \le \sqrt{\frac{2\log(d)}{k\Delta}}\;.
		\end{equation}
		\label{lem:yz}
	\end{lemma}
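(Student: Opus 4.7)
The plan is to follow the standard information-theoretic route: reduce the $L_1$ distance to conditional mutual information via Pinsker, apply Jensen/Cauchy--Schwarz to pull all expectations inside a single square root, and then bound the resulting average mutual information by a telescoping chain-rule argument together with the denseness hypothesis on $\omega$.

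For Step~1, fix $m$, $C$ of size $m<k$, and a pair $i\ne j$. Since $|C\cup\{i,j\}|\le k+1\le t$, the marginal $p_{ijC}$ is a genuine probability distribution and all conditional entropies and mutual informations below are well defined. Pinsker's inequality applied pointwise in $x_C$ gives
\[
\|p_{ij\mid x_C}-p_{i\mid x_C}\,p_{j\mid x_C}\|_1\le\sqrt{2\,I(X_i:X_j\mid X_C=x_C)},
\]
and by the triangle inequality followed by concavity of the square root,
\[
\bigl\|p_{ij}-\Ex_{x_C} p_{i\mid x_C}\,p_{j\mid x_C}\bigr\|_1\le \Ex_{x_C}\|p_{ij\mid x_C}-p_{i\mid x_C}\,p_{j\mid x_C}\|_1\le\sqrt{2\,I(X_i:X_j\mid X_C)}.
\]

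For Step~2, apply Jensen's inequality once more (concavity of $\sqrt{\cdot}$) to the joint average over $m$, $C$, and $(i,j)\sim\omega$, bounding the left-hand side of Eq.~\eqref{eq:1norm} by
\[
\sqrt{\frac{2}{k}\sum_{m=0}^{k-1}\Ex_{|C|=m}\Ex_{(i,j)\sim\omega}I(X_i:X_j\mid X_C)}.
\]
So the lemma reduces to proving that the quantity under the square root is at most $\log(d)/(k\Delta)$.

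For Step~3, I would handle the denseness hypothesis first and the chain rule second. Since $I(X_i:X_j\mid X_C)\ge 0$ and equals $0$ whenever $i\in C$ or $j\in C$ (because $X_i$ is then a function of $X_C$), and since $\omega(i,j)\le 1/(\Delta n^2)$ with $\omega(i,i)=0$,
\[
\Ex_{(i,j)\sim\omega} I(X_i:X_j\mid X_C)\le\frac{1}{\Delta n^2}\sum_{i\ne j}I(X_i:X_j\mid X_C)\le\frac{1}{\Delta}\,\Ex_{i,j\notin C,\,i\ne j}I(X_i:X_j\mid X_C).
\]
For the chain rule, define $f_m:=\Ex_{i\notin C,\,|C|=m}H(X_i\mid X_C)$; this is nonincreasing in $m$, with $f_0\le\log d$ and $f_k\ge 0$. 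Rewriting a uniformly random $(m{+}1)$-subset not containing $i$ as $C\cup\{j\}$ where $|C|=m$ and $j\notin C\cup\{i\}$ is uniform yields the identity
\[
f_m-f_{m+1}=\Ex_{i,\,|C|=m,\,i\notin C}\Ex_{j\notin C,\,j\ne i}I(X_i:X_j\mid X_C).
\]
Summing over $m=0,\dots,k-1$ telescopes to $f_0-f_k\le\log d$, which combined with the denseness bound above gives
\[
\frac{1}{k}\sum_{m=0}^{k-1}\Ex_{|C|=m}\Ex_{(i,j)\sim\omega}I(X_i:X_j\mid X_C)\le\frac{\log d}{k\Delta},
\]
and feeding this into Step~2 produces the desired bound $\sqrt{2\log(d)/(k\Delta)}$.

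The only genuinely delicate step is the telescoping identity in Step~3, where one must carefully match the distribution of a uniform $(m{+}1)$-subset of $[n]\setminus\{i\}$ with the distribution of $C\cup\{j\}$ obtained by first picking $|C|=m$ and then $j\notin C\cup\{i\}$, so that the combinatorial weights line up exactly; everything else is a clean application of Pinsker, Jensen, and the hypothesis on $\omega$.
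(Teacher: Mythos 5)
Your proof is correct and follows essentially the same route as the paper: triangle inequality, Pinsker plus Jensen to reduce the $\ell_1$ average to an average of conditional mutual informations, then the denseness hypothesis and a telescoping chain-rule bound. The only (minor, presentational) difference is in the telescoping step: you identify the exact identity $f_m - f_{m+1}=\Ex_{i\notin C,\,|C|=m}\Ex_{j\notin C,\,j\ne i}\,I(X_i:X_j\mid X_C)$ with $f_m:=\Ex_{i\notin C,\,|C|=m}H(X_i\mid X_C)$, whereas the paper sums over all $i\in[n]$ and works with the weighted sums $\sum_{|C|=m}\tfrac{n-m}{\binom{n}{m}}H(i\mid C)$, inserting an extra inequality (nonnegativity of $H(i\mid C)$) to make the sum telescope; both yield the same $\log(d)/(k\Delta)$ bound, and your version avoids that extra inequality at the small cost of verifying the distributional coupling between $(C,j)$ and uniform $(m{+}1)$-subsets of $[n]\setminus\{i\}$, which you correctly flag as the delicate point.
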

	We include a proof for completeness below, following Ref. \cite{yoshida2014approximation}.
	\begin{proof}
		For ease of notation we will use the notation $\mathbb{E}_R$ to denote expectations over random variables $R$ (i.e., our notation suppresses the distribution from which $R$ is drawn).
		
		To begin, note that for a fixed $m, C,i,j$ we have
		\[
		\|p_{ij}- \mathbb{E}_{x_C} p_{i|x_C}p_{j |x_C}\|_1=\|\mathbb{E}_{x_C}\left(p_{{ij}|{x_C}}-  p_{i|x_C}p_{j |x_C}\right)\|\leq \mathbb{E}_{x_C} \|p_{{ij}|{x_C}}- p_{i|x_C}p_{j |x_C}\|_1
		\]
		and therefore  the left hand side of Eq.~\eqref{eq:1norm} is upper bounded as
		\begin{align}
			\mathbb{E}_{m,C,i,j}  \|p_{ij}- \mathbb{E}_{x_C} p_{i|x_C}p_{j |x_C}\|_1&\leq \mathbb{E}_{m,C,i,j, x_C} \|p_{{ij}|{x_C}}-  p_{i|x_C}p_{j |x_C}\|_1\\
			&\leq \left(\mathbb{E}_{m,C,i,j, x_C}  \|p_{{ij}|{x_C}}- p_{i|x_C}p_{j |x_C}\|^2_1\right)^{1/2}
		\end{align}
		Using Pinsker's inequality we have $\|q-q'\|^2_1\leq 2D_{KL}(q||q')$ for any probability distributions $q,q'$, where $D_{KL}$ is the Kullback-Leibler divergence. Using this inequality in the above we get
		\begin{align}
			\left(\mathbb{E}_{m,C,i,j} \|p_{ij}- \mathbb{E}_{x_C} p_{i|x_C}p_{j |x_C}\|_1\right)^2&\leq \mathbb{E}_{m,C,i,j, x_C} 2D_{KL} \left(p_{{ij}|{x_C}}||p_{i|x_C}p_{j |x_C}\right)\nonumber\\
			&=2\mathbb{E}_{m,C,i,j} \left(H(i|C)-H(i|j\cup C)\right),
		\end{align}
		where $H(A|B)$ is the conditional Shannon entropy (with respect to the pseudodistribution $p$). Note that since $|C|<k<t$ and $p$ is a $t$-local pseudodistribution,  all the distributions such as $p_{ij|_{X_C}}$ discussed above are well defined.
		
		To complete the proof, below we show that 
		\begin{equation}
			\mathbb{E}_{m,C,i,j} \left(H(i|C)-H(i|j\cup C)\right)\leq \frac{\log(d)}{\Delta k}.
			\label{eq:hcond}
		\end{equation}
		To this end, write
		\begin{align}
			\mathbb{E}_{m,C,i,j} \left(H(i|C)-H(i|j\cup C)\right)&=\frac{1}{k}\sum_{0\leq m<k} \frac{1}{{n\choose m}} \sum_{|C|=m} \sum_{i,j} \omega(i,j)  \left(H(i|C)-H(i|j\cup C)\right)\\
			&\leq \frac{1}{k\Delta n^2}\sum_{0\leq m<k} \frac{1}{{n\choose m}} \sum_{|C|=m} \sum_{i,j=1}^{n}\left(H(i|C)-H(i|j\cup C)\right)
			\label{eq:sumij}
		\end{align}
		where we used the fact that $n^2\Delta \omega(i,j)\leq 1$. Next observe that terms in Eq.~\eqref{eq:sumij} with $j\in C$ are equal to zero, and therefore 
		\begin{align}
			\mathbb{E}_{m,C,i,j}& \left(H(i|C)-H(i|j\cup C)\right)\\
			&\leq \frac{1}{k\Delta n^2}\sum_{0\leq m<k} \frac{1}{{n\choose m}} \sum_{|C|=m} \sum_{i=1}^{n}\sum_{j\notin C} \left(H(i|C)-H(i|j\cup C)\right)\\
			&=\frac{1}{k\Delta n^2}\sum_{i=1}^{n} \sum_{0\leq m<k} \bigg(\sum_{|C|=m} \frac{(n-m)}{{n \choose m}} H(i|C)-\sum_{|C|=m+1} \frac{(n-m)}{{n \choose m+1}} H(i|C)\bigg).\\
			&\leq \frac{1}{k\Delta n^2}\sum_{i=1}^{n} \sum_{0\leq m<k} \bigg(\sum_{|C|=m} \frac{(n-m)}{{n \choose m}} H(i|C)-\sum_{|C|=m+1} \frac{(n-(m+1))}{{n \choose m+1}} H(i|C)\bigg).
			\label{eq:sumij2}
		\end{align}
		Now we see that the sum over $m$ is telescopic in Eq.~\eqref{eq:sumij2} and is upper bounded by 
		\[
		\frac{1}{k\Delta n^2}\sum_{i=1}^{n} \frac{(n-0)}{{n \choose 0}} H(i)\leq \frac{\log(d)}{k\Delta}
		\]
		where we upper bounded $H(i)\leq \log(d)$. We have shown Eq.~\eqref{eq:hcond}, completing the proof.
	\end{proof}
	
	\begin{lemma}\label{lem:energy_bound_dense}
		Let $H$ be a $\Delta$-dense, $2$-local, $n$-qubit Hamiltonian. Let $\sigma$ be a $k$-local pseudodensity matrix and let $\rho\in \mathcal{B}$ be its image under the rounding map Eq.~\eqref{eq:roundrho}. Suppose $\epsilon\in (0,1)$ is such that $n\geq \epsilon^{-3}\Delta^{-2}$.  We can choose $k=O(1/(\Delta \epsilon^2))$ such that
		\begin{align}
			|\trace{(\rho H)} - \sum_{i,j}\trace{(H_{ij}\sigma_{ij})} | \le \epsilon\Gamma \;.
		\end{align}
	\end{lemma}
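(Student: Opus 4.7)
The plan is to follow the Br\~{a}ndao--Harrow strategy of reducing the quantum energy comparison to a classical $1$-norm comparison between pseudodistributions, and then to invoke the Yoshida--Zhou estimate in Lemma~\ref{lem:yz}. First I would associate to $\sigma$ a classical $k$-local pseudodistribution $p$ over the alphabet $\{1,2,3\}\times\{0,1\}$ (so $d=6$) by setting $p_S(x_S) = 3^{-|S|}\langle\psi_{x_S}|\sigma_S|\psi_{x_S}\rangle$ for every $S\subseteq[n]$ with $|S|\le k$. The single-qubit identity $I = \tfrac13\sum_{x}|\psi_x\rangle\langle\psi_x|$, combined with consistency of $\sigma$, implies that $p$ satisfies the marginal-consistency conditions. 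A direct calculation from Eq.~\eqref{eq:pc} also gives, for $a\in\{1,2,3\}$,
\[
\mathrm{Tr}\bigl(P_a\,\rho_i^{(x_C)}\bigr) \;=\; 3\bigl(p_{i\mid C}(a,0\mid x_C) - p_{i\mid C}(a,1\mid x_C)\bigr),
\]
so single-qubit Pauli expectations of the rounding pieces are linear in the conditional pseudodistribution $p_{i\mid C}(\cdot\mid x_C)$.

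Next I would expand each local term as $H_{ij}=\sum_{a,b\in\{0,1,2,3\}} h^{ij}_{ab}\,P_a\otimes P_b$ with $|h^{ij}_{ab}|\le \|H_{ij}\|=\Gamma_{ij}$ and at most $16$ nonzero terms, then split $\mathrm{Tr}(H_{ij}\rho)$ according to $|\{i,j\}\cap C|\in\{0,1,2\}$ using the definition \eqref{eq:roundrho}. In the main case $i,j\notin C$ the marginal on qubits $i,j$ equals $\rho_i^{(x_C)}\otimes\rho_j^{(x_C)}$, and combining the calculation above with the Pauli expansion produces an expression in which, for each pair $(a,b)\in\{1,2,3\}^2$, the term is a signed sum over $(r_i,r_j)\in\{0,1\}^2$ of $p_{i\mid C}(a,r_i\mid x_C)\,p_{j\mid C}(b,r_j\mid x_C)$ with coefficients $\pm 9\,h^{ij}_{ab}$. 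The analogous expansion for $\mathrm{Tr}(H_{ij}\sigma_{ij})$ replaces the product by $p_{ij}(a,r_i,b,r_j)$. The contributions from $a=0$ or $b=0$ involve only single-qubit marginals, and I would verify directly (using the definitions of $p_C$ and $\rho_i^{(x_C)}$) that, after averaging over $C$ and $x_C$, these collapse to expressions that match between $\rho$ and $\sigma$ up to corrections that I absorb into the boundary-case analysis below.

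The boundary cases $\{i,j\}\cap C\neq\emptyset$ I would handle by a crude union bound: for fixed $(i,j)$ and $|C|=m$ chosen uniformly, the event $\{i\in C\text{ or }j\in C\}$ has probability at most $2m/n\le 2k/n$, and each contributing expectation is bounded by $2\|H_{ij}\|=2\Gamma_{ij}$ in absolute value. Summing over $i,j$ yields a boundary error $\le O(k/n)\,\Gamma$, which is at most $\tfrac12\epsilon\Gamma$ once $n\ge C\,k/\epsilon$ for a sufficient constant $C$; with $k=\Theta(1/(\Delta\epsilon^2))$ this is implied by the hypothesis $n\ge\epsilon^{-3}\Delta^{-2}$. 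For the main case, bounding the absolute Pauli-expansion error pointwise gives, for each $(i,j)$,
\[
\Bigl|\mathrm{Tr}(H_{ij}\sigma_{ij}) - \mathbb{E}_{m,C:\,i,j\notin C}\,\mathbb{E}_{x_C}\mathrm{Tr}\bigl(H_{ij}\,\rho_i^{(x_C)}\otimes\rho_j^{(x_C)}\bigr)\Bigr| \;\le\; 144\,\Gamma_{ij}\,\Bigl\|p_{ij}-\mathbb{E}_{m,C}\mathbb{E}_{x_C}\,p_{i\mid x_C}p_{j\mid x_C}\Bigr\|_1,
\]
and weighting by $\omega(i,j)=\Gamma_{ij}/\Gamma$ (which by $\Delta$-denseness satisfies $\Delta n^2\omega(i,j)\le 1$) converts the right-hand side into exactly the averaged $1$-norm appearing in Lemma~\ref{lem:yz}, giving a bound of $O(\sqrt{\log(6)/(k\Delta)})\cdot\Gamma$. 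Choosing $k=\Theta(1/(\Delta\epsilon^2))$ makes this at most $\tfrac12\epsilon\Gamma$, and adding the boundary error completes the proof.

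The main obstacles I anticipate are not technical difficulties but bookkeeping: (i) cleanly isolating the identity-Pauli contributions and showing they cancel (or are absorbed into the boundary error) after averaging over $C$ and $x_C$, since single-qubit marginals of $\rho$ reproduce those of $\sigma$ only after full averaging over $(m,C)$; and (ii) reconciling the fact that the Yoshida--Zhou bound is unconditional over $C$ while the main-case Pauli identity only applies when $i,j\notin C$, which I would resolve by adding and subtracting the $\{i,j\}\cap C\neq\emptyset$ contribution and charging it to the boundary-case error.
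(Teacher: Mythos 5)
Your proposal follows the same route as the paper: construct the classical $d{=}6$ pseudodistribution by random Pauli-basis measurement, weight with $\omega(i,j)=\Gamma_{ij}/\Gamma$, invoke Lemma~\ref{lem:yz}, and charge the $\{i,j\}\cap C\ne\emptyset$ terms to a separate boundary error using $n\ge\epsilon^{-3}\Delta^{-2}$. The only genuine departure is in how the classical $1$-norm bound is converted back into a quantum energy estimate: the paper first proves the clean distortion lower bound $\|\Lambda^{\otimes\ell}(Q)\|_1\ge 6^{-\ell}\|Q\|_1$ (Claim~\ref{claim:dist}, via the $2$-norm) and then uses $\|Q\|_1\ge\mathrm{Tr}(BQ)$ with $B=H_{ij}/\|H_{ij}\|$, whereas you open up the Pauli expansion of $H_{ij}$ and track coefficients $h^{ij}_{ab}$ directly. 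The two are mathematically equivalent (your $144$ versus the paper's $36$ is just bookkeeping slack and both are valid upper-bound constants); the paper's version is tidier because it factors the argument into a norm-distortion lemma plus trace-norm/operator-norm duality, while yours makes the coordinates explicit. One thing worth tightening in a written-up version: your displayed inequality places $\mathbb{E}_{m,C}$ inside the $1$-norm on the right-hand side, but Lemma~\ref{lem:yz} controls $\mathbb{E}_{m,C}\|\cdots\|_1$; this is fine by convexity of the norm, and the restriction to $i,j\notin C$ on the left is precisely the discrepancy you absorb into the boundary term, but both of these small impedance mismatches should be stated explicitly rather than left implicit.
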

	\begin{proof}
		Let $q$ be the $k$-local pseudodistribution obtained from $\sigma$ via the map $q_S= \Lambda^{\otimes |S|} (\sigma_S)$ for all subsets $S\subseteq [n]$ of size at most $k$. Here $\Lambda$ is the map which corresponds to choosing a single qubit Pauli basis $X,Y,Z$ uniformly at random and then measuring in that basis. The outcomes of such a single-qubit measurement are indexed by $b,r\in \{1,2,3\}\times \{0,1\}$ and so $q$ is a $k$-local pseudodistribution over $n$, $d$-ary variables with $d=6$. In particular, we may write
		\[
		q_S(x)=\frac{1}{3^{|S|}} \langle \psi_x|\sigma_S|\psi_x\rangle \qquad x\in \mathcal{I}(|S|).
		\]
		
		By definition we have 
		\begin{equation}
			q_{ij}= \Lambda^{\otimes 2}(\sigma_{ij}).
			\label{eq:lambda1}
		\end{equation}
		We can also see from Eqs.~(\ref{eq:pc},\ref{eq:rhotil}) that for any subset $C\subseteq [n]$ of size $|C|=m<k$, $x_C\in \mathcal{I}(|C|)$, and for any $i,j\notin C$ with $i\neq j$ we have
		\begin{equation}
			\Lambda (\rho^{(x_C)}_i)=q_i|_{x_C} \quad \text{and} \quad \Lambda\otimes \Lambda(\eta_{ij})=\mathbb{E}_{x_C}\left[q_i|_{x_C}q_j|_{x_C}\right].
			\label{eq:lambda2}
		\end{equation}

		Now let us use Lemma \ref{lem:yz} with $\omega(i,j)\equiv \mathrm{\Gamma_{ij}}/\Gamma$. Using the lemma and Eqs.~(\ref{eq:lambda1},\ref{eq:lambda2}) gives
		\begin{equation}
			\frac{1}{k}\sum_{0\leq m< k} \; \Ex_{\substack{C\subseteq [n]\\|C|=m}}\; \Ex_{(i,j)\sim \omega} \delta(i,j\notin C)\|\Lambda\otimes \Lambda \left(\sigma_{ij}- \eta_{ij}\right)\|_1 \le \sqrt{\frac{2\log(6)}{k\Delta}}\;,
			\label{eq:lambda3}
		\end{equation}
		where we also used the fact that $\omega(i,i)=0$. 
		
		The expression Eq.~\eqref{eq:lambda3} does not contain terms with $i\in C$ or $j\in C$, but it will be helpful to add them back in (and bound their contribution to the sum). Note that for any $i,j$ with $i\neq j$ we have $\|\Lambda\otimes \Lambda \left(\sigma_{ij}- \eta_{ij}\right)\|_1\leq 2$ by the triangle inequality. Therefore 
		\begin{align}
			\mathbb{E}_{m<k}\; \Ex_{\substack{C\subseteq [n]\\|C|=m}}\; \Ex_{(i,j)\sim \omega}& \left(1-\delta(i,j\notin C)\right)\|\Lambda\otimes \Lambda \left(\sigma_{ij}- \eta_{ij}\right)\|_1 \\
			&\le \mathbb{E}_{m<k} \; \Ex_{\substack{C\subseteq [n]\\|C|=m}}\; \sum_{i,j} \omega(i,j) 2\left(1-\delta(i,j\notin C)\right)\\
			&\leq\mathbb{E}_{m<k} \; \Ex_{\substack{C\subseteq [n]\\|C|=m}}\; \sum_{i,j} \frac{1}{\Delta n^2} 2\left(1-\delta(i,j\notin C)\right)\\
			&= \mathbb{E}_{m<k} \;  \frac{1}{\Delta n^2} 2m(2n-m)\\
			&\leq \frac{4k}{\Delta n}.
			\label{eq:fourk}
		\end{align}
		Combining Eqs.~(\ref{eq:lambda3}, \ref{eq:fourk}) and using the fact that $n\geq \epsilon^{-3}\Delta^{-2}$ gives
		\begin{equation}
			\frac{1}{k}\sum_{0\leq m< k} \; \Ex_{\substack{C\subseteq [n]\\|C|=m}}\; \Ex_{(i,j)\sim \omega} \|\Lambda\otimes \Lambda \left(\sigma_{ij}- \eta_{ij}\right)\|_1 \le \sqrt{\frac{2\log(6)}{k\Delta}}+4k\epsilon^{3}\Delta.
			\label{eq:lambda4}
		\end{equation}
		
		Following Ref. \cite{brandao2013product} we use the fact that the map $\Lambda$ does not ``distort" the $1$-norm too much:
		\begin{claim}
			For any $\ell\geq 1$ and any Hermitian $\ell$-qubit operator $Q$ we have $\|\Lambda^{\otimes \ell}(Q)\|_1\geq 6^{-\ell}\|Q\|_1$.
			\label{claim:dist}
		\end{claim}
		We give a proof of the claim below. Using the claim with $\ell=2$ and Eq.~\eqref{eq:lambda4} gives
		\begin{equation}
			\frac{1}{k}\sum_{0\leq m< k} \; \Ex_{\substack{C\subseteq [n]\\|C|=m}}\; \Ex_{(i,j)\sim \omega} \|\sigma_{ij}- \eta_{ij}\|_1 \le 36\sqrt{\frac{2\log(6)}{k\Delta}}+144k\epsilon^{3}\Delta\;.
			\label{eq:lambda5}
		\end{equation}
		Using the fact that for any Hermitian matrices $A,B$ with $\|B\|=1$ we have $\|A\|_1\geq\mathrm{Tr}(BA)$ gives 
		\begin{align}
			\frac{1}{k}\sum_{0\leq m< k} \; \Ex_{\substack{C\subseteq [n]\\|C|=m}}\; \Ex_{(i,j)\sim \omega} \|\sigma_{ij}- \eta_{ij}\|_1&\geq \left|\frac{1}{k}\sum_{0\leq m< k} \; \Ex_{\substack{C\subseteq [n]\\|C|=m}}\; \Ex_{(i,j)\sim \omega} \mathrm{Tr}(H_{ij}/\|H_{ij}\| (\sigma_{ij}-\eta_{ij}))\right|\\
			&=\frac{1}{\Gamma}\left|\mathrm{Tr}(H\rho)-\sum_{i,j} \mathrm{Tr}(H_{ij} \sigma_{ij})\right|.
			\label{eq:lambda6}
		\end{align}
		Combining Eqs.~(\ref{eq:lambda5}, \ref{eq:lambda6}) we see that we may choose a $k= O(1/(\epsilon^2\Delta))$ such that 
		\begin{align}
			|\trace(H\rho) - \sum_{i,j} \mathrm{Tr}(H_{ij} \sigma_{ij}) | \le \epsilon\Gamma.
		\end{align}
	\end{proof}
	
	\begin{proof}[Proof of Claim \ref{claim:dist}]
		The map $\Lambda$ is defined by 
		\[
		\Lambda(A)=\frac{1}{3}\sum_{b\in [3]}\sum_{r\in \{0,1\}} |b,r\rangle\langle b,r| \langle \psi_{b,r}|A|\psi_{b,r}\rangle.
		\]
		and therefore $\Lambda(P_0)=I$ and $\Lambda(P_b)=3^{-1}(|b\rangle \langle b|\otimes Z)$ for $b=1,2,3$, where $P_0=I, P_1=X, P_2=Y, P_3=Z$. Any $\ell$-qubit Hermitian operator $Q$ can be expanded in the Pauli basis as 
		\begin{equation}
			Q=\sum_{b\in \{1,2,3\}^\ell} q_{b} P(b) \qquad \qquad P(b)=P_{b_1}\otimes P_{b_2}\otimes\ldots P_{b_\ell}
		\end{equation}
		where $q_b\in \mathbb{R}$ for each $b\in [3]^\ell$. Then
		\begin{equation}
			\|Q\|_1\leq 2^{\ell/2}\|Q\|_2=2^{\ell/2}\sqrt{\mathrm{Tr}(Q^\dagger Q)}=2^{\ell}\left(\sum_{b} (q_b)^2\right)^{1/2},
			\label{eq:qex}
		\end{equation}
		and
		\begin{equation}
			\|\Lambda^{\otimes \ell}(Q)\|_1\geq \|\Lambda^{\otimes \ell} (Q)\|_2=2^{\ell/2}\left(\sum_{b} (q_b 3^{-\mathrm{wt}(b)})^2\right)^{1/2}\geq 2^{-\ell/2}3^{-\ell} \|Q\|_2
			\label{eq:lqex}
		\end{equation}
		where $\mathrm{wt}(b)$ is the number of nonzeros in $b$. Combining Eqs.~(\ref{eq:qex}, \ref{eq:lqex}) we get
		\[
		\frac{\|\Lambda^{\otimes \ell}(Q)\|_1}{\|Q\|_1}\geq 3^{-\ell}2^{-\ell}=6^{-\ell}.
		\]
	\end{proof}
	\subsection{Optimization over pseudodensity matrices\label{sec:optimize}}
	Here we review efficient algorithms for minimizing convex functions over the set of pseudodensity matrices, following Ref.~\cite{liu2006consistency} which proposed this application of the convex optimization algorithm \cite{bertsimas2004solving}.
	\begin{theorem}[Bertsimas and Vempala\cite{bertsimas2004solving}]
		Suppose $K\subseteq \mathbb{R}^m$ is a convex set and $g:K\rightarrow [-1,1]$ is a convex function. Suppose that, given $\gamma\in \mathbb{R}^{m}$, we can compute whether or not $\gamma\in K$ using a runtime at most $T$. Suppose also that $g(x)$ can be computed for a given $x$ using runtime at most $T$. Suppose $R,r\in \mathbb{R}$ and $y\in K$ are such that: $K$ is contained in the ball of radius $R$ centered at the origin, and $K$ contains the ball of radius $r$ centered at $y$.  Then we may compute $\min_{x\in K} g(x)$ to within an additive error $\delta$ using a runtime 
		\[
		O((m^5T+m^7)\log(R/r)\log(\delta^{-1})).
		\]
		\label{thm:bv}
	\end{theorem}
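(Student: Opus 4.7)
The plan is to reduce the convex minimization problem to a short sequence of convex feasibility problems, and to solve each feasibility problem by a random-walk-based algorithm that uses only the membership oracle for $K$. First, I would perform a binary search over the objective value: given a threshold $t \in [-1,1]$, introduce the convex set $K_t = \{x \in K : g(x) \le t\}$ and ask whether it is nonempty. Since $g$ is bounded in $[-1,1]$ and we only need additive accuracy $\delta$, $O(\log(1/\delta))$ binary-search rounds suffice. Because both the membership oracle for $K$ and the evaluation oracle for $g$ run in time $O(T)$, we obtain a membership oracle for $K_t$ with the same asymptotic cost.

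Next, I would solve the feasibility problem on a convex body $K' \subseteq \mathbb{R}^m$ accessed only through a membership oracle by iteratively shrinking a containing body. Start with the ball of radius $R$ provided by the hypothesis. At each step, draw an approximately uniform sample from the current body using a polynomial-time random walk on convex bodies (e.g.\ the ball walk or hit-and-run). If the sample lies in $K'$, output it. Otherwise, use the membership violation together with convexity of $g$ (and a finite-difference approximation of a subgradient at the sample, evaluated through the oracle for $g$) to produce a separating half-space; intersect this half-space with the current body to obtain the next body. Since $K$ contains a ball of radius $r$ around $y$, the feasible region $K_t$ at the optimal threshold retains a nontrivial volume, and a standard volume-reduction argument shows that $O(m \log(R/r))$ rounds either produce a feasible point or certify infeasibility.

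The main obstacle is bounding the cost of approximately-uniform sampling from a convex body accessed only by membership, which is what drives the $O(m^5 T + m^7)$ per-round cost. The standard remedy is to maintain a near-isotropic affine image of the current body (a \emph{rounding}), so that the ball walk or hit-and-run mixes in time $\mathrm{poly}(m)$ from a warm start; the warm start is propagated from one round to the next because successive bodies differ only by a single half-space. One must also argue that the approximations --- total variation error in sampling, finite-difference error in computing a subgradient of $g$, and rounding error in maintaining the isotropic transformation --- do not accumulate badly across the $O(m \log(R/r) \log(1/\delta))$ oracle calls made overall. Tracking these error budgets and combining them with the volume-reduction analysis yields the stated runtime bound.
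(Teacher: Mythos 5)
The paper does not actually prove this theorem: it is stated with the attribution ``Bertsimas and Vempala \cite{bertsimas2004solving}'' precisely because it is an external result being cited, and no \emph{proof} environment follows. So there is no paper proof to compare against; your proposal must be assessed on its own.

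Assessed on its own, the proposal has a genuine gap at its central step. The entire difficulty here --- and the point of the Bertsimas--Vempala result --- is that only a \emph{membership} oracle for $K$ is available, not a separation oracle. You write that when a sample $x$ falls outside $K_t$ you will ``use the membership violation \dots to produce a separating half-space,'' but a single membership query returns only a yes/no bit: if $x \notin K$ it gives you no hyperplane separating $x$ from $K$. Your finite-difference subgradient of $g$ handles only the case $g(x) > t$; it does nothing for the case $x \notin K$. The Bertsimas--Vempala algorithm avoids this entirely: they sample (via hit-and-run or the ball walk) from $P_i \cap K$, using the membership oracle merely as a rejection filter on proposed walk steps, then estimate the centroid $z_i$ of $P_i \cap K$ from the samples, and cut $P_i$ through $z_i$ perpendicular to the objective's (sub)gradient. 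The cut direction comes from the objective alone; the membership oracle never has to furnish a separating hyperplane. Relatedly, they do \emph{not} reduce optimization to feasibility via binary search; they optimize directly. Your binary-search reduction is also problematic because $K_t$ for $t$ close to the optimum can have empty or vanishingly thin interior (a convex $g$ bounded on $K$ can still have unbounded gradient at the boundary), so the $K_t$ feasibility subproblems need not retain the fatness that the volume-reduction argument requires. If one instead first converts membership to separation via the classical Grötschel--Lovász--Schrijver machinery, the cutting-plane route becomes sound but incurs extra polynomial overhead and would not reproduce the $O((m^5 T + m^7)\log(R/r)\log(1/\delta))$ bound.
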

	
	In Section \ref{sec:dense} we use the following corollary with objective function $g\propto f_k$. Note that here and in Theorem \ref{thm:bv} it is assumed that we can decide membership in $\mathcal{B}_k$ and compute the function $g$ exactly. In practice of course we can only hope to compute using real numbers represented with $\mathrm{poly}(n)$ bits of precision. It is known that convex optimization of convex functions can be performed in the presence of such errors (see for example Theorem 4.3.13 of Ref. \cite{grotschel2012geometric}). 
	
	\begin{corol}
		Suppose $g:\mathcal{B}_k\rightarrow [-1,1]$ is a convex function which can be computed using runtime at most $T$. Then we can compute
		\[
		g^{\star}=\min_{\rho\in \mathcal{B}_k} g(\rho)
		\]
		to within an additive error $\delta$ using a runtime $n^{O(k)}T\log(\delta^{-1})$.
	\end{corol}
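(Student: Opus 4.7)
The plan is to apply Theorem~\ref{thm:bv} with convex set $K = \mathcal{B}_k$ and objective function $g$. First I would fix a concrete Euclidean parameterization: for each subset $S\subseteq[n]$ with $|S|\leq k$, list the $(2^{|S|})^2 \leq 4^k$ real entries describing the Hermitian matrix $\sigma_S$ (its diagonal together with the real and imaginary parts above the diagonal). The number of such subsets is $\sum_{j\leq k}\binom{n}{j}\leq n^{O(k)}$, so the ambient dimension is $m = n^{O(k)}$. The conditions in Eqs.~\eqref{eq:ps1}--\eqref{eq:ps3} split into linear constraints (Hermiticity, trace one, partial-trace consistency between overlapping subsets) and positive semidefinite constraints on each $\sigma_S$. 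Since PSD cones are convex, $\mathcal{B}_k$ is the intersection of convex sets and thus convex. Membership can be decided in time $n^{O(k)}$: the linear constraints are checked directly, and positive semidefiniteness of $\sigma_S$ is verified via its eigenvalues in time $O(8^k)$ per subset.

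Next I would bound the radii $R,r$ required by Theorem~\ref{thm:bv}. Each $\sigma_S$ satisfies $\|\sigma_S\|\leq 1$, hence Frobenius norm at most $2^{k/2}$, so every vector representation of $\sigma\in\mathcal{B}_k$ lies in a ball of radius $R = n^{O(k)}$ about the origin. The subtle ingredient is the inner ball: the equality constraints (trace one and consistency) confine $\mathcal{B}_k$ to an affine subspace $V\subseteq\mathbb{R}^m$ of some dimension $m'\leq m$, so $\mathcal{B}_k$ has empty interior in the full ambient space. I would therefore recast the problem on $V$: compute a basis of $V$ by Gaussian elimination on the linear equalities in time $n^{O(k)}$, and express $g$ in coordinates on $V$. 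Inside $V$, the maximally mixed pseudodensity matrix $\sigma^\star$ with $\sigma^\star_S = I/2^{|S|}$ for every $S$ is an interior point: all of its eigenvalues are at least $2^{-k}$, so any perturbation of Frobenius norm at most $2^{-(k+1)}$ that stays in $V$ remains positive semidefinite (and automatically satisfies every equality constraint). Consequently $r = 1/n^{O(k)}$ suffices, giving $\log(R/r) = O(k\log n)$.

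The main obstacle is precisely this low-dimensional issue: a naive application of Theorem~\ref{thm:bv} with $K=\mathcal{B}_k\subseteq\mathbb{R}^m$ fails because $K$ has no interior ball. Once the problem is restated on $V$ with $m' = n^{O(k)}$, the remaining ingredients plug in directly. Substituting $m'$, $\log(R/r) = O(k\log n)$, and the cost $n^{O(k)} + T$ of testing membership and evaluating $g$ into Theorem~\ref{thm:bv} yields total runtime
\[
O\bigl((m'^5(n^{O(k)}+T) + m'^7)\,\log(R/r)\,\log(\delta^{-1})\bigr) \;=\; n^{O(k)}\,T\,\log(\delta^{-1}),
\]
where we have absorbed the $n^{O(k)}$ term inside the max into the $n^{O(k)}$ prefactor (assuming $T\geq 1$). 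This establishes the corollary.
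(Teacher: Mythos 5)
Your argument is correct, but it uses a different parameterization from the paper's and therefore has to work around an obstacle the paper never encounters. You parameterize a pseudodensity matrix by the raw entries of each $\sigma_S$; in that ambient $\mathbb{R}^m$ the trace-one and marginal-consistency conditions cut out a proper affine subspace $V$, so $\mathcal{B}_k$ has empty interior and Theorem~\ref{thm:bv} cannot be invoked directly. Your fix — restrict to coordinates on $V$, take the maximally mixed pseudodensity matrix as the center, and observe that a spectral-norm perturbation below $2^{-(k+1)}$ keeps every $\sigma_S$ PSD — is sound. One small technical point: to guarantee that the $\ell_2$ radius in your $V$-coordinates matches the Euclidean radius inside $V\subseteq\mathbb{R}^m$ you should use an \emph{orthonormal} basis of $V$ (e.g.\ via QR or SVD) rather than one produced by Gaussian elimination, otherwise the inner- and outer-ball radii get rescaled by the singular values of your change-of-basis matrix; this is harmless here because the constraint matrix is well conditioned, but it deserves a sentence.

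The paper avoids the low-dimensionality problem altogether by parameterizing in the Pauli basis: a pseudodensity matrix is identified with the vector of expectation values $\beta_P \propto \mathrm{Tr}(\sigma_S P)$ over non-identity Pauli operators $P$ of weight at most $k$. In that coordinate system trace normalization is encoded by fixing the identity coefficient, and marginal consistency is automatic because every reduced density matrix is reconstructed from the same shared $\beta_P$'s, so $\mathcal{B}_k$ is full-dimensional with the origin (all $\sigma_S = 2^{-|S|}I$) as an interior point. This gives $R=1$ and $r = 2^{-k}/\sqrt{m}$ immediately, and the rest of the argument (membership testing by diagonalizing each $\sigma_S$, plugging into Theorem~\ref{thm:bv}) is the same as yours. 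Both parameterizations yield $m = n^{O(k)}$, $\log(R/r)=O(k\log n)$, and the stated runtime; the Pauli coordinates are simply the cleaner choice because they make the degenerate-interior issue disappear rather than requiring the explicit affine restriction you carry out.
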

	\begin{proof}

		From Eqs. (\ref{eq:ps1},\ref{eq:ps2}, \ref{eq:ps3}) one can directly verify that the set of $k$-local pseudodensity matrices is convex. Let $\mathcal{P}_n$ denote the set of $n$-qubit Pauli operators $\mathcal{P}_n=\{P_1\otimes P_2\otimes \ldots P_n: \; P_i\in \{X,Y,Z,I\}\}$ and let $\mathcal{Q}_n=\mathcal{P}_n\setminus \{I\otimes I\otimes \ldots I\}$. Write $\mathrm{wt}(P)$ for the weight of the Pauli operator (number of non-identity tensor factors). We can express an $n$-qubit density matrix $\rho$ as
		\begin{equation}
			\rho=2^{-n} I+\sum_{P\in \mathcal{Q}_{n}} \alpha_P P \qquad \quad \alpha_P=2^{-n}\mathrm{Tr}(\rho_S P)
			\label{eq:pauliexpand}
		\end{equation}
		and identify $\rho$ with the real vector $\alpha$ of length $4^n-1$. Any $k$-local pseudodensity matrix $\sigma$ is identified with a real vector
		\[
		\beta_P=2^{-n} \mathrm{Tr}(\sigma P)  \qquad \quad P\in \mathcal{Q}_n, \mathrm{wt}(P)\leq k
		\]
		of length $m=\sum_{j=1}^{k} 3^j {n \choose j}=O(3^kn^k)$. In this way we can view the set of pseudodensity matrices as a convex subset $\mathcal{B}_k\subseteq \mathbb{R}^m$.
		
		If we are given a vector $\gamma\in \mathbb{R}^{m}$, we can decide whether or not $\gamma\in \mathcal{B}_k$ by checking that each reduced density matrix (on every subset $S\subseteq [n]$ of size $|S|\leq k$) defined by $\gamma$ is a valid density matrix. This is equivalent to checking that all such reduced density matrices are positive semidefinite. A $t$-qubit density matrix defined by $\gamma$ can be constructed from $O(4^t)$ entries of $\gamma$ and its eigenvalues can be computed with runtime $O(4^{3t})$. Therefore to confirm $\gamma\in \mathcal{B}_k$ takes a runtime $O({n \choose k} 4^{3k})=n^{O(k)}$.

		Therefore we can use Theorem \ref{thm:bv} to minimize the convex function $g$ over the set $B_k$. To understand its runtime we need to bound the quantities $R,r$ from Theorem \ref{thm:bv}. Note from Eq.~\eqref{eq:pauliexpand} that $\alpha_P\leq 2^{-n}$ for all $P$ and therefore a density matrix $\rho$ is identified with a vector $\alpha$ of norm at most $\|\alpha\|_2\leq 1$. Therefore any $k$-local pseudodensity matrix is also identified with a real vector $\beta$ of norm at most $R=1$. 
		
		Moreover, from Eq.~\eqref{eq:pauliexpand} and the triangle inequality we see that any vector $\alpha$ such that $\|\alpha\|_1\leq2^{-t}$ defines a valid $t$-qubit density matrix. So to ensure all $t$-qubit density matrices defined by $\beta$ are valid, for all $t\leq k$ it suffices that our vector $\beta$ satisfy $\|\beta\|_2\leq \|\beta\|_1\leq 2^{-k}$. We have shown that any vector $\beta\in \mathbb{R}^{m}$ of norm at most $2^{-k}$ defines a valid $k$-local pseudodensity matrix. Therefore we may take $r=2^{-k}$ and $y$ as the all zeros vector. The stated runtime then follows from Theorem \ref{thm:bv} with the above computed values of $m,R,r$.
		
	\end{proof}

	\section{Quantum Partition Function problem}
	\label{sec:QPF} 
	
	Recall that the partition function of a local Hamiltonian $H$ at inverse temperature $\beta>0$ is 
	\begin{align}
		Z(\beta)=\trace[e^{-\beta H}].
	\end{align}
	To simplify notation, in the following it is often convenient to absorb the inverse temperature 
	and the minus sign in the Boltzmann exponent into the definition of $H$, i.e., to consider the function $\mathrm{Tr}(e^H)$. We are interested in the problem of approximating the partition function to within a given relative error.
	
	\begin{problem}[\bf Quantum Partition Function ($k$-QPF)]
		Given a  $k$-local Hamiltonian $H$ acting on $n$ qubits
		and a precision parameter $\delta=poly(1/n)$, compute an estimate $\tilde{Z}$ such that 
		\[
		(1-\delta) \mathrm{Tr} (e^{H}) \le \tilde{Z} \le (1+\delta) \mathrm{Tr}( e^{H}).
		\]
	\end{problem}
	
	Note that the problem with any constant $\delta'\in (0,1)$ is polynomial time equivalent to the QPF problem with $\delta=\mathrm{poly}(n)^{-1}$. 
	This can be seen as follows. Consider $L$ independent copies of the system which will have a Hamiltonian $G=H\otimes \mathbb{I}^{\otimes L-1}+\mathbb{I}\otimes H\otimes \mathbb{I}^{\otimes L-2}+\cdots$. Clearly, $Z_G = {\rm Tr}(e^{-\beta G}) = Z^L$. We estimate $Z_G$ with precision $\delta'\in(0,1)$ and then take the $L$-th root of the estimate $Z_G$. We get
	\begin{align}
		(1-\delta')^{1/L}Z\leq Z_G ^{1/L} \leq Z(1+\delta')^{1/L}.
	\end{align}
	We may then choose $L=O({\rm poly}(n))$ so that $Z_G ^{1/L}$ is a $\delta={\rm poly}(n)^{-1}$ estimate of $Z$.

	In Section \ref{sec:equivalence} we establish that for any $k\geq 2$, the $k$-QPF problem is equivalent to the $2$-QPF problem. We also show that $2$-QPF is polynomial-time equivalent to three other computational problems which (respectively) involve computing
	\begin{itemize}
		\item{The mean value of a Pauli observable in the Gibbs state $\rho= e^{H}/Z$ (within a given additive error).}
		\item{The number of eigenstates of $H$ within a given interval  (within a given relative error).}
		\item{The number of witnesses that lead a QMA verifier to accept (within a given relative error).}
	\end{itemize}

	\subsection{Computational problems equivalent to QPF}
	\label{sec:equivalence}
	
	In this Section we describe three computational problems and we show that each of them is (polynomial-time) equivalent to QPF. 
	
	The first problem concerns counting the number of eigenvalues of a given local Hamiltonian (with multiplicity) inside a given interval. More precisely, given an $n$-qubit, $k$-local Hamiltonian $H$, denote its eigenvalues as $e_1\leq e_2\leq \ldots \leq e_{2^n}$. For any interval $[a,b]\in \mathbb{R}$ let 
	\[
	m_{[a,b]}=|\{i: e_i\in [a,b]\}|
	\]
	be the number of eigenvalues of $H$ in the given interval. 
	\begin{problem}[\bf Quantum Density of States ($k$-QDOS)]
		
		Given a $k$-local Hamiltonian $H$ acting on $n$ qubits, precision parameters $\delta, \epsilon=poly(1/n)$, and two thresholds $a,b$ such that $a<b$. We are asked to output an estimate $m$ satisfying
		\begin{equation}
			(1-\delta) m_{[a,b]}\leq m\leq (1+\delta)m_{[a-\epsilon, b+\epsilon]}.
			\label{eq:qdos}
		\end{equation}
	\end{problem}
	
	The second problem is to estimate the mean value of a simple (Pauli) observable with respect to the quantum Gibbs state $\rho=e^{H}/\mathrm{Tr}(e^H)$, to a given additive error.
	\begin{problem}[\bf Quantum Mean Value ($k$-QMV)]
		Given a $k$-local Hamiltonian $H$ acting on $n$ qubits, a Pauli observable
		$P$ acting non-trivially on at most $k$ qubits, and a precision 
		parameter $\epsilon=poly(1/n)$, compute an estimate $\tilde{\mu}$ such that 
		\be
		\label{mu}
		\left| \tilde{\mu} - \frac{\mathrm{Tr}( P e^H)}{\mathrm{Tr}(e^H)} \right| \le \epsilon.
		\ee
	\end{problem}
	
	The third and final problem is different from the ones defined above in that it concerns quantum circuits rather than Hamiltonians. We define the problem of \emph{quantum approximate counting} (QAC) as follows. We consider a polynomial-size \textit{verifier circuit} of the following form. It takes as input an $n$-qubit state, adjoins $n_a=\mathrm{poly}(n)$ ancilla qubits in the $|0\rangle$ state, and then applies a quantum circuit $U$ of size $\mathrm{poly}(n)$ followed by measurement of a single output qubit. Such a circuit implements an $n$-qubit two-outcome POVM $\{A,I-A\}$ where the operator $A$ corresponds to measurement outcome `1' and satisfies $0\leq A\leq I$. Formally we have
	\[
	A=\left( I\otimes\langle 0^{n_a}|\right)U^{\dagger} |1\rangle\langle1|_{\mathrm{out}} U \left(I\otimes |0^{n_a}\rangle\right).
	\]
	For any $n$-qubit input state $\psi$, the probability that it is accepted by the verifier circuit is given by $\langle \psi|A|\psi\rangle$. For any $\lambda\in [0,1]$ we write $\mathcal{L}_\lambda$ for the linear subspace spanned by all eigenstates of $A$ with eigenvalues greater than or equal to $\lambda$, and $\Pi_{\lambda}$ for the projector onto this subspace.  The dimension of this subspace is denoted
	
	\begin{equation}
		N_\lambda\equiv\mathrm{dim}(\mathcal{L}_\lambda)=\mathrm{Tr}(\Pi_{\lambda}).
		\label{eq:Nlamb}
	\end{equation}
	Informally, this is the number of witnesses accepted with probability at least $\lambda$.

	\begin{problem}[\bf Quantum Approximate Counting (QAC)]
		
		Suppose we are given a verifier circuit with $n$ input qubits and size $poly(n)$, a precision parameter $\delta=poly(1/n)$, and two thresholds $a,b$ such that $0<b<a\leq 1$ and $a-b=\Omega(1/\mathrm{poly}(n))$. We are asked to output an estimate $N$ satisfying
		\begin{equation}
			(1-\delta) N_{a} \leq N\leq (1+\delta)N_b.
			\label{eq:qac}
		\end{equation}
	\end{problem}
	
	Below we establish that all of the above problems can be reduced to one another in polynomial time. 
	\begin{theorem}
		For any $k\geq 2$, $k$-QPF is polynomial time equivalent to $2$-QPF, QAC, $k$-QDOS, and $k$-QMV. 
	\end{theorem}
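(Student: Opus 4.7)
The plan is to prove the theorem by a cycle of polynomial-time reductions relating the five problems. Since $2$-QPF is a special case of $k$-QPF, only the reverse direction $k$-QPF $\leq$ $2$-QPF is nontrivial among the partition-function problems, and this is handled by a circuit-to-Hamiltonian construction without perturbative gadgets. The remaining equivalences split into three pairs of reductions which I describe in turn: $k$-QPF $\leftrightarrow$ $k$-QMV, $k$-QPF $\leftrightarrow$ $k$-QDOS, and $k$-QPF $\leftrightarrow$ QAC.

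For $k$-QPF $\leftrightarrow$ $k$-QMV the two directions exploit the identity
\[
\frac{\mathrm{Tr}(P e^H)}{\mathrm{Tr}(e^H)} \;=\; \frac{d}{d\lambda}\log \mathrm{Tr}\bigl(e^{H+\lambda P}\bigr)\bigg|_{\lambda=0}.
\]
From QPF to QMV, call QPF on $H\pm hP$ with $h=\mathrm{poly}(1/n)$ and form a centered finite difference of the log-partition functions; the relative error of QPF becomes additive error on the mean value. From QMV to QPF, I integrate: $\log Z = n\log 2 + \int_0^1 \mathrm{Tr}(H e^{sH})/\mathrm{Tr}(e^{sH})\,ds$, discretize the integral with a $\mathrm{poly}(n)$-node Riemann sum, and at each node expand $H$ as a linear combination of $O(n^k)$ Pauli operators whose mean values are obtained via QMV.

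For $k$-QPF $\leftrightarrow$ $k$-QDOS, the direction QDOS $\to$ QPF bins $[-\|H\|,\|H\|]$ into $\mathrm{poly}(n)$ subintervals of width $\mathrm{poly}(1/n)$, counts eigenvalues in each bin via QDOS, and assembles $Z \approx \sum_j m_{[E_j,E_{j+1}]} e^{-\beta E_j}$; the relative error in $Z$ is controlled by a Lipschitz argument for $E\mapsto e^{-\beta E}$ on each small bin. The converse approximates a smoothed indicator of $[a,b]$ by a short linear combination of real exponentials of shifted Hamiltonians $e^{\beta(H-cI)}$, producing $\mathrm{poly}(n)$ calls to QPF. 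For $k$-QPF $\leftrightarrow$ QAC the key observation is that QAC is itself a density-of-states count for the POVM effect $A$; to turn this into a local-Hamiltonian question, I would encode the verifier circuit via Kitaev's circuit-to-Hamiltonian mapping into a $k$-local Hamiltonian $H_U$ whose low-energy spectrum faithfully reproduces the spectrum of $A$, with an explicit multiplicity factor coming from the clock and ancilla registers, and then invoke $k$-QDOS on $H_U$. Conversely, a block-encoding of $e^{H}$ via quantum signal processing can serve as the acceptance operator of a verifier, so that appropriately thresholded witness counts recover $Z$ up to a known normalization.

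The main obstacle is the $k$-to-$2$-local reduction and the Kitaev step in QAC equivalence. Unlike standard $\mathsf{QMA}$-hardness constructions, a partition-function reduction must preserve \emph{every} eigenvalue with its multiplicity and not merely the low-energy sector; this is exactly why perturbative gadgets are inadequate and why the paper signals a switch to Kitaev's history-state mapping. Concretely, I would replace each $k$-local term of $H$ by a Trotterized product of $2$-local evolutions and package these into a history Hamiltonian on a clock plus work register, arranging the construction so that the partition function of the $2$-local history Hamiltonian equals $\mathrm{Tr}(e^{H})$ times an explicit factor depending only on register sizes. The same careful eigenvalue bookkeeping is required in the QAC direction. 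Controlling these multiplicative factors tightly enough to preserve a $\mathrm{poly}(1/n)$ \emph{relative} error is the technical heart of the proof.
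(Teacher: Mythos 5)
Your $k$-QMV $\leftrightarrow$ $k$-QPF reductions (finite differences one way, thermodynamic integration the other) and your QDOS-to-QPF binning match the paper's Lemmas on the same pairs, so those pieces are fine. But the route you take through the locality reduction and the QAC equivalence is different from the paper's and, as stated, contains a real gap. The paper does not attempt a direct $k$-QPF $\to$ $2$-QPF reduction; it builds the chain $2$-QPF $\to$ QAC $\to$ $k$-QDOS $\to$ $k$-QPF (each arrow meaning "an oracle for the source solves the target"), so the $2$-local partition-function oracle only ever has to reproduce a dimension count, not a full spectrum.

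The specific claim that breaks is that the history Hamiltonian can be "arranged so that the partition function of the $2$-local history Hamiltonian equals $\mathrm{Tr}(e^{H})$ times an explicit factor depending only on register sizes," and the related assertion that a partition-function reduction "must preserve every eigenvalue with its multiplicity." Neither is true, and the second misdiagnoses why the history-state construction is used. The history Hamiltonian's Hilbert space contains an exponentially large penalty sector (clock and propagation constraints), whose contribution to $\mathrm{Tr}(e^{-\beta H_{\mathrm{hist}}})$ is additive, not an overall multiplicative factor; you cannot divide it out. The paper's Lemma on QAC $\to$ $2$-QPF handles this by first applying Marriott--Watrous amplification so accepted and rejected witnesses are separated by $2^{-n}$ versus $1/2$ in energy, then choosing $\beta=\mathrm{poly}(n)$ so large that the penalty sector contributes at most $\delta/4$ and the history sector contributes essentially $N_a \le Z \lesssim N_b$ — a sandwich of dimension counts, not a spectral isomorphism. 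The entire point of routing through QDOS (binning) and QAC (phase-estimation verifier plus witness counting) is to avoid ever needing a multiplicity-preserving map. Secondarily, your proposed QPF $\to$ QDOS step via a short real-exponential approximation to a smeared indicator is not developed and faces a genuine obstruction: to get relative error $\mathrm{poly}(1/n)$ when as few as one eigenvalue lies in the target window and up to $2^n$ lie just outside it, the approximant must drop from $\approx 1$ to $\lesssim 2^{-n}$ over a $\mathrm{poly}(1/n)$ gap, which you have not shown a $\mathrm{poly}(n)$-term sum of bounded-parameter exponentials can do; the paper sidesteps this entirely by reducing QDOS to QAC via error-reduced phase estimation.
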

	In the remainder of this section we prove the Theorem. In particular, we prove five Lemmas which establish polynomial-time reductions as depicted in Fig. \ref{fig:reductions}.

	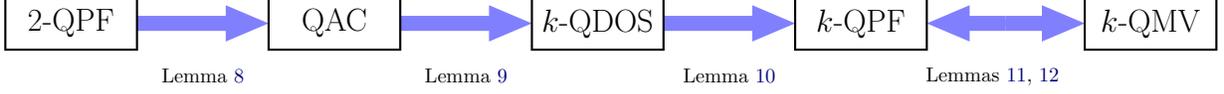
\begin{figure}
		\centering
		\begin{tikzpicture}[fill=blue!50,text opacity=1, ultra thick, scale = 0.7, transform shape,font=\Large]
			\tikzstyle{myarrows}=[line width=1mm,draw=blue!50, -triangle 45,postaction={draw, line width=2mm, shorten >=4mm, -}]
			
			\draw[myarrows]  (2.5,0.5)--(5,0.5);
			
			\draw[myarrows]  (7.5,0.5)--(10,0.5);
			
			\draw[myarrows]  (12.5,0.5)--(15,0.5);

			\draw[myarrows]  (19,0.5)--(20.5,0.5);
			\draw[myarrows]  (19,0.5)--(17.5,0.5);
			
			\draw (3.75,-0.5) node {\small Lemma \ref{lem:red1}};
			\draw (8.75,-0.5) node {\small Lemma \ref{lem:red2}};
			\draw (13.75,-0.5) node {\small Lemma \ref{lem:red3}};
			
			\draw (18.75,-0.5) node {\small Lemmas \ref{lem:red4}, \ref{lem:red5}};
			\draw[thick] (0,0) rectangle (2.5,1) node[pos=.5] {$2$-QPF};
			
			\draw[thick]   (5,0) rectangle (7.5,1) node[pos=.5] {QAC};
			\draw[thick] (10,0) rectangle (12.5,1) node[pos=.5] {$k$-QDOS};
			\draw[thick]  (15,0) rectangle (17.5,1) node[pos=.5] {$k$-QPF};
			
			\draw[thick]  (20.5,0) rectangle (23,1) node[pos=.5] {$k$-QMV};
			
		\end{tikzpicture}
		\caption{Polynomial-time reductions established in Section \ref{sec:equivalence}. An arrow $A\rightarrow B$ indicates that problem $B$ can be solved in polynomial time using an oracle which solves problem $A$.\label{fig:reductions}}
	\end{figure}
	
	\begin{lemma}\label{claim:kQPFsolvesQAC}
		QAC can be solved in polynomial time using an oracle which solves $2$-QPF.
		\label{lem:red1}
	\end{lemma}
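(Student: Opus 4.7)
My plan is to reduce QAC to $2$-QPF via a Kitaev-style circuit-to-Hamiltonian encoding, combined with a preliminary amplification step. The idea is first to turn the verifier's POVM element $A$ into one that is (close to) a projector onto the high-probability eigenspace, then encode the amplified verifier into a $2$-local Hamiltonian whose partition function, up to a known normalization, yields the desired count $N_\lambda$.

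Concretely, I would first apply standard amplitude amplification (or QSVT) to the verifier circuit $U$. Since $a-b = \Omega(1/\mathrm{poly}(n))$, this produces a new verifier $U'$ of $\mathrm{poly}(n)$ size whose POVM element $A'$ has the same eigenbasis as $A$ and satisfies $a'_i \ge 1-\epsilon$ whenever $a_i \ge a$, and $a'_i \le \epsilon$ whenever $a_i \le b$, for any $\epsilon = 1/\mathrm{poly}(n)$ we choose. Next I would apply a Feynman--Kitaev circuit-to-Hamiltonian construction to $U'$, producing a $2$-local Hamiltonian $H$ on the $n$ witness qubits, $n_a$ ancilla qubits, and a polynomial-size unary clock register. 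By Kitaev's analysis, with penalty coefficients $\Delta_\bullet = \mathrm{poly}(n)$, the low-energy eigenstates of $H$ are the history states $|\eta(v_i)\rangle$ for eigenvectors $|v_i\rangle$ of $A$, with eigenvalues $(1 - a'_i)/(T+1)$ where $T = \mathrm{poly}(n)$ is the depth of $U'$, and a spectral gap of $\Omega(1/\mathrm{poly}(n))$ separates this subspace from the rest. Strict $2$-locality without perturbation gadgets can be achieved using the clock and propagation techniques of \cite{kohler2020translationally, zhou2021strongly}.

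For the final step, I query the $2$-QPF oracle on $-\beta(H - \tfrac{1}{T+1} I)$ with inverse temperature $\beta$ chosen so that $\gamma := \beta/(T+1)$ satisfies $\gamma = \Theta(n + \log(1/\delta))$, and with $\epsilon = o(1/\gamma)$. The returned estimate $\tilde Z$ is a $\delta/3$-relative-error approximation to
\[
Z \;=\; \mathrm{Tr}\bigl(e^{-\beta(H - (T+1)^{-1} I)}\bigr) \;=\; \sum_i e^{\gamma a'_i} \,+\, (\text{non-history contribution})\,,
\]
where the non-history contribution is bounded by $2^{\mathrm{poly}(n)} e^{-\beta \cdot \Omega(1/\mathrm{poly}(n))}$, which is exponentially small in $n$. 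Splitting the main sum according to whether $a_i \ge a$, $a_i \in (b,a)$, or $a_i \le b$, and using the amplification guarantees $a'_i \ge 1-\epsilon$ or $a'_i \le \epsilon$ respectively, a direct calculation shows that
\[
N_a\, e^{\gamma}(1-o(1)) \;\le\; Z \;\le\; N_b\, e^{\gamma}(1+o(1)) + o(1).
\]
Setting $N := e^{-\gamma}\tilde Z$ then produces an estimate satisfying the QAC specification Eq.~\eqref{eq:qac}.

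The main obstacle will be realizing this Kitaev construction $2$-locally while preserving the spectral structure (both the low-energy eigenvalues and the $\Omega(1/\mathrm{poly}(n))$ gap), without resorting to perturbation gadgets; the modern constructions of \cite{kohler2020translationally, zhou2021strongly} handle this at polynomial overhead. A secondary technicality is that intermediate eigenvalues $a_i \in (b, a)$ contribute at most $|N_b - N_a| \cdot e^{\gamma}$, which is absorbed into the upper bound without violating it. The edge case $N_b = 0$ (where the output must be exactly zero) can be handled by an initial low-temperature QPF query testing whether $H$ has any eigenvalue below $(1-b)/(T+1)$, after which the main reduction is invoked only when an accepting witness exists.
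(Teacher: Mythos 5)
Your overall strategy — amplify the verifier, apply a $2$-local Kitaev circuit-to-Hamiltonian mapping, and extract the witness count from a partition function at a carefully chosen inverse temperature — is the same as the paper's (which uses Marriott–Watrous amplification followed by the Kempe–Kitaev–Regev $2$-local history Hamiltonian). However, two steps in your sketch have genuine gaps.

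First, the assertion that ``the low-energy eigenstates of $H$ are the history states $|\eta(v_i)\rangle$ \ldots with eigenvalues $(1-a_i')/(T+1)$'' is incorrect. History states are null vectors of the propagation, clock, and input-check terms, but they are \emph{not} eigenstates of the output-penalty term $H_{\mathrm{out}}$, and hence not eigenstates of the full Hamiltonian. One only knows $\langle\mathrm{hist}(\psi)|H|\mathrm{hist}(\psi)\rangle = \langle\psi|(I-A')|\psi\rangle$, which controls the \emph{restriction} of $H$ to the history subspace, not the actual spectrum. The paper avoids this altogether by invoking the KKR Projection Lemma: after scaling the non-output penalty terms by a sufficiently large polynomial factor, the minimum eigenvalue of $H_{\mathrm{hist}}+H_{\mathrm{out}}$ is lower-bounded by that of the restriction $H_{\mathrm{out}}|_{\mathcal{S}_{\mathrm{hist}}}$ minus a small constant, and everything orthogonal to $\mathrm{hist}(\mathcal{L}_b)$ has energy at least $1/2-b$, a \emph{constant}. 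Your argument quietly needs the same machinery.

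Second, your choice of $\beta$ is inconsistent with your gap claim. You take $\gamma=\beta/(T+1)=\Theta(n+\log(1/\delta))$, i.e.\ $\beta=\Theta(Tn)$, and assert a spectral gap of $\Omega(1/\mathrm{poly}(n))$. But the raw Kitaev construction's gap is of order $1/T^2$ or $1/T^3$, so $\beta\cdot\mathrm{gap}=\Theta(n/T^2)\to 0$: the non-history contribution $2^{q(n)}e^{-\beta\cdot\mathrm{gap}}$ is then of order $2^{q(n)}$ rather than $o(1)$, which overwhelms $N_a e^{\gamma}=N_a\cdot 2^{\Theta(n)}$ since $q(n)\geq T\gg n$. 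To fix this you must boost the penalty coefficients by a factor $\sim T^3/n$ (or, as in the paper, by enough that the effective gap is a constant like $1/2-b$); stating ``$\Omega(1/\mathrm{poly}(n))$'' and pointing to \cite{kohler2020translationally,zhou2021strongly} does not by itself deliver the required inequality $\beta\cdot\mathrm{gap}\gg q(n)$. One more minor point: you assume $A'$ has the same eigenbasis as $A$; the paper instead deduces $N_a\leq N'_{a'}$ and $N'_{b'}\leq N_b$ via the Courant–Fischer min-max theorem, which is a weaker and more robust claim, and sidesteps the question of whether your chosen amplification circuit literally preserves the eigenbasis.
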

	\begin{proof}
		Firstly note that using the standard Marriott-Watrous  in-place error reduction for the complexity class QMA \cite{marriott2005quantum} we may without loss of generality assume that $a$ is very close to $1$ and $b$ is very close to zero. That is, suppose we are given a verifier circuit with POVM $\{A,I-A\}$ and thresholds $a,b$ satisfying $a-b=\Omega(1/\mathrm{poly}(n))$. Marriott and Watrous show that for any constant $c\geq 1$ we can efficiently compute a new verifier circuit corresponding to an $n$-qubit POVM $\{A',I-A'\}$ such that
		\begin{align}
			\langle \psi|A|\psi\rangle &\geq a \rightarrow  \langle \psi|A'|\psi\rangle\geq 1-2^{-n^c}\label{eq:upperA}\\
			\langle \psi|A|\psi\rangle &\leq b \rightarrow  \langle \psi|A'|\psi\rangle\leq 2^{-n^c}.\label{eq:lowerA}
		\end{align}
		Let $a'=1-2^{-n^c}$ and $b'=2^{-n^c}$ and write $N'_{\lambda}$ for the number of witnesses accepted by $A'$ with probability at least $\lambda$ (cf.  Eq.~\eqref{eq:Nlamb}). From Eq.~\eqref{eq:upperA} we see that every state $\psi$ in the subspace $\mathcal{L}_a$, which has dimension $N_a$, satisfies $\langle \psi|A'|\psi\rangle\geq a'$. Therefore, by the Courant-Fischer min-max theorem, the operator $A'$ has at least $N_a$ eigenvalues larger than $a'$. In other words, 
		\begin{equation}
			N_a\leq N'_{a'}.
			\label{eq:naa}
		\end{equation}
		By a very similar reasoning, from Eq.~\eqref{eq:lowerA} we get 
		\begin{equation}
			N'_{b'}\leq N_b.
			\label{eq:nbb}
		\end{equation}
		Combining Eqs.~(\ref{eq:naa}, \ref{eq:nbb}) we see that we will achieve the desired estimate Eq.~\eqref{eq:qac} (for the given verification circuit with POVM $\{A,I-A\}$) if we can compute an estimate $N$ satisfying
		\begin{equation}
			(1-\delta) N'_{a'} \leq N\leq (1+\delta)N'_{b'}.
			\label{eq:qac2}
		\end{equation}
		for the new verification circuit with POVM $\{A', I-A'\}$. In the following we shall use the above with $c=1$ and assume WLOG that $a=1-2^{-n}$ and $b=2^{-n}$.
		
		We shall now make use of the $2$-local variant \cite{kempe2006complexity} of Kitaev's circuit-to-Hamiltonian mapping. This maps the given verifier circuit to a $2$-local Hamiltonian $H$ with the following properties.

		\begin{theorem}[Kempe, Kitaev, Regev \cite{kempe2006complexity}]
			\label{thm:KKR}
			Given a polynomial-size verifier circuit with $n$-qubit input, there exists an efficiently computable $2$-local Hamiltonian $H$ acting on $q(n)=\mathrm{poly}(n)$ qubits and an isometry $\mathrm{hist}: (\mathbb{C}^2)^{\otimes n}\rightarrow(\mathbb{C}^2)^{\otimes q(n)}$ such that $\langle\mathrm{hist}(\psi)|H|\mathrm{hist}(\psi)\rangle\geq 0$ for all $\psi\in (\mathbb{C}^2)^{\otimes n} $, and
			\begin{enumerate}[(i)]
				\item{Suppose $\psi$ is an $n$-qubit state that is accepted with probability at least $a$ by the verifier. Then 
					\begin{equation}
						\langle\mathrm{hist}(\psi)|H|\mathrm{hist}(\psi)\rangle\leq 1-a.
						\label{eq:enhist}
					\end{equation}
				}
				\item{Let $\mathcal{W}\subseteq (\mathbb{C}^2)^{\otimes n}$ be a subspace such that all states in $\mathcal{W}$ are accepted with probability at most $b$ by the verifier. Let $\mathcal{R}=\mathrm{hist}(\mathcal {W}^{\perp})$. Then
					\[
					\min_{\phi \in \mathcal{R}^{\perp}} \langle \phi|H|\phi\rangle\geq 1/2-b.
					\]
				}
			\end{enumerate}
			\label{thm:kkr}
		\end{theorem}
		In Appendix \ref{sec:kkr} we describe how Theorem \ref{thm:kkr} is a simple consequence of Lemma 3 from Ref. \cite{kempe2006complexity} and we also provide a description of the isometry $\mathrm{hist}$. Informally, this is the mapping from a given $n$-qubit input state $\psi$ to the ``history state" which encodes the history of the quantum computation in which the verifier circuit is applied gate-by-gate to $\psi$. 
		
		Using part (i) of Theorem \ref{thm:kkr} with $a=1-2^{-n}$ we see that there is a subspace $\mathcal{S}=\mathrm{hist}(\mathcal{L}_a)$ of dimension $N_a$ such that the restriction of $H$ to this subspace has norm upper bounded as $\|H|_{\mathcal{S}}\|\leq 2^{-n}$. Therefore
		\begin{equation}
			\mathrm{Tr}[e^{-\beta H}]\geq N_{a}e^{-\beta/2^{n}} \qquad \text{for all} \quad \beta\geq 0.
			\label{eq:nm0}
		\end{equation}
		
		Next we use part (ii) of Theorem \ref{thm:kkr} with the subspace $\mathcal{W}=\mathcal{L}_b^{\perp}$, so that $\mathcal{R}=\mathrm{hist}(\mathcal{L}_b)$. The theorem implies that the smallest eigenvalue of the restriction $H|_{\mathcal{R}^{\perp}}$ is at least $1/2-b$.  Let $B_{\mathcal{R}}$  be an orthonormal basis of $\mathcal{R}$ and let $B_{\mathcal{R}^{\perp}}$ be an orthonormal basis of $\mathcal{R}^{\perp}$. For any $\beta\geq 0$ we see that
		\begin{equation}
			\mathrm{Tr}[e^{-\beta H}]\leq \sum_{\psi \in B_{\mathcal{R}}} \langle \psi|e^{-\beta H}|\psi\rangle+\sum_{\psi \in B_{\mathcal{R^{\perp}}}} \langle \psi|e^{-\beta H}|\psi\rangle\leq N_{b}+2^{q(n)} e^{-\beta(1/2-b)}\leq N_{b}+2^{q(n)} e^{-\beta/4}
			\label{eq:nm}
		\end{equation}
		where we used part (ii) of Theorem \ref{thm:kkr} for the second term. For the first term we used the fact that $\mathrm{dim}(\mathcal{R})=N_b$, and that the restriction $H|_\mathcal{R}$ is positive semidefinite which follows from  $\langle\mathrm{hist}(\psi)|H|\mathrm{hist}(\psi)\rangle\geq 0$ for all $\psi$.
		
		Let us now fix $\beta=\mathrm{poly}(n)$ so that
		\begin{equation}
			2^{q(n)} e^{-\beta/4}\leq \delta/4 \qquad \text{and} \qquad  e^{-\beta/2^{n}}\geq 1-\delta/4.
			\label{eq:betachoice}
		\end{equation}
		Using our QPF oracle we can compute an estimate $Z$ such that
		\begin{equation}
			(1-\delta/4)\mathrm{Tr}[e^{-\beta H}]\leq Z\leq \mathrm{Tr}[e^{-\beta H}](1+\delta/4)
			\label{eq:nm2}
		\end{equation}
		Combining the above with Eqs.~(\ref{eq:nm0},\ref{eq:nm},\ref{eq:betachoice}) gives
		\begin{equation}
			(1-\delta/4)^2N_a\leq Z\leq (N_b+\delta/4)(1+\delta/4).
			\label{eq:nm3}
		\end{equation}
		
		Now consider two cases depending on the value of our estimate $Z$. Firstly suppose $Z\leq 1/2$. The lower bound in Eq.~\eqref{eq:nm3} implies that $N_a=0$ in this case (since $\delta\in (0,1)$ and $N_a$ is a nonnegative integer) so we obtain an exact solution to the given instance of QAC in this case by outputting $N=0$.
		
		Next suppose $Z\geq 1/2$. Then we output $N=Z$. The upper bound in Eq.~\eqref{eq:nm3} implies that $N_b\neq 0$ and thus $N_b\geq 1$. Substituting $\delta/4\leq N_b\delta/4$ in Eq.~\eqref{eq:nm3} then gives
		\begin{equation}
			(1-\delta)N_a\leq (1-\delta/4)^2 N_a \leq N\leq N_b(1+\delta/4)^2\leq N_b(1+\delta)
			\label{eq:nm4}
		\end{equation}
		We have shown that in either case our estimate $R$ satisfies Eq.~\eqref{eq:qac}.
	\end{proof}
	\begin{lemma}
		$k$-QDOS can be solved in polynomial time using an oracle for QAC.
		\label{lem:red2}
	\end{lemma}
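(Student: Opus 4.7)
The plan is to reduce $k$-QDOS to QAC by constructing a polynomial-size verifier circuit whose acceptance operator $A$ is (approximately) diagonal in the eigenbasis of $H$ with carefully chosen eigenvalues: close to $1$ when the corresponding energy lies in $[a,b]$ and close to $0$ when it lies outside $[a-\epsilon,b+\epsilon]$. The natural tool for producing such an $A$ is quantum phase estimation applied to $e^{-2\pi i H}$.

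First I would rescale $H$ so that its eigenvalues lie in $[0,1)$. Since $H$ is $k$-local with polynomial-sized description and polynomially-bounded coefficients, we have $\|H\|\le\mathrm{poly}(n)$, and the shift-and-scale $H\mapsto (H+\|H\|I)/(2\|H\|+1)$ yields an equivalent QDOS instance whose thresholds $a',b'$ and precision $\epsilon'$ are $\mathrm{poly}(1/n)$-scaled versions of the originals. I would then build a verifier circuit which, on $n$-qubit input $|\psi\rangle$, adjoins $n_a=O(\log n)$ ancilla qubits, performs phase estimation of $e^{-2\pi i H}$ to $p=\log(8/\epsilon')+O(\log(1/\gamma))$ bits of precision for some $\gamma=1/\mathrm{poly}(n)$ to be chosen, and finally uses reversible classical subcircuits on the phase register to flip the output qubit iff the phase estimate lies in $[a'-\epsilon'/2,b'+\epsilon'/2]$. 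The controlled powers $e^{-2\pi i H\cdot 2^j}$ required for QPE admit polynomial-time Hamiltonian simulation because $H$ is $k$-local with $\|H\|\le\mathrm{poly}(n)$. The key structural observation is that the verifier unitary $U$ commutes with $H\otimes I_{\mathrm{anc}}$ (each gate is either a controlled power of $e^{-2\pi i H}$ or acts only on the ancillas), so $A$ is diagonal in the eigenbasis $\{|u_i\rangle\}$ of $H$, and its eigenvalues are precisely the single-eigenstate acceptance probabilities $A_{ii}=\langle u_i|A|u_i\rangle$.

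By the standard analysis of quantum phase estimation, for each eigenvector $|u_i\rangle$ with eigenvalue $e_i$ the phase estimate lies within $\epsilon'/4$ of $e_i$ with probability at least $1-\gamma$. Combined with the width-$\epsilon'/2$ padding of the classical acceptance window, this gives $A_{ii}\ge 1-\gamma$ whenever $e_i\in[a,b]$ and $A_{ii}\le\gamma$ whenever $e_i\notin[a-\epsilon,b+\epsilon]$, while for $e_i$ in the two boundary strips the value of $A_{ii}$ may be arbitrary in $[0,1]$. Choosing $\gamma<1/3$, I would invoke the QAC oracle on this verifier with thresholds $\alpha=2/3$, $\beta=1/3$, which satisfy $\alpha-\beta=\Omega(1)$. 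The resulting estimate $N$ obeys $(1-\delta)N_\alpha\le N\le(1+\delta)N_\beta$, and the bounds on $A_{ii}$ give $m_{[a,b]}\le N_{2/3}$ and $N_{1/3}\le m_{[a-\epsilon,b+\epsilon]}$; hence $(1-\delta)m_{[a,b]}\le N\le(1+\delta)m_{[a-\epsilon,b+\epsilon]}$, which is exactly the QDOS specification. The only delicate point will be ensuring that the QPE discretization does not let eigenvectors with $e_i$ deep inside $[a,b]$ or deep outside $[a-\epsilon,b+\epsilon]$ evade these bounds; this is handled by choosing the phase-register precision well below $\epsilon'/4$ and padding the classical acceptance window as indicated, so the main obstacle is really just routine error bookkeeping around the QPE tail distribution.
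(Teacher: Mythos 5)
Your proposal is correct and uses the same core tool as the paper (a verifier built from quantum phase estimation applied to $e^{\pm i H}$), but the analysis is genuinely different and cleaner. The decisive observation you make --- that every gate in the QPE verifier either acts only on ancillas or is a controlled power of $e^{-2\pi i H}$, so the verifier unitary commutes with $H\otimes I$ and the acceptance operator $A$ is simultaneously diagonalizable with $H$ --- is a real structural simplification the paper does not exploit. It lets you conclude directly that the $A$-eigenvalue attached to an $H$-eigenvector with $e_i\in[a,b]$ is $\geq 1-\gamma$ and that attached to one with $e_i\notin[a-\epsilon,b+\epsilon]$ is $\leq\gamma$, so $m_{[a,b]}\leq N_{2/3}$ and $N_{1/3}\leq m_{[a-\epsilon,b+\epsilon]}$ follow immediately with a constant QPE failure probability $\gamma<1/3$ and constant QAC thresholds. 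The paper does not assume $A$ commutes with $H$; instead it uses the operator sandwich $\lambda\Pi_\lambda\leq A\leq\Pi_\lambda+(I-\Pi_\lambda)\lambda$ (its Eq.~\eqref{eq:Aopineq}), which in the upper-bound direction produces a stray additive term $4\delta^{-1}f\cdot 2^n$ coming from the $2^n$ eigenvectors outside $[a-\epsilon,b+\epsilon]$. Killing that term forces the paper to drive the per-eigenstate failure probability down to $f=\Theta(\delta^2/2^n)$ via a median-of-$O(n)$-repetitions amplification, to use tiny thresholds $\delta/2,\delta/4$ in the QAC call, and to do a case split on whether $N<1-\delta$. Your route dispenses with the amplification, the exponentially small $f$, and the case split.

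One step you gloss over: the controlled powers of $e^{-2\pi i H}$ are realized by Hamiltonian simulation, which is approximate, so the verifier commutes with $H\otimes I$ only up to the simulation error $\epsilon_{\mathrm{sim}}$ and $A$ is only approximately diagonal in the $H$-eigenbasis. This is easily patched by a Courant--Fischer perturbation argument: the subspace spanned by $H$-eigenvectors with $e_i\in[a,b]$ has all diagonal values of the idealized $A$ at least $1-\gamma$, so the implemented $A$ has at least $m_{[a,b]}$ eigenvalues $\geq 1-\gamma-\epsilon_{\mathrm{sim}}$, and similarly on the other side; since $1-\gamma$ and $\gamma$ are separated from $2/3$ and $1/3$ by a constant gap, a small \emph{constant} $\epsilon_{\mathrm{sim}}$ suffices. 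You should say this. It is also a further point in favor of your approach: the paper's argument implicitly needs the simulation error to be below its exponentially small $f$, which is achievable (gate cost scales as $\mathrm{polylog}(1/\epsilon_{\mathrm{sim}})$) but a far more stringent demand.
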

	\begin{proof}
		Suppose we are given an instance of $k$-QDOS specified by a $k$-local Hamiltonian $H$, precision parameters $\delta, \epsilon$ and real numbers $a,b$. Without loss of generality we shall assume that $0\leq H< I$, which can always be achieved by shifting and scaling the given local Hamiltonian. For convenience we also assume $\delta\leq 1/2$.
		
		We shall first construct a polynomial-sized quantum verifier circuit $V$ which measures the energy of $H$ to within additive error $\epsilon/2$ and then flips an output bit (`accepts')  if the measured eigenvalue is in $[a-\epsilon/2, b+\epsilon/2]$. The verifier circuit may fail with probability at most $f=\Theta(2^{-n}/\mathrm{poly}(n))$ (we fix it below). That is, $f$ is an upper bound on the probability that, either (A) given as input an eigenstate $\psi$ of $H$ with energy $e\in [a,b]$, the verifier rejects, or (B) given as input an eigenstate $\psi'$ of $H$ with energy $e\notin [a-\epsilon,b+\epsilon]$, the verifier accepts.
		
		The verifier circuit $V$ is constructed using quantum phase estimation applied to the unitary operator $U=e^{i H}$, as described in Ref. \cite{nielsenchuang} with target precision $\epsilon/2$ and failure probability at most $1/3$. In order to reduce the failure probability to the small value $f$ we repeat the phase estimation circuit $O(n)$ times sequentially, computing the phase into $O(n)$ different registers, and then the median of these estimates is used as the output\footnote{This error reduction procedure reduces the failure probability exponentially in the number of registers. To see this, observe that if the circuit erred on the median of the estimates, then it must have erred on at least $1/2$ of the estimates. Also note that this is a different method for reducing the failure probability than the one given in Ref. \cite{nielsenchuang}.}.   Note that the phase estimation circuit, and thus also the circuit $V$, has size $O(\mathrm{poly}(n))$. Indeed, the total number of ancilla qubits used by the algorithm is $t=O(\log(n))$ and the algorithm uses $O(t^2)$ two-qubit gates along with one gate $CU$ which acts on $t+n$ qubits as $CU|j\rangle|\phi\rangle=|j\rangle  U^j|\phi\rangle$ for $0\leq j\leq 2^t-1$. The latter gate can be implemented using a $\mathrm{poly}(n)$ sized quantum circuit using well-known Hamiltonian simulation algorithms (see e.g. Ref \cite{lloyd1996universal, low2017optimal}).
		
		Let $A$ be the POVM element corresponding to measurement outcome `accept' for the above verifier circuit $V$. Recall that for $\lambda\in [0,1]$ we write $N_{\lambda}$ for the dimension of the subspace spanned by eigenvectors of $A$ with eigenvalues $\geq \lambda$, and $\Pi_{\lambda}$ for the projector onto this subspace, so that $N_{\lambda}=\mathrm{Tr}(\Pi_{\lambda})$. Note that for any $\lambda\in [0,1]$ we have the operator inequalities
		\begin{equation}
			\lambda \Pi_{\lambda}\leq A\leq \Pi_{\lambda}+(I-\Pi_{\lambda})\lambda.
			\label{eq:Aopineq}
		\end{equation}

		For each eigenvector $\psi$ of $H$ with eigenvalue in the interval $[a,b]$, the verifier circuit accepts with probability at least $1-f$, that is:
		\[
		1-f\leq \langle \psi|A|\psi\rangle\leq \langle \psi| \Pi_{\lambda}+(I-\Pi_{\lambda})\lambda|\psi\rangle,
		\]
		for any $\lambda\in [0,1]$, where we used the upper bound from Eq.~\eqref{eq:Aopineq}. Choosing $\lambda=\delta/2$ and upper bounding $\langle \psi|(I-\Pi_{\delta/2})|\psi\rangle \leq 1$ gives
		\[
		\langle \psi|\Pi_{\delta/2}|\psi\rangle\geq 1-f-\delta/2.
		\]
		Since the above holds for each vector $\psi$ in an orthonormal basis consisting of $m_{[a,b]}$ eigenvectors of $H$ with eigenvalues in $[a,b]$, we get
		\begin{equation}
			N_{\delta/2}=\mathrm{Tr}(\Pi_{\delta/2})\geq (1-f-\delta/2) m_{[a,b]}.
			\label{eq:lower}
		\end{equation}
		Next suppose $\psi'$ is any eigenvector of $H$ with eigenvalue $e$ satisfying $e\notin [a-\epsilon, b+\epsilon]$. Then the phase estimation errs if it accepts $\psi'$, that is
		\[
		f\geq \langle \psi'|A|\psi'\rangle\geq \lambda  \langle \psi'|\Pi_{\lambda}|\psi'\rangle
		\]
		for any $\lambda\in [0,1]$, where we used the lower bound from Eq.~\eqref{eq:Aopineq}. Choosing $\lambda=\delta/4$ and rearranging gives
		\begin{equation}
			\langle \psi'|\Pi_{\delta/4}|\psi'\rangle \leq \frac{4f}{\delta}
			\label{eq:psiprime}
		\end{equation}
		and therefore
		\begin{equation}
			N_{\delta/4}=\mathrm{Tr}(\Pi_{\delta/4})\leq m_{[a-\epsilon,b+\epsilon]}+4\delta^{-1}f\cdot 2^n. 
			\label{eq:upper}
		\end{equation}
		Here we used Eq.~\eqref{eq:psiprime} and the naive upper bound of $2^n$ on the number of eigenvectors of $H$ with eigenvalue outside the interval $[a-\epsilon, b+\epsilon]$.
		
		Now let us use our oracle for QAC to solve the instance defined by the verifier circuit $V$, $\delta'=\delta/4$, $a=\delta/2$ and $b=\delta/4$. This gives an estimate $N$ satisfying
		\[
		(1-\delta/4)N_{\delta/2}\leq N\leq (1+\delta/4) N_{\delta/4}
		\]
		Substituting Eqs.~(\ref{eq:lower}, \ref{eq:upper}) we arrive at
		\begin{equation}
			(1-\delta/4)(1-f-\delta/2) m_{[a,b]}\leq N\leq (1+\delta/4)\left(m_{[a-\epsilon,b+\epsilon]}+4\delta^{-1}f\cdot 2^n\right)
			\label{eq:nest}
		\end{equation}
		Now choose 
		\[
		f=\frac{\delta^2}{16\cdot 2^n}.
		\]
		Substituting in Eq.~\eqref{eq:nest} and using the loose lower bound $(1-f-\delta/2)(1-\delta/4)\geq 1-\delta$ gives
		\begin{equation}
			(1-\delta) m_{[a,b]}\leq N\leq (1+\delta/4)\left(m_{[a-\epsilon,b+\epsilon]}+\delta/4\right)
			\label{eq:nbound}
		\end{equation}
		Now we describe our estimate $m$ which solves the given instance of $k$-QDOS. We consider two cases. Firstly, if $N< (1-\delta)$ then Eq.~\eqref{eq:nbound}  implies $m_{[a,b]}=0$ and we output $m=0$, which satisfies Eq.~\eqref{eq:qdos} in this case. If instead $N\geq (1-\delta)$ then from the upper bound Eq.~\eqref{eq:nbound} we see that $m_{[a-\epsilon,b+\epsilon]}\geq 1$. Indeed, otherwise, if $m_{[a-\epsilon,b+\epsilon]}=0$ then from Eq.~\eqref{eq:nbound} we would conclude $1-\delta\leq N \leq (1+\delta/4)\delta/4$ which is not possible for any $\delta\in [0,1/2]$. Since $m_{[a-\epsilon,b+\epsilon]}\geq 1$ we have $\delta/4\leq (\delta/4) m_{[a-\epsilon,b+\epsilon]}$ and substituting on the right-hand-side of Eq.~\eqref{eq:nbound} gives
		\begin{equation}
			(1-\delta) m_{[a,b]}\leq N\leq (1+\delta/4)^2m_{[a-\epsilon,b+\epsilon]}\leq (1+\delta)m_{[a-\epsilon,b+\epsilon]}.
			\label{eq:nbound2}
		\end{equation}
		So in this case we output $m=N$ and the above shows that it satisfies Eq.~\eqref{eq:qdos} as desired.
	\end{proof}
	\begin{lemma}
		$k$-QPF can be solved in polynomial time using an oracle for $k$-QDOS.
		\label{lem:red3}
	\end{lemma}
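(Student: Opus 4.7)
The plan is to reduce $k$-QPF to $k$-QDOS by expressing $\mathrm{Tr}(e^H)$ as an integral of the eigenvalue survival function $F(t) := |\{i : e_i > t\}|$ and then discretizing via a Riemann sum whose terms are each a single QDOS query. The starting identity is the Fubini/integration-by-parts formula
\begin{equation*}
\mathrm{Tr}(e^H) \;=\; 2^n e^{-M} \;+\; \int_{-M}^{M} F(t)\, e^t\, dt,
\end{equation*}
which holds for any $M \geq \|H\|$ (and we may take $M = \mathrm{poly}(n)$ since $H$ is $k$-local with $\|H_S\| \leq \mathrm{poly}(n)$). The algorithm picks a mesh $w = \Theta(\delta)$, sets $t_k = -M + kw$ for $k = 0, \dots, K := \lceil 2M/w\rceil$, queries the QDOS oracle on each interval $[t_k, M]$ with precision parameters $\delta', \epsilon' = \Theta(\delta)$ to obtain estimates $\tilde{F}(t_k)$, and outputs
\begin{equation*}
\tilde{Z} \;=\; 2^n e^{-M} \;+\; \sum_{k=0}^{K-1} \tilde{F}(t_k)\bigl(e^{t_{k+1}} - e^{t_k}\bigr).
\end{equation*}
The number of oracle calls is $K+1 = \mathrm{poly}(n)$ and the postprocessing is cheap, so the reduction runs in polynomial time.

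The analysis rests on the clean identity $\sum_k F(t_k)(e^{t_{k+1}} - e^{t_k}) = \sum_i e^{t_{K(i)+1}} - 2^n e^{-M}$, where $t_{K(i)+1}$ is the smallest grid point $\geq e_i$ (so $t_{K(i)+1} \in [e_i, e_i + w)$). Hence the \emph{exact} left Riemann sum lies in $[\mathrm{Tr}(e^H) - 2^n e^{-M},\, e^w \mathrm{Tr}(e^H) - 2^n e^{-M}]$, giving a quadrature error of order $w$. Using the QDOS guarantee $(1-\delta')F(t_k) \leq \tilde{F}(t_k) \leq (1+\delta')F(t_k - \epsilon')$ (since $F(t_k) = m_{[t_k, M]}$ and the expanded QDOS interval corresponds to $F(t_k - \epsilon')$), the shifted sum $\sum_k F(t_k - \epsilon')(e^{t_{k+1}} - e^{t_k})$ telescopes in the same way and is bounded by $e^{w + \epsilon'} \mathrm{Tr}(e^H) - 2^n e^{-M}$. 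Combining, I expect
\begin{equation*}
(1 - \delta')\,\mathrm{Tr}(e^H) \;\leq\; \tilde{Z} \;\leq\; (1+\delta')\, e^{w+\epsilon'}\, \mathrm{Tr}(e^H),
\end{equation*}
which yields the required relative error $\delta$ provided $\delta' + w + \epsilon' = O(\delta)$.

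The main thing to watch is the interaction between the $\epsilon'$-widening in the QDOS guarantee and the Riemann discretization. A naive alternative — partition the spectrum into adjacent buckets and sum $\tilde{m}_j \cdot e^{a_j}$ — does not work because an eigenvalue within $\epsilon'$ of a bucket boundary can lie in \emph{both} adjacent $\epsilon'$-expanded intervals, producing a worst-case factor-of-two overcounting that is incompatible with a relative-error guarantee no matter how small $\epsilon'/w$ is. The integration-by-parts formulation sidesteps this because $F$ is monotone and each eigenvalue contributes to a \emph{telescoping} partial sum that collapses to a single term $e^{t_{K(i)+1}}$; the only error is the horizontal shift of at most $w + \epsilon'$ in the location of that term, which translates directly into the multiplicative factor $e^{w+\epsilon'}$. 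Verifying this telescoping bound and propagating the $(1 \pm \delta')$ QDOS tolerances through the sum is routine; the main conceptual point is the switch from bucket counts to the survival function.
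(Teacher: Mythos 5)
Your proof is correct, and it takes a genuinely different route from the paper's. The paper does precisely the ``naive bucketing'' you dismiss: it partitions $[0,1)$ into $T=4\beta$ equal sub-intervals $I_j$, calls the QDOS oracle once per bucket, and forms $Z=\sum_j m_j e^{-\beta(j-1)/T}$. You are right that this suffers from double counting near bucket boundaries — the paper's analysis only delivers a relative error of $\delta=0.8$, not an arbitrarily small one. The paper then recovers the required $\delta=\mathrm{poly}(1/n)$ guarantee via a separate amplification step stated at the start of Section~\ref{sec:QPF}: replace $H$ by $L=\mathrm{poly}(n)$ non-interacting copies, approximate $\mathrm{Tr}(e^{-\beta G})=\zee^L$ to constant relative error, and take the $L$-th root, which squashes the relative error by a factor of $L$. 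So the overcounting you identified is real, but the paper sidesteps it by amplification rather than by avoiding it. Your survival-function formulation $\mathrm{Tr}(e^H)=2^ne^{-M}+\int_{-M}^{M}F(t)e^t\,dt$ with the telescoping left Riemann sum is a more delicate reduction that achieves small relative error directly, at the cost of needing $O(M/\delta)=\mathrm{poly}(n)$ oracle calls on nested intervals $[t_k,M]$ rather than $O(\beta)$ calls on short disjoint buckets. Both trade-offs are fine for a polynomial-time reduction; yours avoids amplification (and hence avoids blowing up the system size and re-deriving $\beta$-dependent bounds for the tensored Hamiltonian), while the paper's is shorter because it can settle for a crude constant-factor bound. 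One small notational slip to fix: with $F(t)=|\{i:e_i>t\}|$ and $t_{K(i)}\le e_i<t_{K(i)+1}$, the collapsed term satisfies $t_{K(i)+1}\in(e_i,\,e_i+w]$, not $[e_i,\,e_i+w)$; this changes nothing downstream since the upper bound $e^{t_{K(i)+1}}\le e^{w}e^{e_i}$ is what you use.
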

	\begin{proof}
		Suppose we are given a $k$-local, $n$-qubit Hamiltonian $H$. We assume for convenience that $H$ is normalized so that $0\leq H< I$ and in the following we give an algorithm that produces an estimate of $Z(\beta)=\mathrm{Tr}(e^{-\beta H})$ for a given $\beta=O(\mathrm{\mathrm{poly}(n)})$. We will show that the estimate satisfies
		\begin{equation}
			(1-\delta)Z(\beta)\leq Z\leq (1+\delta) Z(\beta)
			\label{eq:estsat}
		\end{equation}
		for a constant value $\delta=0.8$. This can be amplified to the given value $\delta=1/\mathrm{poly}(n)$ as described at the beginning of Section \ref{sec:QPF}.

		Divide up the unit interval as $[0,1)=I_1\cup I_2\cup\ldots I_T$ where 
		\begin{align}
			I_j=[(j-1)/T, j/T).
		\end{align}
		Here $T=O(\mathrm{poly}(n))$ will be chosen later. We are going to use the oracle for $k$-QDOS to estimate the number of eigenvalues $m_{I_j}$ of $H$ within each of the intervals $I_j$. In particular, for each $j$ we solve the instance of  $k$-QDOS with $\delta'=0.01$, $a=(j-1)/T, b=j/T$ and $\epsilon=j/2T$. This gives estimates $m_j$ satisfying
		\[
		(0.99)m_{[\frac{j-1}{T},\frac{j}{T}]}\leq m_j\leq (1.01)m_{[\frac{j}{T}-\frac{3}{2T}, \frac{j}{T}+\frac{1}{2T}]}
		\]
		Noting that 
		\[
		\left[\frac{j}{T}-\frac{3}{2T}, \frac{j}{T}+\frac{1}{2T}\right]\subset I_{j-1}\cup I_{j}\cup I_{j+1} \qquad \text{and}\qquad I_j\subset \left[\frac{j-1}{T}, \frac{j}{T}\right]
		\]
		we arrive at
		\begin{equation}
			(0.99)m_{I_j}\leq m_j\leq (1.01)(m_{I_{j-1}}+m_{I_j}+m_{I_{j+1}}).
			\label{eq:mjbnd}
		\end{equation}
		Define
		\[
		Z=\sum_{j=1}^{T} m_j e^{-\beta(j-1)/T}.
		\]
		We have
		\[
		(0.99)Z(\beta) \leq \sum_{j=1}^{T} (0.99)m_{I_j} e^{-\beta (j-1)/T}\leq Z 
		\]
		where in the last inequality we used the lower bound from Eq.~\eqref{eq:mjbnd}. Using the upper bound from Eq.~\eqref{eq:mjbnd} gives 
		\begin{equation}
			Z \leq \sum_{j=1}^{T} (m_{I_{j-1}}+m_{I_j}+m_{I_{j+1}})(1.01)e^{-\beta(j-1)/T} \leq (1.01)\left(1+e^{\beta/T}+e^{2\beta/T}\right) Z(\beta)
		\end{equation}
		We choose $T=4\beta$ which ensures $(1.01)(1+e^{\beta/T}+e^{2\beta/T})\leq 4$ and therefore
		\[
		(0.99/4)Z(\beta)\leq Z/4\leq Z(\beta)
		\]
		Our estimate of the partition function is $Z/4$ and the above shows that it satisfies Eq.~\eqref{eq:estsat} with $\delta=1-0.99/4\leq 0.8$. As noted previously, this can be amplified to $\delta=\mathrm{poly}(n)^{-1}$.
	\end{proof}

	Next we show that the problem QPF of multiplicatively approximating  quantum partition functions
	of local Hamiltonians  is equivalent 
	to the problem QMV of additively approximating mean values of local observables
	on the thermal Gibbs state. The following lemmas provide a reduction from QMV to QPF and vice versa.
	\begin{lemma}
		$k$-QMV can be solved in time $poly(n)$ using an oracle which solves $k$-QPF.
		\label{lem:red4}
	\end{lemma}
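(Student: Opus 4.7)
The plan is to estimate the Gibbs expectation $\mu = \mathrm{Tr}(Pe^H)/\mathrm{Tr}(e^H)$ as a finite-difference derivative of the log-partition function of a perturbed Hamiltonian. Define $H(t) = H + tP$, which is still $k$-local because the Pauli observable $P$ acts on at most $k$ qubits, and let $Z(t) = \mathrm{Tr}(e^{H(t)})$ and $F(t) = \log Z(t)$. The Duhamel identity $\tfrac{d}{dt} e^{H(t)} = \int_0^1 e^{sH(t)} P e^{(1-s)H(t)}\,ds$ together with cyclicity of the trace gives $Z'(0) = \mathrm{Tr}(Pe^H)$, so that $F'(0) = \mu$.

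The reduction first bounds the curvature of $F$. Differentiating once more,
\[
Z''(t) \;=\; \int_0^1 \mathrm{Tr}\!\left( P\, e^{sH(t)}\, P\, e^{(1-s)H(t)} \right) ds,
\]
and diagonalizing $H(t) = \sum_i \lambda_i |i\rangle\!\langle i|$ to write the integrand as $\sum_{i,j} |P_{ij}|^2 e^{s\lambda_i + (1-s)\lambda_j}$, the convexity inequality $e^{s\lambda_i + (1-s)\lambda_j} \le s e^{\lambda_i} + (1-s) e^{\lambda_j}$ yields $|Z''(t)/Z(t)| \le \|P\|^2 = 1$. Since also $|Z'(t)/Z(t)| = |\mathrm{Tr}(P e^{H(t)})/Z(t)| \le \|P\| = 1$, we conclude $|F''(t)| = |Z''/Z - (Z'/Z)^2| \le 2$ uniformly in $t$.

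With this curvature bound, Taylor's theorem gives $|F(t) - F(0) - t\mu| \le t^2$. I would then query the $k$-QPF oracle on $H$ and on $H(t)$ with relative precision $\delta$, obtaining $\tilde Z(0)$ and $\tilde Z(t)$ with $|\log\tilde Z(s) - \log Z(s)| \le 2\delta$ for $s \in \{0,t\}$. The natural estimator is $\tilde\mu = (\log\tilde Z(t) - \log\tilde Z(0))/t$, whose total error is bounded by $t + 4\delta/t$. Choosing $t = \epsilon/2$ and $\delta = \epsilon^2/16$ makes both contributions at most $\epsilon/2$, and since $\epsilon = \mathrm{poly}(1/n)$ the required $\delta$ is likewise $\mathrm{poly}(1/n)$ and admissible for the QPF oracle. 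Two oracle calls suffice.

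I expect the main obstacle to be the curvature bound $|F''(t)| \le 2$: without it, the finite-difference truncation error cannot be controlled by shrinking $t$, and the bound $|F'(t)| \le 1$ alone gives no useful handle on $F(t) - F(0) - t\mu$. The convexity inequality applied in the eigenbasis of $H(t)$ inside the Duhamel representation of $Z''$ is the only nontrivial analytic ingredient; everything else — $k$-locality of $H(t)$, propagation of relative errors through $\log$, and the choice of $t$ and $\delta$ — is routine bookkeeping.
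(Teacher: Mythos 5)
Your proof is correct and follows the same high-level strategy as the paper — estimate $\mu = \mathrm{Tr}(Pe^H)/\mathrm{Tr}(e^H)$ as a finite-difference approximation to the derivative of the log-partition function of $H + tP$, control the truncation error via the Duhamel formula for the second derivative, and propagate the oracle's relative error into the final additive error. However, two of your technical choices are genuinely different and arguably cleaner than what the paper does. First, you work directly with $F(t)=\log Z(t)$ and bound $|F''(t)|\le 2$ via $|Z''/Z|\le 1$ and $|Z'/Z|\le 1$, then use the estimator $(\log\tilde Z(t)-\log\tilde Z(0))/t$; the paper instead uses the raw difference quotient $(\tilde Z(\epsilon)-\tilde Z(0))/(\epsilon\tilde Z(0))$, which forces a somewhat more delicate chain of inequalities (their Eqs.~(\ref{diff_bound1})--(\ref{diff_bound2})) to control the denominator's fluctuation. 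Working on the log scale makes the error propagation a one-liner and avoids the intermediate bound $\tilde\mu\le 2$. Second, your bound $|Z''(t)|\le Z(t)$ comes from diagonalizing $H(t)$ inside the (single-integral) Duhamel representation and applying the elementary convexity of $e^x$ together with $\sum_j |P_{ij}|^2 = \langle i|P^2|i\rangle = 1$; the paper instead invokes the trace inequality $\max_{U,V}|\mathrm{Tr}(BUAV)| = \sum_j\lambda_j(A)\lambda_j(B)$ from Bhatia's book. Your argument is self-contained and more elementary, at the cost of a factor-of-two weaker bound on $Z''$ (the paper gets $|Z''|\le Z/2$), which is immaterial to the final parameter choices. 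Both approaches rely on $P$ being Pauli, hence unitary with $\|P\|=1$, so neither generalizes for free to arbitrary local observables.
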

	\begin{proof}
		Let 
		\[
		\mu =  \frac{\mathrm{Tr}( P e^H)}{\mathrm{Tr}(e^H)} 
		\]
		be the desired mean value. Define a function 
		\be
		\label{Z1}
		Z(\epsilon) =\mathrm{Tr}(e^{H+\epsilon P}).
		\ee
		Note that $H+\epsilon P$ is a $k$-local Hamiltonian. 
		Let $\delta=\epsilon^2/100$.
		Call the QPF oracle  to obtain estimates $\tilde{Z}(0)$ and $\tilde{Z}(\epsilon)$ satisfying
		\be
		\label{oracle}
		|\tilde{Z}(0) - Z(0)|\le \delta Z(0) \quad \mbox{and} \quad
		|\tilde{Z}(\epsilon) - Z(\epsilon)|\le \delta Z(\epsilon).
		\ee
		Define our estimate of the mean value $\mu$ as 
		\be
		\label{tilde_mu}
		\tilde{\mu}=\frac{\tilde{Z}(\epsilon)-\tilde{Z}(0)}{\epsilon \tilde{Z}(0)}.
		\ee
		We claim that $|\tilde{\mu}-\mu|\le \epsilon$.
		Indeed,
		Duhamel's formula for derivatives of the matrix exponential  gives
		\be
		\label{Z'}
		Z'(\epsilon):=\frac{dZ(\epsilon)}{d\epsilon} =\mathrm{Tr}(Pe^{H+\epsilon P})
		\ee
		and
		\be
		\label{Z''}
		Z''(\epsilon):=\frac{d^2Z(\epsilon)}{d\epsilon^2} =
		\int_{0}^1 dt_1 \int_{0}^{t_1} dt_2\,
		\mathrm{Tr}\left[ e^{(1-t_1+t_2)(H+\epsilon P)} P e^{(t_1-t_2)(H+\epsilon P)} P\right].
		\ee
		Note that $\mu=Z'(0)/Z(0)$. Furthermore, the standard calculus gives
		\be
		\label{Z''(s)}
		\left| Z'(0) - \frac{Z(\epsilon)-Z(0)}{\epsilon} \right|
		\le \frac{\epsilon} 2 \, \max_{0\le s\le \epsilon} \; |Z''(s)|.
		\ee
		To bound $|Z''(s)|$ we note that for any $n$-qubit positive semi-definite operators $A,B$ one has
		\be
		\label{BUAV}
		\max_{U,V} |\mathrm{Tr}(BU AV)| = \sum_{j=1}^{2^n} \lambda_j(A) \lambda_j(B),
		\ee
		where $\lambda_j$ denotes the $j$-th largest eigenvalue 
		and the maximization is over all $n$-qubit unitary operators $U,V$.
		The bound Eq.~(\ref{BUAV}) is stated as a problem on page~77 of Ref.~\cite{bhatia2013matrix}.
		Choose $A=e^{(t_1-t_2)(H+\epsilon P)}$ and $B=e^{(1-t_1+t_2)(H+\epsilon P)}$.
		Note that $\lambda_j(A)=e^{(t_1-t_2) \lambda_j(H+\epsilon P)}$ with a similar
		formula for $\lambda_j(B)$. 
		Combining Eqs.~(\ref{Z''},\ref{BUAV}) and recalling that $P$ is unitary one gets
		\be
		\label{|Z''(s)|}
		|Z''(\epsilon)| \le \int_{0}^1 dt_1 \int_{0}^{t_1} dt_2
		\sum_{j=1}^{2^n} e^{(1-t_1+t_2) \lambda_j(H+\epsilon P) + (t_1-t_2)\lambda_j(H+\epsilon P)}
		= \frac12 \sum_{j=1}^{2^n}  e^{\lambda_j(H+\epsilon P)} = 
		\frac12 Z(\epsilon).
		\ee
		Define an estimate
		\be
		\label{Z'estimate}
		\tilde{Z}'(0) = \frac{\tilde{Z}(\epsilon) - \tilde{Z}(0)}{\epsilon}.
		\ee
		Note that $\tilde{\mu}=\tilde{Z}'(0)/\tilde{Z}(0)$ while $\mu=Z'(0)/Z(0)$.
		By triangle inequality, 
		\be
		|Z'(0) - \tilde{Z}'(0)| \le \left| Z'(0) - \frac{Z(\epsilon)-Z(0)}{\epsilon} \right|
		+ \frac1{\epsilon} |\tilde{Z}(0)-Z(0)| + \frac1{\epsilon} |\tilde{Z}(\epsilon)-Z(\epsilon)|.
		\ee 
		Bound the first term using Eqs.~(\ref{Z''(s)},\ref{|Z''(s)|}). Bound the second and the third
		term using Eq.~(\ref{oracle}). We arrive at
		\be
		\label{eq1}
		|Z'(0) - \tilde{Z}'(0)| \le \frac{\epsilon} 4 Z(s) + \frac{\delta}{\epsilon} Z(0) + \frac{\delta}{\epsilon} Z(\epsilon)
		\ee
		for some $s\in [0,\epsilon]$. By Weyl's inequality,
		$\lambda_j(H+sP)\le \lambda_j(H) + \| sP \|= \lambda_j(H) + s$ for any $s\ge 0$.  Thus
		\be
		Z(s)=\sum_{j=1}^{2^n} e^{\lambda_j(H+sP)} \le e^{s} \sum_{j=1}^{2^n} e^{\lambda_j(H)} 
		= e^{s} Z(0)\le e^\epsilon Z(0).
		\ee
		Substituting this into Eq.~(\ref{eq1}) one gets
		\be
		|Z'(0) - \tilde{Z}'(0)| \le Z(0) \left( \frac{\epsilon e^{\epsilon}}4 + \frac{\delta}{\epsilon} + \frac{\delta e^{\epsilon}}{\epsilon} \right)  \le \frac{\epsilon}2 Z(0)
		\ee
		if $\delta=\epsilon^2/100$.
		We arrive at
		\be
		\label{diff_bound1}
		|\tilde{\mu}-\mu| \le \frac{\epsilon}2 + \tilde{Z}'(0) \left| \frac1{Z(0)} - \frac1{\tilde{Z}(0)}\right|
		\le  \frac{\epsilon}2 + \delta \tilde{\mu}.
		\ee
		This implies
		\be
		\label{diff_bound2}
		\tilde{\mu} \le \frac{\mu+\epsilon/2}{1-\delta} \le 2
		\ee
		for small enough $\epsilon$ 
		since $\mu\le 1$. Substituting Eq.~(\ref{diff_bound2}) into Eq.~(\ref{diff_bound1})
		gives
		$|\tilde{\mu}-\mu|\le \epsilon/2 + 2\delta \le \epsilon$.
	\end{proof}
	
	\begin{lemma}
		$k$-QPF can be solved in time $poly(n)$ using an oracle which solves $k$-QMV.
		\label{lem:red5}
	\end{lemma}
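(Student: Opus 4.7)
The plan is to recover $Z(1):=\mathrm{Tr}(e^H)$ via a thermodynamic integration of $\log Z(s) := \log \mathrm{Tr}(e^{sH})$ along the straight line path $s\in[0,1]$. The starting point is the identity
\[
\log Z(1) \;=\; \log Z(0) \;+\; \int_0^1 g(s)\,ds, \qquad g(s) := \frac{\mathrm{Tr}(H e^{sH})}{\mathrm{Tr}(e^{sH})},
\]
with $\log Z(0) = n\log 2$ known exactly. A $(1\pm\delta)$ multiplicative estimate of $Z(1)$ is equivalent to an additive-error-$\Theta(\delta)$ estimate of $\log Z(1)$, so the task reduces to computing the above integral to an additive error $\eta=\Theta(\delta)=\mathrm{poly}(1/n)$.

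Expanding $H = \sum_\alpha c_\alpha P_\alpha$ in the Pauli basis yields $\ell=O(n^k)$ Pauli terms, each $k$-local, with $|c_\alpha|\leq\mathrm{poly}(n)$, and the integrand becomes
\[
g(s) \;=\; \sum_\alpha c_\alpha\,\mu_\alpha(s), \qquad \mu_\alpha(s) \;=\; \frac{\mathrm{Tr}(P_\alpha e^{sH})}{\mathrm{Tr}(e^{sH})}.
\]
Each $\mu_\alpha(s)$ is precisely what the $k$-QMV oracle computes on the input Hamiltonian $sH$ (which remains $k$-local with polynomially bounded norm since $s\leq 1$) and Pauli observable $P_\alpha$.

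To approximate the integral I would discretize $[0,1]$ uniformly at $s_j = j/M$ for $j=0,\ldots,M-1$ and use the Riemann sum $\widehat I := M^{-1}\sum_j g(s_j)$. Differentiating yields $g'(s) = \mathrm{Var}_{\rho_s}(H) \leq 4\|H\|^2$ where $\rho_s = e^{sH}/Z(s)$, so the standard rectangle-rule error bound gives $|\widehat I - \int_0^1 g|\leq 2\|H\|^2/M$; choosing $M=\Theta(\|H\|^2/\eta)=\mathrm{poly}(n)$ makes this at most $\eta/2$. For each pair $(j,\alpha)$ one query to the QMV oracle with precision $\epsilon'$ yields an estimate $\tilde\mu_\alpha(s_j)$, and summing produces $\tilde g(s_j)=\sum_\alpha c_\alpha\tilde\mu_\alpha(s_j)$ with per-grid-point error at most $\ell\cdot\max_\alpha|c_\alpha|\cdot\epsilon'\leq\ell\|H\|\epsilon'$; choosing $\epsilon'=\eta/(2\ell\|H\|)=\mathrm{poly}(1/n)$ (which lies in the allowed precision range of $k$-QMV) keeps the aggregated contribution below $\eta/2$.

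The final estimate is $\tilde Z := \exp\bigl(n\log 2 + M^{-1}\sum_j \tilde g(s_j)\bigr)$, and the two error budgets combine to give $|\log\tilde Z-\log Z(1)|\leq\eta$, hence $\tilde Z=(1\pm\delta)Z(1)$. The total cost is $M\ell=\mathrm{poly}(n)$ QMV queries plus polynomial classical post-processing. There is no deep obstacle; the main work is the error bookkeeping that propagates an inverse-polynomial QMV precision through a weighted linear combination, a Riemann sum, and a final exponential. Because $\|H\|$, $\ell$, $M$, and $1/\epsilon'$ are all polynomial in $n$ and the QMV precision parameter may be chosen as any inverse polynomial, every budget closes and the reduction runs in $\mathrm{poly}(n)$ time.
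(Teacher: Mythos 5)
Your approach is correct and is essentially the same as the paper's, just phrased in logarithmic rather than multiplicative terms. The paper writes $Z(1)=2^n\prod_{p=0}^{m-1} Z(\beta_{p+1})/Z(\beta_p)$ with $\beta_p=p/m$, expands each ratio as $1+\tilde\mu_p/m + O(\|H\|^2/m^2)$, and uses QMV to estimate each $\tilde\mu_p$; your thermodynamic-integration formulation $\log Z(1)=n\log 2+\int_0^1 g(s)\,ds$ with a Riemann sum is exactly the logarithm of that telescoping product, and the discretization-error bound via $g'(s)=\mathrm{Var}_{\rho_s}(H)=O(\|H\|^2)$ plays the same role as the paper's $O(\|H\|^2/m^2)$ remainder. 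You are also slightly more explicit than the paper about where the extra $\mathrm{poly}(n)$ queries come from (decomposing $H$ into Pauli terms so that each is a legal QMV observable) and about the per-term precision budget $\epsilon'$; the paper compresses this into a single sentence. No gap.
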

	\begin{proof}
		Define a function 
		\[
		Z(\beta)=\mathrm{Tr}(e^{\beta H}).
		\]
		Our goal is to approximate $Z(1)$
		within a multiplicative error $\delta=poly(1/n)$.
		Note that we can express $Z(1)$ as a telescoping product\footnote{Such multistage approach is often employed in Markov Chain Monte Carlo methods for estimating partition functions (see, for instance, \cite{vstefankovivc2009adaptive} and the references therein). In the methods it is assumed that one can sample from the Gibbs distributions at the different temperatures $\beta_i$ of cooling schedule with the help of certain rapidly mixing Markov chains. The ratios correspond to the means of certain random variables defined with respect to these Gibbs distributions.}
		\be
		Z(1) = 2^n\prod_{p=0}^{m-1} \frac{Z(\beta_{p+1})}{Z(\beta_p)}\quad \mbox{where} \quad \beta_p =  \frac{p}{m}
		\ee
		and estimate the ratios in each stage with sufficient precision.
		Here $m$ is a parameter to be chosen later. 
		We have
		\be
		\label{ZZbound}
		\frac{Z(\beta_{p+1})}{Z(\beta_p)} = 1 + \frac{\mathrm{Tr}(H e^{\beta_p H})}{m \mathrm{Tr}(e^{\beta_p H})}
		+ O\left( \frac{ \|H\|^2 }{m^2}\right).
		\ee
		Since $H$ is a linear combination of Pauli observables acting on at most $k$ qubits,
		one can call the QMV oracle $m\cdot poly(n)$ times to obtain estimates $\tilde{\mu_p}$ satisfying 
		\be
		\left| \tilde{\mu}_p -  \frac{\mathrm{Tr}(H e^{\beta_p H})}{\mathrm{Tr}(e^{\beta_p H})}\right| \le \delta/100.
		\ee
		Define our estimate of $Z(1)$ as 
		\be
		\tilde{Z}(1) = 2^n \prod_{p=0}^{m-1} \left( 1 + \frac{\tilde{\mu}_p}{m}\right).
		\ee
		Choose $m$ large enough so that the last term in Eq.~(\ref{ZZbound}) is at most $\delta/100m$.
		Then 
		\be
		\left| 1+ \frac{\tilde{\mu}_p}{m} -  \frac{Z(\beta_{p+1})}{Z(\beta_p)} \right| \le \frac{\delta}{10m}
		\cdot  \frac{Z(\beta_{p+1})}{Z(\beta_p)}.
		\ee
		A simple algebra then shows that $(1-\delta)Z(1) \le \tilde{Z}(1) \le (1+\delta)Z(1)$.
	\end{proof}
	
	\section{Improved classical algorithm for the QPF problem}
	\label{sec:hutch}
	In this section we describe a classical algorithm for approximating the partition
	function $\mathrm{Tr}(e^{-\beta H})$
	where $H$ is a $k$-local Hamiltonian on $n$ qubits
	such that $\beta \|H\|\le b$.
	The algorithm has
	runtime
	\be
	\label{intro2restated}
	O\left(
	(b + \log{(1/\delta)})\ell 2^n/\delta  + n^2 2^n/\delta
	+1/\delta^4\right).
	\ee
	It scales as $\tilde{O}(2^n)$, assuming that 
	$\|H\|$, $\beta$, and $1/\delta$ are at most polynomial in $n$.
	To the best of our knowledge, this achieves a nearly quadratic improvement
	over ``brute-force" approaches such as 
	the exact diagonalization (e.g. the Lanczos algorithm) or simulation of the imaginary time evolution
	which appear to have runtime  at least $4^n$.
	
	It will be convenient to state our result for a slightly more general problem -- estimating the trace of a positive semidefinite matrix. The main result of this section is as follows.
	\begin{theorem}[\bf Stochastic Trace Estimation]
		\label{thm:classical}
		Suppose $A$ is a positive semidefinite matrix\footnote{Unless stated otherwise, here and below we consider square matrices.} 
		of size $d$
		such that a matrix-vector product
		$A|\psi\ra$  can be computed classically  in time $t(A)$ for any vector $|\psi\ra$.
		There exists a classical randomized algorithm that takes as input an error tolerance  $\delta>0$ 
		and outputs an estimate $\xi(A)\in \RR$ satisfying
		\be
		\label{main_approx}
		(1-\delta) \mathrm{Tr}(A) \le \xi(A) \le (1+\delta) \mathrm{Tr}(A)
		\ee
		with probability at least $0.99$. The algorithm has runtime
		\be
		\label{runtime}
		O\left(
		\frac{t(A) + d \log^2{d} }\delta
		+
		\frac1{\delta^4}\right).
		\ee
	\end{theorem}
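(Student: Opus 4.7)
My approach would combine the Hutch++ estimator of Meyer et al.\ with a dimension-reducing preprocessing step based on random Clifford twirling. Recall that Hutch++ applied to a PSD matrix $B$ of size $d_0$ with matrix-vector time $\tau$ produces a relative-error-$\delta$ estimate of $\mathrm{Tr}(B)$ in time $O(\tau/\delta + d_0/\delta^2)$; the obstacle is the $d_0/\delta^2$ term, which arises from the classical linear algebra on the $O(1/\delta)$ probe vectors. The key idea is to run Hutch++ not on $A$ itself but on a randomly compressed surrogate $A'$ whose dimension $d'$ is polynomially small in $1/\delta$ and \emph{independent of $d$}, while still being an unbiased and low-variance estimator of $\mathrm{Tr}(A)$.

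Concretely, I would set $n = \lceil \log_2 d \rceil$ (padding $A$ with zeros if necessary) and $k = O(\log(1/\delta))$, sample a Clifford unitary $U$ on $n$ qubits uniformly at random, and define the isometry $V = U\,(|0\rangle^{\otimes(n-k)} \otimes I_{2^k})$ together with the compressed matrix $A' = (2^n/2^k)\,V^\dagger A V$ of size $d' = 2^k$. This matrix is PSD since it is a congruence of a PSD operator. Using that the Clifford group is a unitary $2$-design, I would verify two things: (a) $\mathbb{E}[\mathrm{Tr}(A')] = \mathrm{Tr}(A)$, which follows from the $1$-design identity $\mathbb{E}[VV^\dagger] = (2^k/2^n)I$; and (b) $\mathrm{Var}[\mathrm{Tr}(A')] \le \mathrm{Tr}(A^2)/d' \le \mathrm{Tr}(A)^2/d'$. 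For (b), I would expand $\mathbb{E}[(VV^\dagger)^{\otimes 2}] = \alpha I + \beta F$ in the span of identity and SWAP (available because the $2$-design property matches the Haar second moment exactly), compute $\alpha, \beta$ from the two trace conditions $\mathrm{Tr}[\alpha I+\beta F] = d'^2$ and $\mathrm{Tr}[F(\alpha I + \beta F)] = d'$, and use $\mathrm{Tr}(A^2) \le \|A\|_\mathrm{op}\mathrm{Tr}(A) \le \mathrm{Tr}(A)^2$. Choosing the constant in $k$ large enough, Chebyshev then gives $|\mathrm{Tr}(A')-\mathrm{Tr}(A)| \le (\delta/3)\mathrm{Tr}(A)$ with probability at least $0.995$.

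With the compression in place, I would run Hutch++ at precision $\delta/3$ on $A'$. A matrix-vector product for $A'$ is implemented by lifting a vector in $\mathbb{C}^{d'}$ to $\mathbb{C}^{2^n}$ via $V$, applying $A$, and projecting back by $V^\dagger$. Using the Aaronson--Gottesman decomposition of a Clifford into $O(n^2)$ two-qubit gates, each such gate acts on a generic vector in time $O(2^n)$, so $V$ and $V^\dagger$ cost $O(n^2 \cdot 2^n) = O(d\log^2 d)$ and the total cost of one matrix-vector product for $A'$ is $t(A) + O(d\log^2 d)$. Hutch++ makes $O(1/\delta)$ such queries and performs $O(d'/\delta^2) = O(1/\delta^4)$ additional classical work, so the running time is $O(t(A)/\delta + d\log^2 d/\delta + 1/\delta^4)$, matching the theorem. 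Combining the two error sources by the triangle inequality and boosting the overall success probability to $0.99$ (for example by running the whole procedure $O(1)$ times and taking the median) completes the proof.

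The main obstacle is the variance bound in step (b): the entire strategy collapses unless one can show that a \emph{single} Clifford-random compression already concentrates $\mathrm{Tr}(A')$ around $\mathrm{Tr}(A)$ to relative error $\delta$, and this concentration must hold uniformly in the spectrum of $A$ (not merely for high effective-rank instances). The use of the exact $2$-design property of the Clifford group -- rather than an approximate design or a structureless Rademacher sketch -- is what delivers the tight $O(\mathrm{Tr}(A^2)/d')$ variance and hence the near-quadratic improvement in $\delta$. A secondary technical point is the careful accounting of the $\log^2 d$ overhead from applying random Cliffords to generic dense vectors; this is the sole source of the $d\log^2 d$ (rather than $d$) factor in the final runtime.
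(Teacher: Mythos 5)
Your proposal is correct and follows essentially the same route as the paper: a random $k$-qubit Clifford compression with $2^k = \Theta(1/\delta^2)$, a second-moment bound via the exact 2-design property plus $\mathrm{Tr}(A^2) \le \mathrm{Tr}(A)^2$ for PSD $A$, Chebyshev for the compression error, Hutch++ on the compressed matrix, and the $O(n^2 2^n)$ Clifford-simulation cost contributing the $d\log^2 d$ factor. The only superficial differences are bookkeeping (splitting $\delta$ as $\delta/3$ vs.~$\delta/2$, and a final median boost vs.~directly tuning the two failure probabilities to sum to $0.01$).
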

	Specializing this theorem to the matrix exponential function $A=e^{-\beta H}$ we obtain
	a classical algorithm for approximating 
	the quantum partition function $\mathrm{Tr}(e^{-\beta H})$
	with a small relative error.
	To simplify the notations, 
	below we absorb the inverse temperature 
	$\beta$ and the minus sign into the definition of $H$. 
	\begin{corol}
		\label{corol:exp}
		Suppose $H$ is a hermitian matrix of size $d$
		such that a matrix-vector product $H|\psi\ra$ can be computed classically in time
		$t(H)$ for any vector $|\psi\ra$. There exists a classical randomized algorithm that takes as input an error tolerance  $\delta>0$, a real number $b>0$ such that $\|H\|\le b$, and outputs an estimate $\xi'(H)\in \RR$ satisfying 
		\be
		\label{corol_approx}
		(1-\delta) \mathrm{Tr}(e^H) \le \xi'(H) \le (1+\delta) \mathrm{Tr}(e^H)
		\ee
		with probability at least $0.99$. The algorithm has runtime
		\be
		\label{corol_runtime}
		O\left(
		\frac{t(H)(b + \log{(1/\delta)})  + d \log^2{d} }\delta
		+
		\frac1{\delta^4}\right).
		\ee
	\end{corol}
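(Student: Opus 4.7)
The plan is to reduce the corollary to Theorem~\ref{thm:classical} by applying it to a positive semidefinite polynomial approximation of $e^H$ that can be evaluated via a short sequence of matrix-vector products with $H$.

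First I would shift the Hamiltonian to make it positive semidefinite: set $H':=H+bI$, so that $0\preceq H'\preceq 2bI$ and $\mathrm{Tr}(e^H)=e^{-b}\mathrm{Tr}(e^{H'})$. This shift is essential for two reasons. First, each power $(H')^k$ is then PSD, so the truncated Taylor polynomial
\[
A := P_K(H') = \sum_{k=0}^{K}\frac{(H')^k}{k!}
\]
is itself PSD and fits the hypothesis of Theorem~\ref{thm:classical}. Second, $e^{H'}\succeq I$ forces $\mathrm{Tr}(e^{H'})\ge d$, which lets us convert an operator-norm truncation bound into a \emph{relative} trace bound.

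Next I would choose $K=O(b+\log(1/\delta))$. Standard tail estimates for the exponential series give
\[
\|e^{H'}-P_K(H')\| \le \sum_{k>K}\frac{(2b)^k}{k!} \le \frac{\delta}{3}
\]
for such $K$ (once $K\gtrsim 2eb$ the summands decay geometrically, and $O(\log(1/\delta))$ additional terms absorb the $\delta$ factor); combined with $\mathrm{Tr}(e^{H'})\ge d$, this yields $|\mathrm{Tr}(A)-\mathrm{Tr}(e^{H'})|\le (\delta/3)\,\mathrm{Tr}(e^{H'})$. A matrix-vector product with $A$ can be computed by iterating $|\psi_k\rangle = H'|\psi_{k-1}\rangle$ for $k=1,\ldots,K$ and then taking the weighted sum $\sum_k |\psi_k\rangle/k!$, at cost $t(A)=O(K(t(H)+d)) = O((b+\log(1/\delta))\,t(H))$ using $t(H)=\Omega(d)$.

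Finally, I would invoke Theorem~\ref{thm:classical} on $A$ with tolerance $\delta/3$ to produce an estimator $\xi(A)$ with $(1\pm\delta/3)$ relative error on $\mathrm{Tr}(A)$, and output $\xi'(H) := e^{-b}\xi(A)$. Composing the two multiplicative approximations gives the claimed $(1\pm\delta)$ approximation of $\mathrm{Tr}(e^H)$, and substituting $t(A)=O((b+\log(1/\delta))\,t(H))$ into the runtime bound Eq.~(\ref{runtime}) yields Eq.~(\ref{corol_runtime}). The main subtlety is the shift: without it, $e^H$ could have eigenvalues as small as $e^{-b}$, so truncated Taylor polynomials might fail to be PSD (violating the hypothesis of Theorem~\ref{thm:classical}) and an operator-norm truncation bound would no longer translate into a useful relative trace bound. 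The shift $H\mapsto H+bI$ resolves both issues at once, and the remaining analysis is routine parameter tracking.
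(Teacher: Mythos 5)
Your proof is correct, but it takes a genuinely different route from the paper's. The paper does \emph{not} shift the Hamiltonian: it applies Theorem~\ref{thm:classical} directly to $A=T_k(H)$ with $T_k(x)=\sum_{p=0}^k x^p/p!$, and the crux of the argument is a \emph{pointwise relative} bound $(1-\delta)e^x\le T_k(x)\le(1+\delta)e^x$ for all $x\in[-b,b]$, obtained by applying Lemma~\ref{lemma:Taylor} with absolute tolerance $\epsilon=\delta e^{-b}$. Because $T_k(H)$ and $e^H$ are simultaneously diagonal, this pointwise inequality simultaneously yields positivity of $A$ (since $T_k(\lambda_i)\ge(1-\delta)e^{\lambda_i}>0$) and the desired relative trace bound $\mathrm{Tr}(A)=\sum_i T_k(\lambda_i)\in[(1-\delta),(1+\delta)]\,\mathrm{Tr}(e^H)$ in one stroke. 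Your shift $H\mapsto H+bI$ instead makes positivity automatic (a positive-coefficient polynomial of a PSD matrix is PSD) and then converts an operator-norm truncation error into a relative trace error via the separate observation $\mathrm{Tr}(e^{H'})\ge d$; you pay for this with the extra bookkeeping around the $e^{-b}$ rescaling of the final estimator, but avoid having to prove that the truncated Taylor polynomial is strictly positive on a symmetric interval. Both routes give $K=O(b+\log(1/\delta))$ and the identical runtime, so the difference is one of taste: the paper's argument is marginally more self-contained and eigenvalue-by-eigenvalue sharp, whereas your shift trick is a standard normalization that makes the PSD hypothesis of Theorem~\ref{thm:classical} trivial to verify. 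One small quantitative nit: geometric decay of the tail $\sum_{k>K}(2b)^k/k!$ kicks in once $K\gtrsim 4b$ rather than $2eb$ (the ratio of consecutive terms is $2b/(k+1)\le 1/2$ once $k\ge 4b$), but this does not affect the asymptotics.
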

	We defer the proof of the corollary until the end of this section. 
	Suppose $H$ is a $k$-local Hamiltonian on $n$ qubits, where $k=O(1)$. Let $\ell\le O(n^k)$ be the number of non-zero $k$-local terms in $H$.
	Computing the matrix-vector product for a single $k$-local term takes time  $O(2^n)$ since each term acts non-trivially on $O(1)$ qubits.
	Thus $t(H)\le O(\ell 2^n)$. Substituting this and $d=2^n$ into Eq.~(\ref{corol_runtime})
	gives a classical algorithm solving the QPF problem for $k$-local $n$-qubit Hamiltonians with the runtime Eq.~(\ref{intro2restated}).
	
	\begin{proof}[\bf Proof of Theorem~\ref{thm:classical}]
		Our main technical tools are stochastic trace estimation algorithm by Meyer et al~\cite{meyer2021hutch++}
		known as Hutch++ and unitary $2$-designs~\cite{dankert2009exact,cleve2015near}.

		Let us first summarize the Hutch++ algorithm. 
		This algorithm aims to estimate $\mathrm{Tr}(A)$ with a relative error $\delta$,
		where $A$ is a positive semidefinite real matrix of size $d$.
		The same algorithm can be applied to complex matrices using the identity
		$\mathrm{Tr}(A) = (1/2)\mathrm{Tr}(A+A^*)$.
		Fix an integer $m\gg 1$
		and choose $2m$ random vectors 
		\be
		|\psi_1\ra, |\psi_2\ra,\ldots,|\psi_{2m}\ra \in \RR^d
		\ee
		such that the entries of $|\psi_i\ra$ are $\{+1,-1\}$-valued random i.i.d.~variables.
		Define a linear subspace 
		\be
		\calQ =\mathrm{span}(A|\psi_1\ra,\ldots,A|\psi_m\ra) \subseteq \RR^d.
		\ee
		Let $Q$ be the orthogonal projector onto $\calQ$. Define
		\be
		\label{h(A)}
		h(A):=\mathrm{Tr}(QA) + \frac1m \sum_{i=m+1}^{2m} \la \psi_i|(I-Q) A(I-Q)|\psi_i\ra.
		\ee
		The following fact was proved in~\cite{meyer2021hutch++}.
		\begin{fact}[\bf Hutch++ algorithm]
			\label{fact:Hutch}
			For any parameters $\delta,\eta>0$ 
			one can choose
			\be
			m=O\left( \frac{\sqrt{\log{(1/\eta)}}}\delta + \log{(1/\eta)}\right)
			\ee
			such that the random variable $h(A)$ defined in Eq.~(\ref{h(A)}) obeys
			\be
			(1-\delta)  \mathrm{Tr}(A) \le h(A) \le (1+\delta) \mathrm{Tr}(A)
			\ee
			with probability at least $1-\eta$.
		\end{fact}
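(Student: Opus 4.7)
The plan is to prove Fact~\ref{fact:Hutch} in three steps: (i) unbiasedness of $h(A)$, (ii) a Hanson--Wright concentration bound for the Hutchinson part applied to the residual matrix, and (iii) a randomized-SVD bound showing that the random subspace $\calQ$ captures the top singular directions of $A$. The last step is where essentially all the work lies.

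First I would verify $\mathbb{E}[h(A)]=\mathrm{Tr}(A)$ exactly. Observe that $Q$ depends only on $|\psi_1\ra,\ldots,|\psi_m\ra$, hence is independent of $|\psi_{m+1}\ra,\ldots,|\psi_{2m}\ra$. Rademacher vectors satisfy $\mathbb{E}[|\psi\ra\la\psi|]=I$, so $\mathbb{E}[\la\psi|B|\psi\ra]=\mathrm{Tr}(B)$ for any fixed $B$. Conditioning on $Q$ and using $(I-Q)^2=I-Q$ together with cyclicity of the trace gives $\mathbb{E}[h(A)\mid Q]=\mathrm{Tr}(QA)+\mathrm{Tr}((I-Q)A)=\mathrm{Tr}(A)$.

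Next, still conditioning on $Q$, I would analyze the concentration of the sum in~(\ref{h(A)}) with $B:=(I-Q)A(I-Q)\succeq 0$ treated as fixed. A standard Rademacher quadratic-form computation yields $\mathrm{Var}[\la\psi|B|\psi\ra]\le 2\|B\|_F^2$ together with a sub-exponential tail, so by a Bernstein / Hanson--Wright inequality applied to the empirical mean over $i=m+1,\ldots,2m$,
\be
|h(A)-\mathrm{Tr}(A)|\;\le\; C\,\|B\|_F\,\frac{\sqrt{\log(1/\eta)}}{\sqrt{m}}\;+\;C\,\|B\|\,\frac{\log(1/\eta)}{m}
\ee
with probability at least $1-\eta/2$. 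Since $\|B\|\le\|B\|_F$, the whole task now reduces to bounding $\|B\|_F$. For the third step I would invoke the Halko--Martinsson--Tropp randomized-SVD analysis for sub-Gaussian sketches: with $m\ge k+\Theta(\log(1/\eta))$ vectors, one has $\|A(I-Q)\|_F\le O(\|A-A_k\|_F)$ with probability $\ge 1-\eta/2$, where $A_k$ is the best rank-$k$ Frobenius approximation of $A$. Because $A\succeq 0$ and its eigenvalues sum to $\mathrm{Tr}(A)$, the $(k+1)$-st eigenvalue obeys $\sigma_{k+1}(A)\le \mathrm{Tr}(A)/(k+1)$, which gives $\|A-A_k\|_F^2\le \sigma_{k+1}(A)\cdot\mathrm{Tr}(A)\le \mathrm{Tr}(A)^2/(k+1)$ and hence $\|B\|_F\le \|A(I-Q)\|_F=O(\mathrm{Tr}(A)/\sqrt{k})$. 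Plugging into the previous display and balancing $k\asymp m$ makes the relative error $O(\sqrt{\log(1/\eta)}/m + \log(1/\eta)/m)$, so the choice $m=O(\sqrt{\log(1/\eta)}/\delta+\log(1/\eta))$ suffices, after a union bound over steps (ii) and (iii).

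The main obstacle is step~(iii): establishing that an $m$-dimensional Rademacher sketch of $A$ approximates the leading rank-$k$ subspace of $A$ in Frobenius norm, with residual controlled by $\|A-A_k\|_F$ and with $m=k+O(\log(1/\eta))$ samples. The cleanest route is to prove an oblivious subspace-embedding property for $\pm1$ matrices via sub-exponential moment bounds on $\Omega^\top\Omega$, and then to apply the ``sketch-and-solve'' identity that expresses $\|A-QA\|_F^2$ in terms of the sketched pseudoinverse of $A_k\Omega$; after that, the sub-Gaussian concentration and the PSD eigenvalue bound above combine in a routine way to yield the stated scaling.
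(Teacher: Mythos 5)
The paper does not prove Fact~\ref{fact:Hutch}; it is stated as a known result and attributed to Meyer, Musco, Musco, and Woodruff~\cite{meyer2021hutch++}, so there is no in-paper proof to compare against. That said, your sketch does reconstruct the core of the Hutch++ analysis: (i) unbiasedness via conditioning on $Q$, (ii) Hanson--Wright concentration for the Hutchinson estimate of $\mathrm{Tr}((I-Q)A(I-Q))$, and (iii) the randomized low-rank sketch guarantee $\|A(I-Q)\|_F\lesssim\|A-A_k\|_F$ combined with the PSD observation $\|A-A_k\|_F\leq\mathrm{Tr}(A)/\sqrt{k+1}$, which is indeed the central new ingredient of Hutch++. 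The decomposition and the key inequality you isolate are the right ones.

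One small algebraic slip worth flagging: after substituting $\|B\|_F,\|B\|\lesssim\mathrm{Tr}(A)/\sqrt{m}$ (with $k\asymp m$) into the Bernstein/Hanson--Wright bound, the two terms scale as $\mathrm{Tr}(A)\cdot\sqrt{\log(1/\eta)}/m$ and $\mathrm{Tr}(A)\cdot\log(1/\eta)/m^{3/2}$, not $\log(1/\eta)/m$ as written. With the $m^{3/2}$ exponent the constraint becomes $m\gtrsim(\log(1/\eta)/\delta)^{2/3}$, which by AM--GM is dominated by $\sqrt{\log(1/\eta)}/\delta+\log(1/\eta)$, so the claimed $m$ does suffice; but as you wrote it the second term would force $m\gtrsim\log(1/\eta)/\delta$, which is not what the Fact asserts. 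A second caveat: the randomized-SVD step is usually stated for Gaussian sketches, and porting it to Rademacher sketches (as the Hutch++ algorithm in Eq.~\eqref{h(A)} uses) does require the sub-Gaussian/sub-exponential subspace-embedding machinery you allude to; you are right that this is where the real work lies, and it is not free, but it is handled in~\cite{meyer2021hutch++} and in the randomized numerical linear algebra literature. Apart from these two points, your proposal follows essentially the same route as the cited proof.
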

		We now determine the runtime of Hutch++.
		\begin{prop}
			\label{prop:Hutch}
			Suppose a matrix-vector product $A|\psi\ra$ can be computed in time $t(A)$
			for any vector $|\psi\ra$.
			Then Hutch++ has runtime
			\be
			\label{runtime1}
			O\left(
			\frac{t(A)}{\delta} + \frac{d}{\delta^2}
			\right)
			\ee
			for any constant failure probability $\eta>0$.
		\end{prop}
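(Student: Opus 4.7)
The plan is to directly tally the cost of the three computational phases of the Hutch++ algorithm with $\eta$ set to a constant, which by Fact \ref{fact:Hutch} fixes the sketch size at $m = O(1/\delta)$.

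First I would account for building the subspace $\calQ$. This requires $m$ matrix--vector products $A|\psi_i\ra$, each costing $t(A)$, for a total of $O(m \cdot t(A)) = O(t(A)/\delta)$ time. I then orthonormalize the resulting $d \times m$ matrix into an orthonormal basis $\{|q_j\ra\}_{j=1}^m$ for $\calQ$ via a standard QR factorization (e.g.\ Householder or modified Gram--Schmidt), at a cost of $O(dm^2) = O(d/\delta^2)$. This orthonormal basis will be reused throughout.

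Next I would compute the two contributions to $h(A)$ in Eq.~\eqref{h(A)}. For $\mathrm{Tr}(QA) = \sum_{j=1}^m \la q_j|A|q_j\ra$, I perform $m$ matrix--vector products with $A$ followed by $m$ inner products in $\RR^d$, costing $O(m(t(A) + d)) = O(t(A)/\delta + d/\delta)$. For the Hutchinson-style term, for each $i \in \{m+1,\ldots,2m\}$ I first form $|\phi_i\ra = (I-Q)|\psi_i\ra = |\psi_i\ra - \sum_{j=1}^{m} |q_j\ra \la q_j|\psi_i\ra$ in $O(md)$ arithmetic operations, then compute $A|\phi_i\ra$ in time $t(A)$, and finally evaluate $\la \phi_i | A | \phi_i \ra$ by projecting $A|\phi_i\ra$ back onto $(I-Q)$ (another $O(md)$) and taking an inner product. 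Summed over $m$ samples this contributes $O(m \cdot (md + t(A))) = O(d/\delta^2 + t(A)/\delta)$. Generating the $\pm 1$ random vectors is $O(md) = O(d/\delta)$ and absorbed into the above.

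Adding the three phases yields the claimed runtime $O(t(A)/\delta + d/\delta^2)$. There is no real obstacle here: the argument is a routine operation count, and the only point that merits care is that the orthonormalization cost $O(dm^2)$ dominates any naive Gram matrix manipulations and matches the stated $d/\delta^2$ term, so one must be sure to use the orthonormal basis $\{|q_j\ra\}$ (rather than the raw sketch vectors $A|\psi_i\ra$) when evaluating $\mathrm{Tr}(QA)$ and when applying $I-Q$, which keeps all projections at cost $O(md)$ per vector.
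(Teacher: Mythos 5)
Your proof is correct and follows essentially the same operation count as the paper's: compute the sketch with $m$ matrix--vector products, orthonormalize via Gram--Schmidt/QR at cost $O(dm^2)$, and evaluate both terms of $h(A)$ using the orthonormal basis, yielding $O(t(A)/\delta + d/\delta^2)$ with $m=O(1/\delta)$. The only (inconsequential) differences are that your extra projection of $A|\phi_i\ra$ onto $I-Q$ is redundant since $|\phi_i\ra$ already lies in $\calQ^\perp$, and that the paper additionally notes the complex case is handled via $\mathrm{Tr}(A)=\tfrac12\mathrm{Tr}(A+A^*)$ with $t(A+A^*)\le 2t(A)$.
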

		\begin{proof}
			Since $\eta$ is a constant, we have $m=O(1/\delta)$.
			Generating $2m$ random vectors $|\psi_i\ra$ takes time $O(md)$.
			Computing $m$ vectors $A|\psi_1\ra,\ldots,A|\psi_m\ra$ takes time $m t(A)$.
			An orthonormal basis in the subspace $\calQ$ spanned by $A|\psi_1\ra,\ldots, A|\psi_m\ra$  can be found in time $O(dm^2)$  using the Gram-Schmidt process.
			Let this orthonormal basis be $|\phi_1\ra,\ldots,|\phi_\ell\ra\in \RR^d$ 
			for some $\ell\le m$
			such that 
			\[
			Q=\sum_{i=1}^\ell |\phi_i\ra\la \phi_i|.
			\]
			(Note that $\ell=m$ with high probability if $A$ has full rank.)
			Now the first term in Eq.~(\ref{h(A)}) becomes
			\[
			\mathrm{Tr}(QA) = \sum_{i=1}^\ell \la \phi_i|A|\phi_i\ra.
			\]
			By assumption, each vector $A|\phi_i\ra$ can be computed in time $t(A)$
			and thus computing $\mathrm{Tr}(QA)$ takes time $O(dm^2) + O(\ell d) + \ell t(A) \le m t(A)+ O(dm^2)$.
			Vectors 
			\[
			|\tilde{\psi}_i\ra:=(I-Q)|\psi_i\ra = |\psi_i\ra - \sum_{j=1}^\ell \la \phi_j|\psi_i\ra |\phi_j\ra
			\]
			with $m+1\le i\le 2m$ can be computed in time $O(dm^2)$.
			Finally, the second term in Eq.~(\ref{h(A)}) becomes
			\[
			\frac1m\sum_{i=m+1}^{2m} \la \tilde{\psi}_i|A|\tilde{\psi}_i\ra.
			\]
			We conclude that this term can also be computed in time $mt(A) + O(dm^2)$.
			It remains to note that $m=O(1/\delta)$.
			In the case of complex matrices $A$ the runtime of Hutch++ is still given by Eq.~(\ref{runtime1}) since $t(A+A^*)\le 2t(A)$.
		\end{proof}

		Applying Hutch++ directly to the matrix $A$ in Theorem~\ref{thm:classical}
		leads to an undesirable term in the
		runtime scaling as  $d/\delta^2$.
		To get rid of this term we shall apply
		Hutch++ to a certain compressed matrix
		of much smaller size. To this end we need
		the notion of a 
		unitary $2$-design~\cite{dankert2009exact,cleve2015near}.
		Let $U(d)$ be the group of $d\times d$ unitary matrices.
		A finite subset $\calD\subseteq U(d)$ is called
		a unitary $2$-design if for any matrices
		$P,Q\in \CC^{d\times d}$ one has
		\[
		\frac1{|\calD|} \sum_{U\in \calD} UP U^\dag \otimes 
		U Q U^\dag = 
		\int d\mu(U) \, UPU^\dag \otimes U QU^\dag
		\]
		where $d\mu(U)$ is the uniform (Haar)
		distribution on the unitary group $U(d)$.
		Thus a $2$-design reproduces the second order moments of the Haar distribution. 
		
		Assume wlog that $d=2^n$ for some integer $n$ (otherwise pad the matrix $A$ with zeros). 
		Let $\calD\subseteq U(2^n)$ be some fixed unitary $2$-design on $n$ qubits.
		For concreteness, below we assume that
		$\calD$ is the $n$-qubit Clifford group
		which is known to be a $2$-design~\cite{dankert2009exact}.
		Any $n$-qubit Clifford operator
		can be specified by $O(n^2)$ bits
		using the standard stabilizer formalism~\cite{aaronson2004improved}.
		Efficient algorithms for generation of random uniformly distributed Clifford operators
		are described in Refs.~\cite{koenig2014efficiently,bravyi2021hadamard}. In particular, the algorithm of~\cite{bravyi2021hadamard} runs in time $O(n^2)$
		and outputs a quantum circuit of size $O(n^2)$ that implements a random uniform
		element of the Clifford group. We are now ready to define a Clifford compression of a matrix.
		\begin{dfn}
			\label{cliff_est}
			Suppose  $A$ is a complex matrix of size $2^n$
			and $k\in [0,n]$ is an integer. 
			A $k$-qubit Clifford compression of $A$ 
			is defined as a matrix $\phi_U(A)$ of size $2^k$
			with matrix elements
			\be
			\label{compress1}
			\la x|\phi_U(A)|y\ra = \la 0^{n-k} x|UAU^\dag |0^{n-k}y\ra,
			\qquad x,y\in \{0,1\}^k,
			\ee
			where $U$ is picked uniformly at random
			from the $n$-qubit Clifford group.
		\end{dfn}
		The following lemma shows that
		the trace of any positive semidefinite matrix
		$A$ can be approximated with a small relative error
		by a properly normalized trace of the compressed
		matrix $\phi_U(A)$. 
		\begin{lemma}[\bf Clifford compression]\label{cliff_est_perform}
			Suppose $A$ is a positive semidefinite matrix of size $2^n$.
			Let $\phi_U(A)$ be a $k$-qubit Clifford compression of $A$
			for some $k\in [0,n]$.
			Then 
			\be
			\label{f(C)relative}
			(1-\delta) \mathrm{Tr}(A) \le 2^{n-k}\, \mathrm{Tr}(\phi_U(A)) 
			\le (1+\delta) \mathrm{Tr}(A)
			\ee 
			with probability at least $1-\eta$ as long as 
			\[
			2^k \ge  \frac1{\eta \delta^2}.  
			\]
		\end{lemma}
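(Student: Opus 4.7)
The plan is to reduce the lemma to a standard variance calculation for a random projection. The starting observation is that the trace of the Clifford compression can be written as
\[
\mathrm{Tr}(\phi_U(A)) = \sum_{x\in\{0,1\}^k} \langle 0^{n-k}x| UAU^\dag |0^{n-k}x\rangle = \mathrm{Tr}(P \cdot UAU^\dag),
\]
where $P = |0^{n-k}\rangle\langle 0^{n-k}| \otimes I_{2^k}$ is a fixed orthogonal projector of rank $2^k$. So the quantity of interest is the random variable $Y := 2^{n-k}\,\mathrm{Tr}(P\, UAU^\dag)$ with $U$ drawn uniformly from the Clifford group. Because the Clifford group is a unitary $2$-design, its first and second moments over $U$ agree with those of the Haar measure on $U(2^n)$, and this is all the Haar-replacement we will need.

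First I will compute $\mathbb{E}[Y]$ using the $1$-design property: $\mathbb{E}_U[UAU^\dag] = \mathrm{Tr}(A) I / 2^n$, which gives $\mathbb{E}[\mathrm{Tr}(P\,UAU^\dag)] = \mathrm{Tr}(P)\mathrm{Tr}(A)/2^n = 2^{k-n}\mathrm{Tr}(A)$, hence $\mathbb{E}[Y] = \mathrm{Tr}(A)$ exactly. Next I will compute the second moment. Writing $\mathrm{Tr}(P\,UAU^\dag)^2 = \mathrm{Tr}\big((P\otimes P)(UAU^\dag)^{\otimes 2}\big)$ and averaging over the $2$-design reduces to the Haar twirl
\[
\mathbb{E}_U\left[(UAU^\dag)^{\otimes 2}\right] = \alpha I + \beta S,
\qquad \alpha = \frac{d(\mathrm{Tr}\,A)^2 - \mathrm{Tr}(A^2)}{d(d^2-1)},\quad \beta = \frac{d\,\mathrm{Tr}(A^2) - (\mathrm{Tr}\,A)^2}{d(d^2-1)},
\]
with $d = 2^n$ and $S$ the swap, so that $\mathrm{Tr}((P\otimes P) S) = \mathrm{Tr}(P^2) = 2^k$. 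Plugging in and simplifying with $r = 2^k$ gives, after a short algebra,
\[
\mathrm{Var}(Y) = \frac{(d-r)\bigl[d\,\mathrm{Tr}(A^2) - (\mathrm{Tr}\,A)^2\bigr]}{r(d^2-1)}.
\]

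Finally I will bound the variance using positive semidefiniteness of $A$: since its eigenvalues are non-negative, $\mathrm{Tr}(A^2) \le (\mathrm{Tr}\,A)^2$, so $d\,\mathrm{Tr}(A^2) - (\mathrm{Tr}\,A)^2 \le (d-1)(\mathrm{Tr}\,A)^2$. This yields
\[
\mathrm{Var}(Y) \le \frac{(d-r)(\mathrm{Tr}\,A)^2}{r(d+1)} \le \frac{(\mathrm{Tr}\,A)^2}{2^k}.
\]
Chebyshev's inequality then gives $\Pr[|Y - \mathrm{Tr}(A)| \ge \delta\,\mathrm{Tr}(A)] \le \mathrm{Var}(Y)/(\delta\,\mathrm{Tr}(A))^2 \le 1/(2^k \delta^2)$, which is at most $\eta$ whenever $2^k \ge 1/(\eta\delta^2)$, as claimed.

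The only conceptually substantive step is the variance bound; everything else is bookkeeping with the $2$-design property. I do not expect a real obstacle, but the cleanest route is to carry out the swap-and-identity expansion in the computation of $\mathbb{E}[\mathrm{Tr}(P\,UAU^\dag)^2]$ carefully, since the algebraic simplification of the numerator $(rd^2-d) - r(d^2-1) = r-d$ is the only place a sign error could spoil the argument.
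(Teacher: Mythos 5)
Your proof is correct, and it is essentially the same approach as the paper's: both compute the first and second moments of the estimator $Y=2^{n-k}\mathrm{Tr}(\phi_U(A))$ via the unitary $2$-design property, bound the variance using positive semidefiniteness of $A$ (so $\mathrm{Tr}(A^2)\le(\mathrm{Tr}\,A)^2$), and finish with Chebyshev. The only cosmetic difference is that you derive the exact Haar twirl coefficients $\alpha,\beta$ and obtain an exact variance formula, whereas the paper cites a precomputed fact (Claim~2 of \cite{gosset2019compressed}) giving upper bounds on the twirl coefficients of $P_U\otimes P_U$ and stops at the inequality $\mathrm{Var}\le 2^{-k}(\mathrm{Tr}\,A)^2$; your algebra checks out and leads to the same final bound.
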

		\begin{proof}
			Let $I_k$ be the identity matrix of size $2^k$.
			Consider a projector
			\[
			P_U = U^\dag (|0^{n-k}\ra\la 0^{n-k}|\otimes I_k)U.
			\]
			We shall need the following fact stated as Claim 2 in Ref.~\cite{gosset2019compressed}.
			\begin{fact}
				Suppose $\calD\subseteq U(2^n)$ is 
				a unitary $2$-design on $n$ qubits. Then
				\be
				\label{exp1}
				\frac1{|\calD|}\sum_{U\in \calD} P_U = 2^{k-n} I
				\ee
				and
				\be
				\label{exp2}
				\frac1{|\calD|}\sum_{U\in \calD}
				P_U \otimes P_U = a I \otimes I + b\cdot \mathrm{SWAP},
				\ee
				for some coefficients $a,b$
				satisfying
				\be
				a\le  4^{k-n} \quad \mbox{and} \quad b\le 2^k 4^{-n}.
				\ee
				Here $\mathrm{SWAP}|x\ra \otimes |y\ra=
				|y\ra\otimes |x\ra$ for all $x,y \in \{0,1\}^n$.
			\end{fact}
			Define a random variable 
			\[
			\xi_U(A) = 2^{n-k} \mathrm{Tr}(\phi_U(A)) = 2^{n-k}\mathrm{Tr}(P_U A),
			\]
			where $U$ is picked uniformly at random from some $n$-qubit unitary $2$-design $\calD$ and $\phi_U(A)$ is defined by Eq.~(\ref{compress1}).
			From Eq.~(\ref{exp1}) one infers that $\xi_U(A)$ is an unbiased estimator of $\mathrm{Tr}(A)$, that is,
			\[
			\mu:=\frac1{|\calD|} \sum_{U\in \calD}
			\xi_U(A) = \mathrm{Tr}(A).
			\]
			From Eq.~(\ref{exp2}) one infers that the second moment of $\xi_U(A)$ is 
			\[
			\Phi:=
			\frac1{|\calD|} \sum_{U\in \calD}
			(\xi_U(A))^2 =4^{n-k} a (\mathrm{Tr}(A))^2 + 4^{n-k} b \mathrm{Tr}(A^2) \le (\mathrm{Tr}(A))^2 + 2^{-k} \mathrm{Tr}(A^2).
			\]
			Thus $\xi_U(A)$ has the variance 
			\[
			\Phi - \mu^2 \le 2^{-k} \mathrm{Tr}(A^2) \le 2^{-k} (\mathrm{Tr}(A))^2 = 2^{-k} \mu^2\le \eta \delta^2 \mu^2,
			\]
			where we used that $\mathrm{Tr}(A^2)\le (\mathrm{Tr}(A))^2$ since $A$ is positive semidefinite.
			The approximation guarantee Eq.~(\ref{f(C)relative})
			now follows from the Chebyshev inequality.
			Choosing $\calD$ as the $n$-qubit Clifford group concludes the proof. 
		\end{proof}
		We are now ready to prove Theorem~\ref{thm:classical}.
		Recall that our goal is to approximate
		$\mathrm{Tr}(A)$ within a relative error $\delta$
		with probability at least $0.99$.
		Choose $k$ as the smallest integer
		satisfying 
		\be
		\label{k1}
		2^k\ge \frac{800}{\delta^2}.
		\ee
		Choose the estimator $\xi(A)$ in the statement of the 
		theorem as 
		\[
		\xi(A)=2^{n-k}\, \mathrm{Tr}(\phi_A(U))
		\]
		where $\phi_U(A)$ is the $k$-qubit Clifford compression.
		Lemma~\ref{cliff_est_perform} implies that 
		$\xi(A)$ approximates $\mathrm{Tr}(A)$
		within a relative error $\delta/2$
		with probability at least $1-1/200$.
		Thus it suffices to show that $\mathrm{Tr}(\phi_A(U))$ can be
		approximated within a relative error $\delta/2$ 
		with probability at least $1-1/200$
		in time quoted in Eq.~(\ref{runtime}).
		To this end we employ Hutch++, as stated in Fact~\ref{fact:Hutch}
		and Proposition~\ref{prop:Hutch}
		with $A$ replaced by 
		\[
		\hat{A}\equiv \phi_U(A).
		\]
		We claim that the cost of computing a matrix-vector product $\hat{A}|\psi\ra$ with $|\psi\ra\in \CC^{2^k}$ 
		is 
		\be
		\label{hatAcost}
		t(\hat{A}) = t(A) + O(n^2 2^n) = t(A)+O(d \log^2{d}).
		\ee
		Indeed,
		by definition of $\hat{A}$ one has
		\be
		\hat{A} |\psi\ra = \la 0^{n-k}|U A U^\dag |0^{n-k} \otimes \psi\ra.
		\ee 
		Given $|\psi\ra$, one can compute the vector  $|0^{n-k} \otimes \psi\ra\in \CC^{2^n}$ in time $O(2^n)$
		by padding $|\psi\ra$ with zeros. 
		The action of any $n$-qubit Clifford operator on an $n$-qubit state vector 
		can be computed in time $O(n^2 2^n)$ by compiling $U$ into a quantum circuit with $O(n^2)$
		two-qubit gates~\cite{aaronson2004improved} and simulating the circuit 
		using the standard state vector simulator. By assumption, the matrix-vector multiplication by $A$
		can be computed in time $t(A)$. Thus one can compute
		the  vector $U A U^\dag |0^{n-k} \otimes \psi\ra$ in time $t(A)+ O(n^2 2^n)$.
		Finally, the state $\hat{A}|\psi\ra$ is obtained by restricting this vector onto the first $2^k$ coordinates. 
		This proves Eq.~(\ref{hatAcost}).
		
		Applying Fact~\ref{fact:Hutch} and Proposition~\ref{prop:Hutch}
		to the matrix $\hat{A}$ of
		size $2^k$ and using the cost of matrix-vector multiplication
		$t(\hat{A})$
		from Eq.~(\ref{hatAcost})
		one concludes that  $\mathrm{Tr}(\hat{A})$ can be estimated within a relative error $\delta/2$ and any constant failure probability
		in time 
		\be
		\frac{O(t(\hat{A}))}\delta + \frac{O(2^k)}{\delta^2}=
		\frac{O(t(A) + d \log^2{d})}\delta + \frac{O(2^k)}{\delta^2}
		=\frac{O(t(A) + d \log^2{d})}\delta+ O(1/\delta^4).
		\ee
		Here the last equality uses the bound $2^k=O(1/\delta^2)$,
		see Eq.~(\ref{k1}).
	\end{proof}

	\begin{proof}[\bf Proof of Corollary~\ref{corol:exp}]
		We shall apply Theorem~\ref{thm:classical} with $A=T_k(H)$,
		where
		\[
		T_k(x) = \sum_{p=0}^k \frac{x^p}{p!}
		\]
		is the truncated Taylor series for the function $e^x$ at $x=0$.
		We shall choose the truncation order $k$ large enough so that 
		\be
		\label{TaylorRelativeError}
		(1-\delta) e^x \le T_k(x) \le (1+\delta) e^x
		\quad \mbox{for all} \quad x\in [-b,b].
		\ee
		To this end we need the following simple lemma.
		\begin{lemma}
			\label{lemma:Taylor}
			Suppose $\epsilon>0$ and $b\ge 1$ are real numbers.
			Let $k$ be any integer such that 
			\be
			\label{orderTaylor}
			k\ge \frac{4b}{\log{(2)}}  + \frac{\log{(1/\epsilon)}}{\log{(2)}}.
			\ee
			Here we use the natural logarithm.
			Then 
			\be
			\max_{x\in [-b,b]} \left| e^x -T_k(x)\right| \le \epsilon.
			\ee
		\end{lemma}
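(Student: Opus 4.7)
My plan is to bound the Taylor remainder $R_k(x) := e^x - T_k(x) = \sum_{p \ge k+1} x^p/p!$ directly, rather than using the Lagrange form. For $|x|\le b$, the triangle inequality gives
\[
|R_k(x)| \;\le\; \sum_{p=k+1}^{\infty} \frac{b^p}{p!},
\]
so the task reduces to controlling the tail of the series $\sum b^p/p!$.

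To estimate each term I will use the elementary Stirling-type bound $p! \ge (p/e)^p$ (which follows, e.g., from $\log(p!) = \sum_{j\le p}\log j \ge \int_1^p \log t\,dt = p\log p - p + 1$). This gives $b^p/p! \le (eb/p)^p$. The first hypothesis $k \ge 4b/\log 2$ together with $b\ge 1$ implies $k+1 \ge 4b/\log 2 \ge 2eb$ (since $4/\log 2 \approx 5.77 \ge 2e \approx 5.44$), so for every $p \ge k+1$ we have $eb/p \le 1/2$ and therefore $b^p/p! \le 2^{-p}$.

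Summing the resulting geometric series, I obtain
\[
|R_k(x)| \;\le\; \sum_{p=k+1}^{\infty} 2^{-p} \;=\; 2^{-k}.
\]
The second summand in the hypothesis, $\log(1/\epsilon)/\log 2$, is exactly what is needed to guarantee $2^{-k} \le \epsilon$: since $k \ge 4b/\log 2 + \log(1/\epsilon)/\log 2 \ge \log(1/\epsilon)/\log 2$, we get $2^{-k} \le \epsilon$, completing the proof.

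There is no real obstacle here — the only mild bookkeeping point is checking that the constant $4/\log 2$ (rather than something like $2e/\log e$) is large enough to force $eb/p \le 1/2$ for all $p \ge k+1$, which I verified above. Everything else is a one-line geometric-series sum.
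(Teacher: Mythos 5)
Your proof is correct (on the range of $\epsilon$ that actually matters) and takes a genuinely different route from the paper's. The paper works with the Lagrange form of the remainder, $R_k(b)=e^b b^{k+1}/(k+1)!$, and controls the single factorial $(k+1)!$ by writing it as $\Gamma(k+2)=\int_0^\infty y^{k+1}e^{-y}\,dy$, substituting $y=bz$, and then restricting the integral to $z\in[2,3]$ to extract the bound $R_k(b)\le 2^{-k}e^{4b}$. You instead bound the full tail $\sum_{p>k}b^p/p!$ term-by-term, using the elementary Stirling lower bound $p!\ge (p/e)^p$ so that each term is at most $(eb/p)^p\le 2^{-p}$ once $p\ge 2eb$, and then sum a geometric series. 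Both are short and elementary; yours relies on a more standard and widely recognized inequality, while the paper's Gamma-integral trick gives a clean multiplicative bound $R_k(b)\le 2^{-k}e^{4b}$ in one stroke that is valid for \emph{all} $k\ge 0$ with no side condition. That robustness is the one place where your argument has a small gap: your step ``$k+1\ge 4b/\log 2$'' silently drops the second summand $\log(1/\epsilon)/\log 2$ from the hypothesis, which is legitimate only when $\log(1/\epsilon)\ge 0$, i.e.\ $\epsilon\le 1$. For $\epsilon>1$ the hypothesis no longer forces $k+1\ge 2eb$ and your termwise estimate $b^p/p!\le 2^{-p}$ can fail for the first few terms of the tail. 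Since the lemma is only ever invoked in the paper with $\epsilon=\delta e^{-b}<1$, this is inconsequential in context, but for a fully self-contained proof of the statement as written you should either restrict to $\epsilon\in(0,1]$ or dispose of the $\epsilon>1$ case separately (e.g.\ by truncating the tail bound differently, or simply noting the paper's remainder bound handles it).
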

		\begin{proof}
			Taylor's theorem implies that 
			\[
			\max_{x\in [-b,b]} \left| e^x - T_k(x)\right| \le  \frac{e^{b}}{(k+1)!} b^{k+1}:=R_k(b).
			\]
			Using the identity $(k+1)!=\Gamma(k+2)=\int_0^\infty y^{k+1} e^{-y} dy$ one gets
			\[
			\frac1{R_k(b)} =e^{-b} b^{-k-1} \int_0^\infty y^{k+1} e^{-y} dy
			= e^{-b} b \int_0^\infty z^{k+1} e^{-bz} dz\ge e^{-b} \int_2^3 z^{k+1} e^{-bz} dz
			\ge 2^k e^{-4b}.
			\]
			Here in the second equality  we performed a change of variables $y=bz$.
			The last inequality follows from the bounds $z^{k+1} \ge 2^k$ and $e^{-bz}\ge e^{-3b}$
			for $z\in [2,3]$. Thus $R_k(b)\le \epsilon$ if $k$ satisfies Eq.~(\ref{orderTaylor}).
		\end{proof}
		Set $\epsilon = \delta e^{-b}$. Then for all $x\in [-b,b]$ one has
		\[
		T_k(x) \le e^x + \epsilon  \le e^x + \epsilon e^b e^{x} =(1+\delta) e^x
		\] 
		and
		\[
		T_k(x) \ge e^x -\epsilon
		\ge e^x - \epsilon e^b e^x =(1-\delta)e^x.
		\]
		Applying Lemma~\ref{lemma:Taylor} with $\epsilon=\delta e^{-b}$
		one infers that Eq.~(\ref{TaylorRelativeError})
		is satisfied if the Taylor series is truncated at the order
		\be
		\label{k_relative_error}
		k = O(b + \log{(1/\delta)}).
		\ee
		Set $A=T_k(H)$. Since $T_k(x)$ has real coefficients,
		$A$ is a hermitian operator. 
		We claim  that $A$ is positive semidefinite.
		Indeed, the assumption $\|H\|\le b$ implies that any eigevalue of $H$
		lies in the interval $[-b,b]$. 
		From Eq.~(\ref{TaylorRelativeError}) one gets
		$T_k(x)\ge (1-\delta)e^x > 0$ for any $x\in [-b,b]$.
		Thus $A=T_k(H)$ is a positive operator.
		
		Let $\lambda_1,\ldots,\lambda_d\in [-b,b]$  be the  eigenvalues of
		$H$. Since $H$ and $A$ are diagonal in the same basis, we have
		\[
		\mathrm{Tr}(A) = \sum_{i=1}^d T_k(\lambda_i) 
		\le  (1+\delta) \sum_{i=1}^d e^{\lambda_i} = (1+\delta) \mathrm{Tr}(e^H).
		\]
		Likewise
		\[
		\mathrm{Tr}(A) = \sum_{i=1}^d T_k(\lambda_i) 
		\ge  (1-\delta) \sum_{i=1}^d e^{\lambda_i} = (1-\delta) \mathrm{Tr}(e^H).
		\]
		Here we used Eq.~(\ref{TaylorRelativeError}).
		Thus $\mathrm{Tr}(A)$ approximates $\mathrm{Tr}(e^H)$ with a relative error $\delta$.
		Since $A$ is positive semidefinite, we can estimate $\mathrm{Tr}(A)$ 
		with a relative error $\delta$ using Theorem~\ref{thm:classical}.
		Thus the desired estimator $\xi'(H)$ satisfying
		Eq.~(\ref{corol_approx}) can be chosed as
		$\xi'(H)=\xi(A)$, where
		$\xi(A)$ is the estimator of Theorem~\ref{thm:classical}.
		The runtime quoted in Eq.~(\ref{runtime}) depends on the cost of
		the
		matrix-vector multiplication $t(A)$. For any vector $|\psi\ra \in \CC^d$ one has
		\[
		A|\psi\ra = \sum_{p=0}^k \frac1{p!} H^p|\psi\ra.
		\]
		Thus computing $A|\psi\ra$ amounts to computing $k$ vectors
		$(H^p/p!)|\psi\ra$ with $p=1,\ldots,k$ and performing $k$ vector additions. 
		This takes time roughly $k(t(H) + d)$. Assume wlog that $t(H)\ge d$. Then 
		\[
		t(A) \le 2k \cdot t(H) = O(t(H)(b+ \log{(1/\delta)})).
		\]
		Here the second equality uses Eq.~(\ref{k_relative_error}).
		Substituting this into Eq.~(\ref{runtime}) gives the runtime quoted in Eq.~(\ref{corol_runtime}).
	\end{proof}
	\begin{rmk}
		The runtime of the trace estimation algorithms discussed above depends on
		the desired success probability $1-\eta$. We note that 
		it suffices to choose $\eta$  to be any constant smaller than $\frac{1}{2}$. One can then boost the success probability to any desired value $1-e^{-\ell}$ by invoking the algorithm
		only $O(\ell)$ times and outputting the median.
	\end{rmk}
	
	\section{Quantum algorithm for the QPF problem}
	\label{sec:quantum}
	
	The Clifford-compression method introduced in the previous section also leads to a qubit-efficient quantum algorithm for estimating the partition function. In this section, we consider a Hamiltonian given as a positive-weighted sum of $L=\mathrm{poly}(n)$ non-commuting projectors
	\begin{align}\label{eqn:H_sum_proj}
		H = \sum_{l=1}^L \gamma_l \Pi_l,~~~~\gamma_l \ge 0~~\text{and}~~\sum_{l=1}^L \gamma_l =1 \;,
	\end{align}
	where $\Pi_l$ acts non-trivially only on a constant number of qubits.
	These conditions also ensure that the Hamiltonian satisfies $0\le H\le 1$. Given a Hamiltonian as a sum of local terms, we can find such a decomposition by diagonalizing each local term, shifting by multiples of the identity and then re-scaling. The shifting and re-scaling does change the partition function, but this is generally unavoidable even if we only want to ensure that $H$ has bounded norm and is positive semidefinite. 
	
	We apply quantum algorithms for eigenvalue transformation to obtain a ``block-encoding'' of the operator $A=\exp(-\beta H/2)$. Combining the block-encoding of $A$ with Clifford compression, we can approximate the trace of $A$ as the transition probability on a Hilbert space with a small number of additional ancilla qubits.  The rank of the compressing Clifford projector determines the number of these additional qubits. The main result of this section is the following theorem.
	
	\begin{theorem}[Quantum partition function estimation]\label{thm:qualg}
		Let $H$ be an $n$-qubit Hamiltonian as in Eq.~\eqref{eqn:H_sum_proj} and let $\beta \ge 0$ be an inverse temperature.  Then, there is a quantum algorithm to estimate the partition function $\zee(\beta)=\mathrm{Tr}(e^{-\beta H})$ with relative error $\epsilon$ and high probability, which has running-time
		\begin{align} 
			O\left(
			\frac{1}{\epsilon} \cdot\sqrt{\frac{2^n}{\calZ(\beta)}} \cdot \mathrm{poly(n)} \cdot
			\big( \beta +\log(1/\epsilon) \big) 
			\right)
		\end{align}
		and uses a total of $n+O(\log n+\log(1/\epsilon))$ qubits.
	\end{theorem}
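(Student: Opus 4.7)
My plan is to construct a quantum circuit whose acceptance probability equals $\calZ(\beta)/2^n$ up to small relative error, and then recover $\calZ(\beta)$ itself via quantum amplitude estimation. The overall strategy mirrors that of Poulin and Wocjan~\cite{poulin09sampling}, but in place of the full $n$-qubit purification of the maximally mixed state I would substitute the Clifford compression of Lemma~\ref{cliff_est_perform}, which is exactly what saves the ancilla.

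First I would build a block encoding of $A := e^{-\beta H/2}$, a Hermitian positive semidefinite operator satisfying $A^2 = e^{-\beta H}$. Because $H = \sum_l \gamma_l \Pi_l$ is a convex combination of projectors (with $\sum_l \gamma_l = 1$), standard LCU --- with the $\gamma_l$'s encoded as PREP amplitudes and a controlled-$\Pi_l$ SELECT --- yields a block encoding of $H$ on $n + O(\log L) = n + O(\log n)$ qubits and $\mathrm{poly}(n)$ gate cost. Since $0 \le H \le 1$, the scalar function $t \mapsto e^{-\beta t/2}$ on $[0,1]$ admits a polynomial approximation of degree $O(\beta + \log(1/\epsilon))$ with sup-norm error $\epsilon$; feeding this polynomial into QSVT~\cite{gilyen2019singular} produces a unitary $V$ that block-encodes $A$ to error $\epsilon$, uses $n + O(\log n)$ qubits, and has gate cost $\mathrm{poly}(n) \cdot (\beta + \log(1/\epsilon))$.

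Next I would invoke Lemma~\ref{cliff_est_perform} on the positive semidefinite matrix $e^{-\beta H}$ with $k = O(\log(1/\epsilon))$ chosen so that $2^k \ge 100/\epsilon^2$: with probability at least $0.99$ over a uniformly random Clifford $U$,
\begin{equation*}
2^{n-k}\,\mathrm{Tr}(\phi_U(e^{-\beta H})) = (1 \pm \epsilon)\,\calZ(\beta), \qquad \mathrm{Tr}(\phi_U(e^{-\beta H})) = \sum_{x \in \{0,1\}^k} \bigl\|A\, U^\dagger |0^{n-k} x\rangle\bigr\|^2,
\end{equation*}
where the second identity uses $A = A^\dagger$ and $A^2 = e^{-\beta H}$. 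To realize the average over $x$ as a single transition amplitude I would use a coherent purification: introduce a $k$-qubit register $R$ and prepare $|\psi\rangle := 2^{-k/2} \sum_x |x\rangle_R\, U^\dagger |0^{n-k} x\rangle_S$ by Hadamarding $R$, CNOTing into the last $k$ qubits of $S$, and then running a Clifford circuit for $U^\dagger$ (samplable and compilable in $O(n^2)$ time via~\cite{koenig2014efficiently, bravyi2021hadamard}). Applying $V$ with block-encoding ancilla $E$ initialized in $|0^a\rangle$, the orthogonality of the $|x\rangle_R$ kills all cross-terms, so the probability of observing $|0^a\rangle_E$ is exactly $p := 2^{-k} \sum_x \|A U^\dagger |0^{n-k} x\rangle\|^2 = (1 \pm \epsilon)\,\calZ(\beta)/2^n$.

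Finally, I would run quantum amplitude estimation for the projector $|0^a\rangle\!\langle 0^a|_E$ to relative precision $\epsilon$, which uses $O(1/(\epsilon \sqrt{p})) = O(\epsilon^{-1}\sqrt{2^n/\calZ(\beta)})$ applications of the preparation circuit and its inverse, each at cost $\mathrm{poly}(n) \cdot (\beta + \log(1/\epsilon))$. Rescaling the estimate by $2^n$ gives $\calZ(\beta)$ to relative error $O(\epsilon)$, absorbable by rescaling the input precision. The qubit budget is $n$ (system) $+ \,k$ (purification) $+ \,O(\log n)$ (block encoding) $= n + O(\log n + \log(1/\epsilon))$, as claimed. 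The main obstacle will be the error-propagation bookkeeping in the last step: one must check that an $\epsilon$-accurate QSVT block encoding of $A$ yields only $O(\epsilon)$ \emph{relative} error in the final transition probability --- even though $A$ can have exponentially small singular values --- and that the constant failure probability of the Clifford compression step can be driven to negligible by a standard median-of-$O(\log(1/\delta))$ amplification without disturbing the leading query complexity.
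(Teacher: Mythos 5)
Your proposal is correct and structurally matches the paper's proof: both approaches combine a block encoding of $\exp(-\beta H/2)$, Clifford compression (Lemma~\ref{cliff_est_perform}) to reduce the ancilla count, and amplitude estimation on the resulting transition probability, and both land on the claimed $\tilde O\bigl(\epsilon^{-1}\sqrt{2^n/\calZ}\bigr)$ query cost with $n+O(\log n+\log(1/\epsilon))$ qubits. The one genuine difference is the block-encoding construction. The paper (Lemma~\ref{lem:block_en_exp}) first block-encodes an ``effective square root'' $H'$ of $H$ following Somma--Boixo, so that $\exp(-\beta H/2)$ appears inside the Gaussian $\exp(-\beta H'^2/2)$; the Gaussian admits a degree-$O(\sqrt\beta+\log(1/\epsilon_{\mathrm{BE}}))$ polynomial and the block encoding costs $O(\mathrm{poly}(n)(\sqrt\beta+\log(1/\epsilon_{\mathrm{BE}})))$ gates. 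You instead block-encode $H$ directly via LCU (using $\sum_l\gamma_l=1$) and apply QSVT to the exponential itself, which has degree $O(\beta+\log(1/\epsilon))$. This is simpler, and it loses nothing asymptotically: the paper's square-root savings is wiped out anyway, because converting operator-norm block-encoding error $\epsilon_{\mathrm{BE}}$ into relative error on $\calZ(\beta)$ forces $\epsilon_{\mathrm{BE}}\le O\bigl(\epsilon\,e^{-\beta}\bigr)$, so $\log(1/\epsilon_{\mathrm{BE}})=O(\beta+\log(1/\epsilon))$ in both proofs.

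On the ``main obstacle'' you flag at the end: your instinct is right but the phrasing about exponentially small singular values is slightly misleading. The relative-error bookkeeping does not need control on individual small singular values of $A=\exp(-\beta H/2)$; it needs a lower bound on the \emph{normalized trace} of $A^\dagger A=\exp(-\beta H)$. Since $0\le H\le 1$, every eigenvalue of $\exp(-\beta H)$ is at least $e^{-\beta}$, so $\tau(\exp(-\beta H))\ge e^{-\beta}$. Then $|\tau(\tilde A^\dagger\tilde A)-\tau(A^\dagger A)|\le O(\epsilon_{\mathrm{BE}})$ translates to relative error $O(\epsilon_{\mathrm{BE}}\,e^\beta)$, and picking $\epsilon_{\mathrm{BE}}=\Theta(\epsilon\,e^{-\beta})$ closes the argument exactly as the paper does in the proof of Theorem~\ref{thm:qualg}. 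The success-probability amplification is also handled more simply than you suggest: the paper just takes each of the two failure events (Clifford compression and amplitude estimation) to have probability $1/8$ and union-bounds to $3/4$ success, which suffices for the ``high probability'' claim.
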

	Aside from the $n$ qubits used for the system Hamilanian $H$, our algorithm requires only $O(\log n+\log(1/\epsilon))$ ancilla qubits, improving upon the $n+O(\log(1/\epsilon))$ ancillas needed by an earlier algorithm due to Poulin and Wocjan~\cite{poulin09sampling}. Informally, we achieve this by using Clifford compression to avoid an intermediate step of the previous algorithm which involved purifying an $n$-qubit mixed state.
	The running-time we report here is also better than the one reported in Ref.~\cite{poulin09sampling}, but note that this is immediate using new techniques for block-encoding and eigenvalue transformations~\cite{chowdhury2017quantum,vanApeldoorn2020quantumsdpsolvers,gilyen2019singular}.
	
	To prove this theorem, we need to introduce quantum trace estimation and approximate block encoding of the imaginary time evolution $\exp(-\beta H/2)$.  This is done in the next two subsections.
	
	
	\subsection{Quantum trace estimation}
	We begin by defining the notion of block-encoding.
	\begin{dfn}\cite{gilyen2019singular}
		We say that a unitary $S$ is an $a$-ancilla block-encoding of an operator $A$ with sub-normalization factor $\alpha$ whenever 
		\begin{align}\label{eq:block_encoding}
			\left(\bra{0^{\otimes a}}\otimes I \right) S \left(\ket{0^{\otimes a}}\otimes I \right) = \frac{A}{\alpha}\;.
		\end{align}
	\end{dfn}
	The operator $A$ acts on the system register $\calH=(\CC^2)^{\otimes n}$. The unitary $S$ acts on the tensor product $\calA\otimes \calH$, which is composed of an ancilla register $\calA=(\CC^2)^{\otimes a}$ and the system register $\calH$.

	We point out that the condition $\|A\|\le 1$ is a necessary condition for such block encoding to exist as $A$ is a submatrix of the unitary matrix $S$. Our final algorithm uses an approximate block-encoding $\tilde A$ of $\exp(-\beta H/2)$ whose details we discuss in the next subsection.
	\begin{theorem}[Quantum trace estimation]\label{qalgo}
		Let $M$ be a positive semidefinite operator acting on $\calH=(\CC^2)^{\otimes n}$. Let $S$ be an $a$-ancilla unitary block encoding, with sub-normalization factor $\alpha=1$, of an operator $A$ acting on $\calH$ such that $M = A^\dagger A$.
		Then, there is a quantum algorithm that outputs an estimate of the normalized trace $\tau(M)=2^{-n} \mathrm{Tr}(M)$ within relative error $\epsilon_{\mathrm{TR}}$ with probability at least $\frac{3}{4}$. The algorithm invokes $S$ and $S^\dagger$ 
		\begin{align}\label{qalgo_runtime}
			O\left( \frac{1}{\epsilon_{\mathrm{TR}}} \cdot \frac{1}{\sqrt{\tau(M)}} \right)
		\end{align}
		many times and needs an additional $O(\mathrm{poly(n))}$ quantum gates per invocation of $S$. The total number of qubits used by the algorithm is $(n+a+k)$ where $k=O\big(\log(1/\epsilon_{\mathrm{TR}})\big)$.
	\end{theorem}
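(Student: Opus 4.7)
The plan is to combine the Clifford compression technique from Section~\ref{sec:hutch} with amplitude estimation applied to the block-encoding $S$. First I would draw $U$ uniformly from the $n$-qubit Clifford group and choose the compressed dimension $2^k = \Theta(1/\epsilon_{\mathrm{TR}}^2)$, so that $k = O(\log(1/\epsilon_{\mathrm{TR}}))$. Invoking Lemma~\ref{cliff_est_perform} on the positive semidefinite operator $M$, with constant success probability,
\[
(1-\epsilon_{\mathrm{TR}}/3)\,\tau(M) \le 2^{-k}\mathrm{Tr}(\phi_U(M)) \le (1+\epsilon_{\mathrm{TR}}/3)\,\tau(M).
\]

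Second, I would use the factorization $M = A^\dagger A$ to rewrite the compressed trace as a single acceptance probability. Introduce a $k$-qubit reference register and prepare
\[
|\Phi\rangle = 2^{-k/2}\sum_{x \in \{0,1\}^k} U^\dagger|0^{n-k}x\rangle_{\mathrm{sys}} \otimes |x\rangle_{\mathrm{ref}},
\]
which can be done by creating $k$ Bell pairs between the reference register and the last $k$ qubits of the system, then executing $U^\dagger$ as an $O(n^2)$-gate Clifford circuit on the system. By the block-encoding identity $(\langle 0^a|\otimes I)\,S\,(|0^a\rangle\otimes I) = A$, measuring the $a$-qubit ancilla of $S\,(|0^a\rangle_{\mathrm{anc}}\otimes |\Phi\rangle)$ in the all-zeros state succeeds with probability exactly
\[
p_U = \|(A \otimes I_{\mathrm{ref}})|\Phi\rangle\|^2 = 2^{-k}\sum_x \|AU^\dagger|0^{n-k}x\rangle\|^2 = 2^{-k}\mathrm{Tr}(\phi_U(M)),
\]
since the ``junk'' part of the block-encoding is orthogonal to $|0^a\rangle_{\mathrm{anc}}$.

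Third, I would apply Brassard--H{\o}yer--Mosca--Tapp amplitude estimation to this prepare-and-measure circuit with the ``good'' subspace defined by $|0^a\rangle_{\mathrm{anc}}$. Using $N$ calls to the Grover walk operator — each of which invokes $S$ and $S^\dagger$ once along with $O(\mathrm{poly}(n))$ reflection gates — produces an estimator of $p_U$ with additive error $O(\sqrt{p_U}/N)$. Choosing $N = O(1/(\epsilon_{\mathrm{TR}}\sqrt{\tau(M)}))$ makes this at most $(\epsilon_{\mathrm{TR}}/3)\,p_U$; composing with Step~1 via $(1\pm\epsilon_{\mathrm{TR}}/3)^2 \subseteq (1\pm \epsilon_{\mathrm{TR}})$ yields relative error $\epsilon_{\mathrm{TR}}$ on $\tau(M)$. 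The overall success probability is boosted to $3/4$ by $O(1)$ repetitions and a median, combined with a union bound over the compression event and the amplitude-estimation event.

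The main obstacle will be fitting the amplitude-estimation register within the claimed space budget of $n+a+O(\log n + \log(1/\epsilon_{\mathrm{TR}}))$ qubits. A textbook phase register of size $\log N$ would occupy up to $n/2 + \log(1/\epsilon_{\mathrm{TR}})$ qubits, which exceeds the budget. To handle this I would invoke a qubit-efficient variant of amplitude estimation — iterative phase estimation, Kitaev-style single-qubit phase estimation, or a median of several low-precision runs — in which the auxiliary register scales with the target \emph{precision} $\epsilon_{\mathrm{TR}}$ and not with the total number $N$ of Grover iterates. Modulo this standard reduction, the runtime and qubit counts in the statement follow directly.
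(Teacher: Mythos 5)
Your proposal is correct and follows essentially the same route as the paper's proof: Clifford compression with $k=O(\log(1/\epsilon_{\mathrm{TR}}))$ (Lemma~\ref{cliff_est_perform}), re-expression of $2^{-k}\mathrm{Tr}(\phi_U(M))$ as a transition probability by creating $k$ Bell pairs and applying the Clifford before $S$ (this is precisely Lemma~\ref{cliff_est_prob}, modulo the harmless substitution $U\leftrightarrow U^\dagger$, which is equidistributed), and then relative-error amplitude estimation with a union bound to reach success $3/4$. You correctly flag that a textbook BHMT phase register would overshoot the qubit budget and that a qubit-efficient variant is needed; the paper addresses this by invoking the Aaronson--Rall amplitude-estimation algorithm (\cite{aaronson2020amplitude}, Theorem~3), which uses only $O(1)$ extra qubits and directly delivers the relative-error guarantee, so no separate median step is actually required once the failure probabilities of compression and estimation are each set to $1/8$.
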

	Owing to the use of Clifford compression, our trace-estimation introduces only $\log(1/\epsilon_{TR})$ ancilla qubits, in addition to those needed by the block-encoding $S$. 
	
	For convenience, we repeat here some facts about Clifford compression (see Definition~\ref{cliff_est} and Lemma~\ref{cliff_est_perform}). 
	Let   $\calD$ be the $n$-qubit Clifford group.
	Consider a random uniformly distributed Clifford operator
	$U\in \calD$.
	Define a projector
	\begin{align}
		P_U = U^\dag ( |0^{n-k}\ra\la 0^{n-k} |\otimes I_k) U   
	\end{align}
	and an estimator 
	\begin{align}
		p_U(M)=2^{-k} \mathrm{Tr}(P_U M).
	\end{align} 
	Using Lemma~\ref{cliff_est_perform}  and the identity $p_U(M)=2^{-k}\mathrm{Tr}(\phi_U(M))$ one infers that 
	$p_U(M)$ approximates the normalized trace 
	\begin{align}
		\tau(M)=2^{-n}\mathrm{Tr}(M)
	\end{align} 
	with relative error $\epsilon_{\mathrm{CE}}$ and success probability $1-\delta_{\mathrm{CE}}$ provided that 
	\be
	\label{k}
	k\ge 2\log_2{(1/\epsilon_{\mathrm{CE}})} + \log_2{(1/\delta_{\mathrm{CE}})}.
	\ee
	Below we refer to the random variable $p_U(M)$ as a Clifford estimator.

	We decompose the system register $\calH$ as $\calH=\calH'\otimes\calH''$, where $\calH'=(\CC^2)^{\otimes (n-k)}$ and $\calH''=(\CC^2)^{\otimes k}$.  The Clifford projector $P_U$ and the Clifford operator $U$ both act on $\calH$, the projector $|0^{n-k}\ra\la 0^{n-k}|$ acts on the first subregister $\calH'$, and the identity $I_k$ acts on the second subregister $\calH''$.
	
	\begin{lemma}[Clifford estimator as transition probability]\label{cliff_est_prob}
		Let $M$ and $S$ be as in Theorem~\ref{qalgo}. Then, the normalized Clifford estimator $p_U(M)=2^{-k}\mathrm{Tr}(P_U M)$ can be expressed as a transition probability
		\begin{align}\label{eq:trace_transition_prob}
			p_U(M) &= \left\| \bra{0^a} \otimes I_n \otimes I_k) \, ( S \otimes I_k) \, |\Psi_U\> \right\|^2,
		\end{align}
		on the composite Hilbert space
		\begin{align}
			\calA \otimes \calH' \otimes \calH'' \otimes \calE
		\end{align}
		where $\ket{0^a} \in \calA$, $\calE=(\CC^2)^{\otimes k}$ is a new ancilla register and the state $\ket{\Psi_U}$ can be prepared with $O(\mathrm{poly}(n)))$ gates.
	\end{lemma}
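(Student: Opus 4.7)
The plan is to construct $|\Psi_U\rangle$ explicitly as an ancilla-plus-purification state and then verify Eq.~\eqref{eq:trace_transition_prob} by direct computation, using the block-encoding property of $S$ together with the fact that the eigenbasis of $P_U$ comes from applying $U^\dagger$ to a computational basis of $\calH''$. Concretely, I will take
\[
|\Psi_U\rangle \;=\; |0^{a}\rangle_{\calA}\otimes |\chi_U\rangle, \qquad |\chi_U\rangle \;=\; (U^{\dagger}\otimes I_k)\Bigl(|0^{n-k}\rangle_{\calH'}\otimes \tfrac{1}{\sqrt{2^{k}}}\sum_{j=0}^{2^k-1}|j\rangle_{\calH''}\otimes |j\rangle_{\calE}\Bigr),
\]
so that $|\chi_U\rangle$ is just the $U^\dagger$-rotated maximally entangled state between $\calH''$ and $\calE$, tensored with $|0^{n-k}\rangle$ on $\calH'$. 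The preparation cost is $O(\mathrm{poly}(n))$: one prepares $|0^{n-k}\rangle$ trivially, creates the $\calH''\otimes\calE$ Bell-type state with $k$ Hadamards and $k$ CNOTs, and then applies $U^{\dagger}$, which is a Clifford and hence compiles into $O(n^2)$ two-qubit gates.

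Next I will compute the right-hand side of Eq.~\eqref{eq:trace_transition_prob}. Since $(\langle 0^{a}|\otimes I_{n})\,S\,(|0^{a}\rangle\otimes I_{n})=A$ and since $S\otimes I_k$ acts as identity on $\calE$, a direct calculation gives
\[
\bigl(\langle 0^{a}|\otimes I_{n}\otimes I_{k}\bigr)(S\otimes I_{k})|\Psi_U\rangle \;=\; (A\otimes I_{k})|\chi_U\rangle.
\]
Taking the norm squared and expanding $|\chi_U\rangle$ in the basis $\{U^{\dagger}(|0^{n-k}\rangle\otimes |j\rangle)\}_{j=0}^{2^k-1}$, the $\calE$ inner products collapse to a diagonal sum:
\[
\bigl\|(A\otimes I_{k})|\chi_U\rangle\bigr\|^{2} \;=\; \frac{1}{2^{k}}\sum_{j=0}^{2^{k}-1}\bigl\langle 0^{n-k},j\bigr|\,U A^{\dagger}A U^{\dagger}\,\bigl|0^{n-k},j\bigr\rangle .
\]

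Finally I will recognise the right-hand side as $2^{-k}\mathrm{Tr}(P_U M)$. The vectors $\{U^{\dagger}(|0^{n-k}\rangle\otimes|j\rangle)\}_j$ form an orthonormal basis for the range of the rank-$2^k$ projector $P_U=U^{\dagger}(|0^{n-k}\rangle\langle 0^{n-k}|\otimes I_k)U$, so $P_U=\sum_{j}U^{\dagger}(|0^{n-k}\rangle\otimes|j\rangle)(\langle 0^{n-k}|\otimes\langle j|)U$, and using $M=A^{\dagger}A$ together with cyclicity of the trace yields the displayed sum equals $\mathrm{Tr}(P_U M)$. Dividing by $2^k$ gives exactly $p_U(M)$, proving Eq.~\eqref{eq:trace_transition_prob}.

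There is no real obstacle here: the statement is essentially a one-line consequence of combining the block-encoding identity with the observation that a maximally entangled state on $\calH''\otimes\calE$ acts as a ``witness'' that averages operators on $\calH''$ into their normalised trace. The only thing to be careful about is the bookkeeping of which registers the projections and partial identities act on, in particular that $S$ does not touch $\calE$, so that the block-encoding identity can be applied register-wise; and that the preparation of $|\Psi_U\rangle$ uses $U^{\dagger}$ as a Clifford circuit of polynomial size, justifying the $O(\mathrm{poly}(n))$ gate bound.
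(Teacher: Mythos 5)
Your proof is correct and follows essentially the same construction as the paper: purify a maximally entangled state on $\calH''\otimes\calE$, rotate by a Clifford, apply the block encoding $S$, and postselect $|0^a\rangle$. The one difference is that you apply $U^\dagger$ (rather than the paper's $U$) when preparing $|\Psi_U\rangle$, which actually matches the paper's own definition $P_U=U^\dagger(|0^{n-k}\rangle\langle 0^{n-k}|\otimes I_k)U$ more faithfully --- as written, the paper's state yields $2^{-k}\mathrm{Tr}(P_{U^\dagger}M)$, a harmless discrepancy since the Clifford group is closed under inversion and $U$ is drawn uniformly, but your bookkeeping is the cleaner choice.
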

	\begin{proof}
		The unitary $S$ acts on $\calA\otimes\calH'\otimes\calH''$.
		The state $|\Psi_U\>$ is defined as
		\begin{align}
			|\Psi_U\> &= (I_a \otimes U \otimes I_k) \, (|0^a\> \otimes |0^{n-k}\> \otimes |\Phi_k\>) 
			\, \in \,
			\calA \otimes \calH' \otimes \calH'' \otimes \calE,
		\end{align}
		where $|\Phi_k\>$ denotes the maximally entangled state
		\begin{align}
			|\Phi_k\> = \frac{1}{\sqrt{2^k}} \sum_{x\in\{0,1\}^k} |x\> \otimes |x\> 
			\, \in \,
			\calH'' \otimes \calE.
		\end{align}
		From the statement of Theorem~\ref{qalgo}, we see that
		\begin{align}
			(\bra{0^a} \otimes I_n \otimes I_k)& \, ( S \otimes I_k) \, |\Psi_U\>\nonumber \\ &= \left((\bra{0^a} \otimes I_n) \, S (I_a\otimes U) \otimes I_k \right) (|0^a\> \otimes |0^{n-k}\> \otimes |\Phi_k\>) \nonumber \\
			&= \left( AU\otimes I_k \right) \left( |0^{n-k}\> \otimes |\Phi_k\> \right)
		\end{align}
        
        \noindent
		since $S$ block-encodes $A$. Further
		\begin{align}
			\left\| AU |0^{n-k}\> \otimes |\Phi_k\> \right\|^2 &= \left(\<0^{n-k}| \otimes \< \Phi_k | \right) \left( U^\dagger A^\dagger AU\otimes I_k \right)| \left( |0^{n-k}\> \otimes |\Phi_k\> \right) \\ 
			&= 2^{-k}\trace(P_U M).
		\end{align}
		
		The cost of preparing the state $|\Psi_U\>$ is $O(\mathrm{poly}(n))$, which consists of both preparing the maximally entangled state $|\Phi_k\>$ and implementing the Clifford unitary $U$.
	\end{proof}
	
	We are now ready to complete the proof of Theorem~\ref{qalgo}.
	\begin{proof}[{\bf Proof of Theorem~\ref{qalgo}}]
		The task is to estimate the normalized trace $\tau(M)$ with relative error $\epsilon_{\mathrm{TR}}$.
		As explained in the discussion preceding Lemma~\ref{cliff_est_prob}, we use the normalized Clifford estimator $p_U(M)$ to approximate $\tau(M)$. Lemma~\ref{cliff_est_prob} equates $p_U(M)$ to the transition probability in Eq.~\eqref{eq:trace_transition_prob}. The algorithm uses quantum amplitude estimation to return an estimate $\hat p_U(M)$ of the transition probability in Eq.~\eqref{eq:trace_transition_prob} as the final output. 
		
		We first prove that our algorithm returns an accurate estimate. There are two independent sources of error: the Clifford compression and amplitude estimation.
		\begin{itemize}
			\item[(i)] For the normalized Clifford estimator $p_U(M)$, we choose the parameters $\epsilon_{\mathrm{CE}}=\epsilon_{\mathrm{TR}}/3$ and $\delta_{\mathrm{CE}}=1/8$ so that the estimate $p_U(M)$ for the normalized trace $\tau(M)$ satisfies 
			\begin{align}\label{f(C)relative_repeated}
				\left(1 - \epsilon_{\mathrm{CE}}\right) \cdot \tau(M) \le 
				p_U(M) \le 
				\left(1 + \epsilon_\mathrm{{CE}} \right) \cdot \tau(M)
			\end{align}
			with probability at least $1-\delta_\mathrm{{CE}}=7/8$. To achieve this, it suffices to set
			\begin{align}\label{eq:k_choice_quantum}
				k=O(\log(1/\epsilon_{\mathrm{TR}}))\;.
			\end{align}
			
			\item[(ii)] For the amplitude estimation algorithm in \cite[Theorem 3]{aaronson2020amplitude}, we choose the parameters
			$\epsilon_{\mathrm{AE}}=\epsilon_{\mathrm{TR}}/3$ and $\delta_{\mathrm{AE}}=1/8$ to get an estimate $\hat{p}_U(M)$ such that
			\begin{align}\label{amp_est_relative}
				\left(1 - \epsilon_{\mathrm{AE}} \right) \cdot p_U(M) < 
				\hat{p}_U(M) < 
				\left(1 + \epsilon_{\mathrm{AE}} \right) \cdot p_U(M)
			\end{align}
			holds with probability at least $1-\delta_{\mathrm{AE}}=\frac{7}{8}$.
		\end{itemize}
		
		The union bound implies that the probability that Clifford compression or amplitude estimation fails is at most $\delta_{\mathrm{CE}} + \delta_{\mathrm{AE}}=\frac{1}{4}$.  In the case that both succeed, we can bound the overall approximation error as follows
		\begin{align}
			\left| \hat{p}_U(M) - \tau(M) \right| 
			&\le
			\left| \hat{p}_U(M) - p_U(M) \right| + \left| p_U(M) - \tau(M) \right| \\
			&\le 
			\epsilon_{\mathrm{AE}} \cdot p_U(M) + \epsilon_{\mathrm{CE}} \cdot \tau(M) \\
			&\le
			\epsilon_{\mathrm{AE}} \cdot \left(1 + \epsilon_{\mathrm{CE}} \right) \cdot \tau(M) 
			+  
			\epsilon_{\mathrm{CE}} \cdot \tau(M) \\
			&\le 
			\epsilon_{\mathrm{TR}} \cdot \tau(M).
		\end{align}
		This shows that the estimator $\hat{p}_U(M)$ is correct within the desired relative precision with probability at least $\tfrac{3}{4}$.
		
		The time-complexity of this algorithm is essentially determined by that of amplitude estimation. From \cite[Theorem 3]{aaronson2020amplitude}, it follows that we need to use the unitary $S$, its inverse $S^\dagger$, and a quantum circuit that prepares $\ket{\Psi_U}$
		\begin{align}\label{ae_runtime}
			O\left( \frac{1}{\epsilon_{\mathrm{TR}}} \cdot \frac{1}{\sqrt{p_U(M)}} \right)
		\end{align}
		many times.
		The lower bound on $p_U(M)$ in (\ref{f(C)relative_repeated}) implies that $\frac{1}{2}\tau(M)\le p_U(M)$
		as $\epsilon_{\mathrm{CE}} < \frac{1}{2}$. Using this lower bound on $p_U(M)$ in (\ref{ae_runtime}), and the fact that $|\Psi_U\>$ can be prepared with $O(\mathrm{poly}(n))$ gates, we see that the number of invocations of $S$ and $S^\dagger$, as well as the number of additional gates, is as stated. The total number of qubits used is then $n+a+k$, where implementing $S$ requires $n+a$ qubits by definition, and $k=O(\log (1/\epsilon_{TR}))$ from Eq.~\eqref{eq:k_choice_quantum}.
	\end{proof}
	
	\subsection{Quantum partition function estimation}
	
	We now specialize the above results to the problem of estimating the partition function. 
	
	The operator $M$ is now an imaginary-time evolution operator $\exp(-\beta H)$ for a positive semi-definite Hamiltonian $H$. Note that the lower bound $2^n e^{-\beta}\le\mathrm{Tr}(M)$ holds since the maximal eigenvalue of $H$ is less or equal to $1$. The Lemma below shows that there is quantum circuit which is an approximate block-encoding of $A:=\exp(-\beta H/2)$ up to a constant factor.
	
	\begin{lemma}\label{lem:block_en_exp}
		Suppose $H$ is a Hamiltonian as in Eq.~\eqref{eqn:H_sum_proj}. For $\epsilon_{\mathrm{BE}} \ge 0$, there exists an efficiently computable quantum circuit which is a $O(\log n)$-ancilla block-encoding of an operator $\tilde A$, such that 
		\begin{align}
			\left\| \tilde A - \frac{1}{2}\exp(-\beta H/2) \right\| \le \epsilon_{\mathrm{BE}}\;.
		\end{align}
		The quantum circuit can be implemented with $O(\mathrm{poly(n)}(\sqrt{\beta}+\log(1/\epsilon_{\mathrm{BE}})))$ gates.
	\end{lemma}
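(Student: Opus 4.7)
The plan is to build $\tilde A$ by the standard two-step pipeline used throughout the block-encoding/QSVT literature: first construct a block-encoding of $H$ itself exploiting the explicit LCU structure Eq.~\eqref{eqn:H_sum_proj}, then apply quantum singular value transformation with a polynomial that approximates the map $x \mapsto \tfrac{1}{2} e^{-\beta x/2}$ on $[0,1]$.

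For the first step, since $H=\sum_{l=1}^{L}\gamma_l \Pi_l$ with $\gamma_l\ge 0$ and $\sum_l\gamma_l=1$, I can use a $\lceil\log_2 L\rceil = O(\log n)$-qubit index register together with $O(1)$ ancillas for the projectors. Define PREP as the state preparation $|0\rangle\mapsto \sum_l\sqrt{\gamma_l}|l\rangle$ (implementable with $\mathrm{poly}(L)=\mathrm{poly}(n)$ gates by standard state-preparation routines), and SELECT as the controlled operator $\sum_l |l\rangle\langle l|\otimes\tilde\Pi_l$, where $\tilde\Pi_l$ is a constant-sized block-encoding of $\Pi_l$ (each $\Pi_l$ acts on $O(1)$ qubits and has a trivial block-encoding on $O(1)$ ancillas). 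Then $(\mathrm{PREP}^\dagger\otimes I)\cdot \mathrm{SELECT}\cdot (\mathrm{PREP}\otimes I)$ is an $O(\log n)$-ancilla block-encoding of $H$ with sub-normalization $\alpha=1$, using $\mathrm{poly}(n)$ gates. Note that the normalization of the coefficients $\gamma_l$ is precisely what makes $\alpha=1$ possible, and $\|H\|\le 1$ is precisely what makes block-encoding of $H$ well-defined.

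For the second step, I invoke a Chebyshev-based polynomial approximation: there exists a real polynomial $P$ of degree $d=O(\sqrt{\beta}+\log(1/\epsilon_{\mathrm{BE}}))$ such that $|P(x)|\le 1$ for all $x\in[-1,1]$ and $|P(x)-\tfrac12 e^{-\beta x/2}|\le \epsilon_{\mathrm{BE}}$ for all $x\in[0,1]$ (see, e.g., Lemma~14 of \cite{chowdhury2017quantum} or Corollary~66 of \cite{gilyen2019singular}; the factor $\tfrac12$ out front is a convenient sub-normalization ensuring the polynomial fits inside the unit box, which is required by QSVT). Applying the QSVT construction (Theorem~56 of \cite{gilyen2019singular}) to the block-encoding of $H$ with the polynomial $P$ produces an $(O(\log n)+1)$-ancilla block-encoding of $\tilde A:=P(H)$; since $H\succeq 0$ has eigenvalues in $[0,1]$, the spectral mapping theorem gives $\|\tilde A-\tfrac12 e^{-\beta H/2}\|=\max_{x\in\mathrm{spec}(H)}|P(x)-\tfrac12 e^{-\beta x/2}|\le\epsilon_{\mathrm{BE}}$. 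QSVT costs $d$ alternating applications of the $H$-block-encoding and its inverse interspersed with $O(d)$ single-qubit rotations, giving a total gate count of $O\bigl(\mathrm{poly}(n)\cdot(\sqrt{\beta}+\log(1/\epsilon_{\mathrm{BE}}))\bigr)$ as claimed.

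The main delicate point is establishing the polynomial approximation with the stated additive degree $\sqrt{\beta}+\log(1/\epsilon_{\mathrm{BE}})$; the textbook Chebyshev truncation of the exponential typically yields degree $O(\sqrt{\max\{\beta,\log(1/\epsilon)\}\log(1/\epsilon)})$, which in turn is dominated by $\sqrt{\beta}+\log(1/\epsilon)$ (one can verify this by splitting the analysis at $\beta=\log(1/\epsilon)$, or by combining a degree-$O(\sqrt{\beta})$ Chebyshev approximation for the bulk of $e^{-\beta x/2}$ with a degree-$O(\log(1/\epsilon))$ tail correction). Everything else — the LCU block-encoding bookkeeping, the ancilla counting, and the QSVT invocation — is a direct application of known constructions.
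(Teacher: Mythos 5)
Your route is genuinely different from the paper's, and the high-level plan is sound, but the degree analysis in the final step is not correct as stated.

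The paper does \emph{not} block-encode $H$ directly. Instead it constructs an "effective square root" Hamiltonian $H'$ (following Somma \cite{somma2013spectral}) on an enlarged space, block-encodes $H'$ via reflections (Eqs.~\eqref{eq:sq_root_sel}--\eqref{eq:block_en_H'}), and applies QSVT with a Gaussian polynomial $Q(x)\approx\tfrac12\exp(-8\beta x^2)$ obtained by Taylor truncation; since $(\bra{0}\otimes I)e^{-\beta H'^2/2}(\ket{0}\otimes I)=e^{-\beta H/2}$, this yields the block-encoding. Your approach — a standard PREP/SELECT LCU block-encoding of $H$ (with $\alpha=1$ because $\sum_l\gamma_l=1$ and each $\Pi_l$ is a projector) followed by QSVT with a polynomial approximating $x\mapsto\tfrac12 e^{-\beta x/2}$ on $[0,1]$ — is a cleaner path to a block-encoding of the same target and is entirely legitimate. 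The paper's square-root detour offers no intrinsic advantage here in the polynomial degree; it seems to be inherited from Ref.~\cite{chowdhury2020computing}.

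The gap is in your degree claim. You write that the Chebyshev degree $O\bigl(\sqrt{\max\{\beta,\log(1/\epsilon_{\mathrm{BE}})\}\,\log(1/\epsilon_{\mathrm{BE}})}\bigr)$ (Sachdeva--Vishnoi, or Corollary~64 of \cite{gilyen2019singular}) is "dominated by" $\sqrt{\beta}+\log(1/\epsilon_{\mathrm{BE}})$, and suggest one can verify this by a case split at $\beta=\log(1/\epsilon_{\mathrm{BE}})$. That is false: take $\beta=t^4$ and $\log(1/\epsilon_{\mathrm{BE}})=t^2$ (so $\beta>\log(1/\epsilon_{\mathrm{BE}})$ for $t>1$). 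Then $\sqrt{\beta\log(1/\epsilon_{\mathrm{BE}})}=t^3$, while $\sqrt{\beta}+\log(1/\epsilon_{\mathrm{BE}})=2t^2$, and $t^3\gg t^2$. The "combine a degree-$O(\sqrt\beta)$ Chebyshev piece with a degree-$O(\log(1/\epsilon))$ tail" heuristic also does not produce the additive bound. In fact, for the function $e^{-\beta x}$ on $[0,1]$ with a uniform $\epsilon$-error and boundedness on $[-1,1]$, $\Omega(\sqrt{\beta\log(1/\epsilon)})$ is essentially a \emph{lower} bound, so the claimed $O(\sqrt{\beta}+\log(1/\epsilon_{\mathrm{BE}}))$ degree cannot be achieved by any polynomial in the regime $\log(1/\epsilon_{\mathrm{BE}})\ll\beta\ll\log^2(1/\epsilon_{\mathrm{BE}})$. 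Your construction actually gives gate count $O\bigl(\mathrm{poly}(n)(\sqrt{\beta\log(1/\epsilon_{\mathrm{BE}})}+\log(1/\epsilon_{\mathrm{BE}}))\bigr)$, which is weaker than the lemma's stated bound for general $\epsilon_{\mathrm{BE}}$. (For what it's worth, the paper's own Taylor-truncation analysis actually yields degree $O(\beta+\log(1/\epsilon_{\mathrm{BE}}))$ — the Taylor remainder of $e^{-8\beta x^2}$ needs $k'\gtrsim 8\beta$ before it starts decaying — so the lemma's $\sqrt\beta$ claim is questionable there as well. Both are rescued downstream in Theorem~\ref{thm:qualg}, where $\epsilon_{\mathrm{BE}}\le\tfrac12 e^{-\beta}\epsilon_{\mathrm{TR}}$ forces $\log(1/\epsilon_{\mathrm{BE}})\ge\beta$, in which regime all three bounds collapse to $O(\log(1/\epsilon_{\mathrm{BE}}))=O(\beta+\log(1/\epsilon))$ and the final runtime is correct.) If you intend your proof to establish the lemma as written, you need to either restrict to the regime $\log(1/\epsilon_{\mathrm{BE}})\ge\beta$ or replace the claimed degree bound by $O(\sqrt{\beta\log(1/\epsilon_{\mathrm{BE}})}+\log(1/\epsilon_{\mathrm{BE}}))$.
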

	The proof requires us to introduce a couple of new ideas. At a high level, we use quantum eigenvalue transformation to implement an operation $\exp(-\beta H'^2/2)$, where $H'$ is an operator that acts as an ``effective square root'' of $H$.
	
	Following Ref.~\cite{somma2013spectral}, we define the operator $H'$ as
	\begin{align}
		H' = \sum_{l=1}^L \sqrt{\gamma_l} \; \Pi_l \otimes \left(\outer{l}{0}_{\mathrm{a_1}} + \outer{0}{l}_{\mathrm{a_1}} \right) \;,
	\end{align}
	where $\gamma_l$ and $\Pi_l$ are as in Eq.~\eqref{eqn:H_sum_proj}. This $H'$ is a Hamiltonian acting on a Hilbert space $\calH\otimes \calA_1$ where $\calA_1=\CC^{L+1}$ and behaves like a square root of $H$ in the following sense:
	\begin{align}\label{eq:sq_root_H}
		\left( I\otimes \bra{0}_{\mathrm{a_1}} \right) \left( H'^{2k} \right)\left( I\otimes \ket{0}_{\mathrm{a_1}} \right) = H^k ~~~\text{for all }k\in\mathbb{N}.
	\end{align}
	A simple Taylor expansion then shows that
	\begin{align}\label{eq:exp-gaussian}
		\left( I\otimes \bra{0}_{\mathrm{a_1}} \right) \left( e^{-\beta H'^2/2} \right)\left( I\otimes \ket{0}_{\mathrm{a_1}} \right) = e^{-\beta H/2}\;.
	\end{align}
	
	The eigenvalue transformation methods of Ref.~\cite{gilyen2019singular} require us to provide a unitary block-encoding of $H'$. Such a construction was shown in Ref.~\cite{chowdhury2020computing}, which we briefly outline below for completeness. We define two new operators
	\begin{align}
		&T = \left( I\otimes\ketbra{0}_{\mathrm{a_1}} +\sum_{l=1}^L \Pi_l\otimes \ketbra{l}_{\mathrm{a_1}} \right) \label{eq:sq_root_sel} \;,\\
		&G\ket{0}_{\mathrm{a_1}} = \sum_{l=1}^L \sqrt{\gamma_l} \ket{l}_{\mathrm{a_1}}
	\end{align}	
	where $G$ is unitary (recall that $\sum_l \gamma_l =1$). It can be verified that
	\begin{align}\label{eqn:sq_root_rewrite}
		H' =  T\left( I\otimes G \right) \left( I\otimes\ketbra{0}_{\mathrm{a_1}} \right) + \left(I\otimes \ketbra{0}_{\mathrm{a_1}} \right) \left( I \otimes G^{\dagger} \right) T^{\dagger}\;.
	\end{align}
	The projectors $T$ and $I\otimes \ketbra{0}_{\mathrm{a_1}}$ can be written as the sum of the identity and a reflection.
	Using these, we express $H'$ as a positive-weighted sum of at most $8=2^3$ unitaries, where the coefficients in the sum add up to $4$. Standard techniques~\cite[Lemma 5]{low2019hamiltonian} can be used to get a unitary block-encoding $W$ of $H'$ such that
	\begin{align}\label{eq:block_en_H'}
		\left(I\otimes I_{\mathrm{a_1}} \otimes \bra{0^{\otimes 3}}\right) W \left(I\otimes I_{\mathrm a_1 } \otimes\ket{0^{\otimes 3}}\right) = \frac{H'}{4}\;.
	\end{align}
	The unitary $W$ is thus a 3-ancilla block-encoding of $H'$ with sub-normalization factor 4. Moreoever, $W$ can be implemented with a quantum circuit of size $O(\mathrm{poly}(n))$. It follows from the results in Ref.~\cite{chowdhury2020computing} that if $0\le H \le 1$, then $H'$ has spectrum in $[-1,1]$.
	
	We now discuss how to block-encode the operator $\exp(-\beta H'^2/2)$ using a general result from Ref.~\cite{gilyen2019singular} which we restate below.
	\begin{lemma}[Theorem 56 in~\cite{gilyen2019singular}]\label{lem:gilyen}
		Suppose that $W$ is an $a$-ancilla block-encoding of a Hermitian matrix $H$, $-1 \le H \le 1$ with sub-normalization factor $\alpha$. If $\epsilon\geq 0$ and $ Q\in \mathbb{R}[x]$ is a degree-$k$ polynomial satisfying $| Q(x)|\leq \frac12$ for all $x\in[-1,1]$.
		Then, there is a quantum circuit $\tilde{W}$ which is an $(a+2)$-ancilla block-encoding of an operator $ A'$ with sub-normalization factor 1 such that
		\begin{align}
			\|A' - Q(H/\alpha) \| \le \epsilon\;. 
		\end{align}
		The quantum circuit $\tilde{W}$ consists of $k$ applications of $W$ and $W^\dagger$ gates, a single application of controlled-$W$ and $O((a+1)k)$ other one- and two-qubit gates, and its description can be computed with a classical computer in time $O\left(\mathrm{poly}(k,\log(1/\epsilon))\right)$.
	\end{lemma}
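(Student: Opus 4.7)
The statement is presented as a direct restatement of Theorem 56 of~\cite{gilyen2019singular}, so the plan would be to invoke the quantum singular value transformation (QSVT) machinery established there. Since $H$ is Hermitian with spectrum in $[-\alpha,\alpha]$, the singular value transformation associated with $W$ reduces to the eigenvalue transformation, and therefore the task reduces to producing a block-encoding of $Q(H/\alpha)$ from a block-encoding $W$ of $H/\alpha$.

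The core step is to split $Q$ by parity: write $Q=Q_e+Q_o$ with $Q_e(x)=(Q(x)+Q(-x))/2$ and $Q_o(x)=(Q(x)-Q(-x))/2$. The bound $|Q(x)|\le 1/2$ on $[-1,1]$ ensures $|Q_e(x)|,|Q_o(x)|\le 1/2$ there, so the rescaled polynomials $2Q_e$ and $2Q_o$ are bounded by $1$. I would then apply the QSP/QSVT construction separately for each parity part: for a real polynomial $P$ of definite parity with $\|P\|_{[-1,1]}\le 1$, there exist phases $\phi_0,\ldots,\phi_k$ such that the alternating product of $z$-rotations $e^{i\phi_j Z}$ on a single extra qubit, interleaved with controlled applications of $W$ and $W^\dagger$ and projector-controlled NOTs on the $a$ block-encoding ancillas, yields a new block-encoding of $P(H/\alpha)$ using one additional ancilla and $O((a+1)k)$ two-qubit gates on top of the $k$ uses of $W, W^\dagger$.

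Having block-encoded $2Q_e(H/\alpha)$ and $2Q_o(H/\alpha)$, I would combine them using a standard linear-combination-of-unitaries (LCU) trick with one further ancilla qubit placed in the state $(|0\rangle+|1\rangle)/\sqrt{2}$: conditioning on this qubit selects between the even and odd QSVT phase sequences, and a final Hadamard on the selector produces a block-encoding of $(2Q_e+2Q_o)/2=Q(H/\alpha)$ with sub-normalization $1$. The two parity parts can share the same sequence of $W$ and $W^\dagger$ calls---only the phases $\phi_j$ differ---so the total query cost remains $k$ and the total ancilla overhead is $a+2$, matching the lemma. The single controlled-$W$ appears because the LCU construction for the odd part requires conditioning one copy of $W$ on the selector qubit.

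The approximation error $\epsilon$ enters through the classical computation of the QSP phases: one must numerically decompose $2Q_e$ and $2Q_o$ into a QSP-compatible form (e.g., via the Chebyshev/Laurent polynomial correspondence and an extended-polynomial completion step) and solve for phases in finite precision. I would cite a phase-finding subroutine such as that of Haah or the Remez-type method of Chao--Ding--Gily\'en--Huang--Szegedy, each of which runs in $\mathrm{poly}(k,\log(1/\epsilon))$ classical time. The main technical obstacle, which is precisely the content quoted from~\cite{gilyen2019singular}, is establishing that QSP phases exist and can be found stably for any real bounded polynomial of fixed parity; everything else (parity split, LCU combination, ancilla counting, gate counting) is routine bookkeeping once that existence is in hand.
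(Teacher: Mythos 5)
This lemma is imported verbatim as Theorem 56 of the cited QSVT paper and is not proved in this work, so there is no internal proof to compare against; your sketch correctly reconstructs the argument from that source — parity decomposition of $Q$ with the $|Q|\le 1/2$ bound absorbing the factor-of-two loss, QSP phase sequences for each definite-parity part, LCU recombination on one selector qubit (accounting for the $a+2$ ancillas and the single controlled-$W$), and classical phase-finding in $\mathrm{poly}(k,\log(1/\epsilon))$ time. Nothing further is needed beyond citing the reference, as the paper does.
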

	
	We now have all the tools to prove Lemma~\ref{lem:block_en_exp}.
	\begin{proof}[\bf Proof of Lemma~\ref{lem:block_en_exp}]
		Let $S_{k'}(x)$ denote the truncated Taylor series of $\exp(-8\beta x^2)$ at order $k'$
		\begin{align}
			S_{k'}(x) = \sum_{j=0}^{k'} \frac{(-8\beta)^j x^{2j}}{j!}\;.
		\end{align}
		Define 
		\begin{align}\label{eq:Q_polynomial}
			Q(x) = \tfrac{1}{2}S_{2k+1}(x)
		\end{align} 
		for an integer $k$ to be specified shortly. 
		Applying Lemma~\ref{lemma:Taylor} with $x \rightarrow x^2$ and $b=2\sqrt{2\beta}$, we see that choosing $k = \Theta(\sqrt{\beta }+\log(1/\epsilon_{\mathrm{BE}}))$ is sufficient to get
		\begin{align}\label{eq:trunc_taylor_quant}
			0 \le \max_{x\in [-1,1]} \left( \frac{1}{2}\exp(-8\beta  x^2) - Q(x)\right) \le \frac{\epsilon_{\mathrm{BE}}}{2}\;.
		\end{align}
		The first inequality holds because $Q(x)$ is obtained by truncating the Taylor series of the exponential at an odd-integer term.
		Since $\exp(-\beta H'^2/2) = \exp\left[-8\beta (H'/4 )^2 \right]$, it follows from Eq.~\eqref{eq:trunc_taylor_quant} that
		\begin{align}
			\left\|Q(H'/4) - \frac{1}{2}\exp(-\beta H'^2/2) \right\| \le \frac{\epsilon_{\mathrm{BE}}}{2} \;.
		\end{align}
		Then, Eq.~\eqref{eq:trunc_taylor_quant} and that $\max_{x\in [-1,1]}|\exp(-8\beta  x^2)| \le 1$ imply $Q(x)\le 1/2$, i.e., $Q(x)$ satisfies the conditions of Lemma~\ref{lem:gilyen}. 
		Recall that the unitary $W$ in Eq.~\eqref{eq:block_en_H'} gives a 3-ancilla block-encoding of $H'$ with  sub-normalization factor $4$. We set $\epsilon =\epsilon_{\mathrm{BE}}/2$ in Lemma~\ref{lem:gilyen} and obtain the following: there is a quantum circuit $\tilde W$ which is a 5-ancilla block-encoding of an operator $A'$ such that
		\begin{align}
			\left\| A' -\frac{1}{2}\exp(-\beta H'^2/2) \right\| \le \epsilon_{\mathrm{BE}} \;.
		\end{align}
		Define $\tilde A = \left( \bra{0}_{\mathrm a_1}\otimes I \right) A' \left( \ket{0}_{\mathrm a_1}\otimes I \right)$ where $\ket{0}_{\mathrm a_1}$ is a state of $O(\log n)$ qubits. The above and Eq.~\eqref{eq:exp-gaussian} imply
		\begin{align}
			\left\| \tilde A - \frac{1}{2}\exp(-\beta H/2)  \right\| \le \epsilon_{\mathrm{BE}} \;.
		\end{align}
		The quantum circuit $\tilde W$ is then a $O(\log n)$-ancilla block-encoding of $\tilde A$. Using Lemma~\ref{lem:gilyen} and the fact that the unitary block-encoding $W$ of $H'$ has a quantum circuit of size $O(\mathrm{poly}(n))$, it follows that $\tilde W$ can be implemented with $O\left(\mathrm{poly}(n)(\sqrt{\beta }+\log(1/\epsilon_{\mathrm{BE}})) \right)$ gates.
	\end{proof}
	Finally, we prove the main result of this section.
	\begin{proof}[\bf Proof of Theorem~\ref{thm:qualg}]
		Set $\epsilon_{\mathrm{TR}}=\epsilon/3$. 
		First, we use Lemma~\ref{lem:block_en_exp} with  approximation error $\epsilon_{\mathrm{BE}}\le \frac{e^{-\beta}}{2} \cdot \epsilon_{\mathrm{TR}}$ to obtain a block-encoding $\tilde A$ such that $\|\tilde A  -\exp(-\beta H/2)\| \le \eps_{\mathrm{BE}}$ using a number of gates that is
		\begin{align}\label{eq:gates_block_en_exp_relative}
			O(\mathrm{poly(n)}(\beta+\log(1/\epsilon))).
		\end{align}
		Second, we apply the quantum trace estimation algorithm in Theorem~\ref{qalgo} to the positive semidefinite operator $M=\tilde{A}$ and relative error $\epsilon_{\mathrm{TR}}$.
		Let $\zeta$ denote the estimate of the normalized trace $\tau(\tilde{A})$ output by the quantum trace estimation.
		We obtain
		\begin{align}\label{deviation2}
			| \zeta - \tau(A)| 
			&\le 
			| \zeta - \tau(\tilde{A})| +
			| \tau(\tilde{A}) - \tau(A)| \\
			&\le
			\epsilon_{\mathrm{TR}} \cdot \tau(\tilde{A}) + \epsilon_{\mathrm{BE}} \\
			&\le
			\epsilon_{\mathrm{TR}} \cdot ( 1 + \epsilon_{\mathrm{TR}}) \cdot \tau(A)  + \epsilon_{\mathrm{TR}} \cdot \tau(A) \\
			&\le
			3 \cdot \epsilon_{\mathrm{TR}} \cdot \tau(A) \\
			&\le
			\epsilon \cdot \tau(A)
		\end{align}
		Our choice of $\epsilon_{\mathrm{BE}}$ ensures $\frac{1}{2}\tau(A)\le \tau(\tilde{A})$. 
		
		Therefore, the number of times the block encoding of $\tilde{A}$ has to be invoked inside quantum trace estimation is $O\Big( 1/ \big(\epsilon \sqrt{\tau(A)} \big) \Big)$.
		Combining this with the cost of realizing the block encoding of $\tilde{A}$ in Eq.~\eqref{eq:gates_block_en_exp_relative} gives the overall running-time as in the theorem statement.  The desired estimator for the partition function $\zee$ is $2^{n+1} \zeta$. 
		
		For the qubit count, note that we need $n$ qubits for the system Hamiltonian, and $O(\log n)$ ancilla qubits to implement the unitary block-encoding of $\tilde A$. Lastly, a further $O(\log(1/\epsilon))$ qubits are needed for quantum trace-estimation. 
	\end{proof}
	
	\section{Acknowledgments}
	DG thanks Bill Fefferman, Sevag Gharibian, and Robin Kothari for discussions about quantum approximate counting.
	SB, AC, and DG are supported in part by the Army Research Office under Grant Number W911NF-20-1-0014. DG is a CIFAR fellow in the quantum information science program, and is also supported in part by IBM Research. SB is supported in part by the IBM Research Frontiers Institute. 
	
	\appendix
	\section{Proof of Theorem \ref{thm:kkr}\label{sec:kkr}}

	Let $\calQ_k\equiv (\CC^2)^{\otimes k}$ be the Hilbert space describing a $k$-qubit register.
	Suppose $U=U_T \cdots U_2 U_1$ is the verifier circuit acting on
	$\calQ_n \otimes \calQ_{n_a}$.
	For each $t\in \{0,1,\ldots,T\}$ define a $T$-qubit clock state
	\be
	\label{unary_clock}
	|\hat{t}\rangle = |\underbrace{11\ldots1}_{t}\underbrace{00\ldots 0}_{T-t}\ra \in \calQ_T.
	\ee
	The desired isometry 
	$\mathrm{hist}\, : \, \calQ_n\to \calQ_n\otimes \calQ_{n_a} \otimes \calQ_T$ is defined as
	\be
	\label{isometry_hist}
	|\mathrm{hist}(\psi)\ra = \frac1{\sqrt{T+1}} \sum_{t=0}^T (U_t \cdots U_2 U_1 |\psi\ra \otimes
	|0^{n_a}\ra) \otimes  |\hat{t}\ra,
	\ee
	where $|\psi\ra \in \calQ_n$ is an arbitrary state.
	Accordingly, we set $q(n)=n+n_a+T$.
	Let 
	\[
	\calS_{hist}=\mathrm{span}(|\mathrm{hist}(\psi)\ra \, : \, |\psi\ra \in \calQ_n)
	\]
	be the image of this isometry.
	Below we write $\lambda(H)$ for the smallest eigenvalue of a Hamiltonian $H$.
	We write $H|_{\calS}$ for the restriction of a Hamiltonian $H$ onto a subspace $\calS$.
	We shall use the following facts  established
	by Kempe, Kitaev, and Regev~\cite{kempe2006complexity}.
	\begin{lemma}[\bf Projection Lemma~\cite{kempe2006complexity}]
		\label{lemma:KKRprojection}
		Suppose $G_1$ and $G_2$ are Hamiltonians acting on the same Hilbert space 
		$\calH=\calS\oplus \calS^\perp$ such that 
		$\calS$ is an eigenspace of $G_2$ with the zero eigenvalue 
		and  any eigenvector of $G_2$ in $\calS^\perp$ has the eigenvalue at least $J>2\|G_1\|$.
		Then
		\be
		\label{KKRlower}
		\lambda(G_1+G_2) \ge \lambda(G_1|_\calS) - \frac{\|G_1\|^2}{J-2\|G_1\|}.
		\ee
	\end{lemma}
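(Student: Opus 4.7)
My plan is to apply the variational characterization $\lambda(G_1+G_2) = \min_{\|\psi\|=1}\langle\psi|(G_1+G_2)|\psi\rangle$ to a normalized ground state of $G_1+G_2$. Decomposing such a state with respect to the block structure $\calH = \calS \oplus \calS^\perp$, I would write $|\psi\rangle = \alpha|\psi_1\rangle + \beta|\psi_2\rangle$ with $|\psi_1\rangle \in \calS$ and $|\psi_2\rangle \in \calS^\perp$ unit vectors (whenever the corresponding coefficient is nonzero) and $|\alpha|^2 + |\beta|^2 = 1$. Abbreviate $K := \|G_1\|$. The spectral hypothesis on $G_2$ — eigenvalue $0$ on $\calS$ and at least $J$ on $\calS^\perp$ — immediately gives $\langle\psi|G_2|\psi\rangle \ge |\beta|^2 J$.

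For the $G_1$ contribution I would expand in the same block decomposition,
\[
\langle\psi|G_1|\psi\rangle = |\alpha|^2\langle\psi_1|G_1|\psi_1\rangle + |\beta|^2\langle\psi_2|G_1|\psi_2\rangle + 2\mathrm{Re}\bigl(\alpha^*\beta\,\langle\psi_1|G_1|\psi_2\rangle\bigr),
\]
and lower bound the three pieces by $|\alpha|^2 \lambda(G_1|_\calS)$, $-|\beta|^2 K$, and $-2|\alpha||\beta| K$ respectively, the last two using $\|G_1\|=K$ together with Cauchy--Schwarz on the off-diagonal term. Summing with the $G_2$ bound and subtracting $\lambda(G_1|_\calS)$ — using $|\alpha|^2 - 1 = -|\beta|^2$ and the trivial estimate $\lambda(G_1|_\calS) \ge -K$ — I would arrive at
\[
\lambda(G_1+G_2) - \lambda(G_1|_\calS) \;\ge\; |\beta|^2\bigl(J - 2K\bigr) - 2|\alpha||\beta| K.
\]

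Finally, relaxing $|\alpha| \le 1$ in the cross term, the right-hand side is bounded below by the quadratic $q(t) := (J-2K)t^2 - 2Kt$ evaluated at $t := |\beta|$. Since $J > 2K$, this parabola opens upward with global minimum $q(t^\star) = -K^2/(J-2K)$ attained at $t^\star = K/(J-2K)$; because the minimum is taken over all of $\mathbb{R}$, it is a valid lower bound for every $t \in [0,1]$, independently of whether $t^\star$ itself lies in that interval. This yields the stated inequality.

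The only mildly delicate point is the handling of the cross term: keeping the exact factor $|\alpha| = \sqrt{1-|\beta|^2}$ would produce an awkward one-variable minimization, whereas the clean bound $|\alpha| \le 1$ linearizes the $|\alpha|$-dependence and is tight enough to produce precisely the $K^2/(J-2K)$ penalty. Everything else amounts to routine spectral bookkeeping using the block structure and the definition of operator norm.
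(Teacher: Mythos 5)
Your proof is structurally correct and follows essentially the standard argument from Kempe--Kitaev--Regev (note that the paper imports this lemma by citation and does not reprove it, so the relevant comparison is with the original). The decomposition $|\psi\rangle=\alpha|\psi_1\rangle+\beta|\psi_2\rangle$ along $\calS\oplus\calS^\perp$, the bound $\langle\psi|G_2|\psi\rangle\ge|\beta|^2 J$ (which tacitly uses that $G_2$ annihilates $\calS$, so both the $\calS$ diagonal block and the cross terms of $G_2$ vanish), the block expansion of $\langle\psi|G_1|\psi\rangle$ with norm and Cauchy--Schwarz bounds, the relaxation $|\alpha|\le 1$, and the unconstrained minimization of $q(t)=(J-2K)t^2-2Kt$ with vertex value $-K^2/(J-2K)$ are all sound and deliver exactly the stated penalty term.

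One sign slip needs fixing. After using $|\alpha|^2-1=-|\beta|^2$ you reach
\[
\lambda(G_1+G_2)-\lambda(G_1|_\calS)\ \ge\ -|\beta|^2\,\lambda(G_1|_\calS)+|\beta|^2(J-K)-2|\alpha||\beta|K,
\]
and to replace $-|\beta|^2\lambda(G_1|_\calS)$ by $-|\beta|^2K$ you need the \emph{upper} bound $\lambda(G_1|_\calS)\le K=\|G_1\|$, not the lower bound $\lambda(G_1|_\calS)\ge -K$ that you invoke. Both are trivial consequences of $\|G_1\|=K$, so nothing essential breaks, but as written the cited inequality points the wrong way; with that corrected, the proof is complete.
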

	\begin{lemma}[\bf 2-local history Hamiltonian~\cite{kempe2006complexity}]
		\label{lemma:KKR2local}
		Suppose the verifier circuit $U$ consists of single-qubit and controlled-Z gates.
		For any $J\le poly(n,T)$
		there exists a $2$-local Hamiltonian $H_{hist}$
		acting on $\calQ_n\otimes \calQ_{n_a} \otimes \calQ_T$
		such that $\la \phi|H_{hist}|\phi\ra=0$ for all $|\phi\ra\in \calS_{hist}$
		and 
		\be
		\label{KKR2local}
		\lambda(H_{hist}+H_{out})\ge \lambda(H_{out}|_{\calS_{hist}})-\frac14
		\ee
		for any hermitian operator $H_{out}$ 
		with  $\|H_{out}\|\le J$.
	\end{lemma}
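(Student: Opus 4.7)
The plan is to construct $H_{hist}$ as a scaled Kitaev-style history Hamiltonian with each term engineered to be $2$-local, and then apply Lemma~\ref{lemma:KKRprojection} (with $G_1=H_{out}$, $G_2=H_{hist}$, $\calS=\calS_{hist}$) to deduce the eigenvalue bound. Specifically, I would set
\[
H_{hist} \;=\; J' \cdot (H_{in} + H_{clock} + H_{prop}),
\]
where $H_{in}$ penalizes any ancilla qubit in the state $|1\rangle$ at clock time $0$ (a sum of two-body terms coupling each ancilla to $c_1$), $H_{clock} = \sum_{t=1}^{T-1} |01\rangle\langle 01|_{c_t,c_{t+1}}$ penalizes any deviation from the unary clock states of Eq.~\eqref{unary_clock}, and $H_{prop}=\sum_{t=1}^T H_{prop,t}$ enforces $|\psi_t\rangle=U_t|\psi_{t-1}\rangle$. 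The normalization $J' = \Theta(J^2/\Delta)$ will be pinned down at the end, where $\Delta$ is a lower bound on the first nonzero eigenvalue of $H_{in}+H_{clock}+H_{prop}$.

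The crux is forcing $H_{prop,t}$ to be $2$-local under the restriction that each $U_t$ is single-qubit or CZ. On the unary clock the transition $|\hat{t}\rangle\langle\hat{t{-}1}|$ can be written as $|1\rangle\langle 1|_{c_{t-1}}\otimes|1\rangle\langle 0|_{c_t}\otimes|0\rangle\langle 0|_{c_{t+1}}$, so the naive propagation term $\tfrac12(I\otimes\Pi_t+I\otimes\Pi_{t-1}-U_t\otimes|\hat t\rangle\langle\hat{t{-}1}|-\text{h.c.})$ is $3$-local for single-qubit $U_t$ and $4$-local for CZ. For CZ I would first use its diagonality to rewrite the off-diagonal part as a clock-flip on $c_t$ dressed by a diagonal operator on the two computation qubits, reducing the pre-gadget locality to $3$. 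I would then apply the perturbative $3$-to-$2$ local gadget of Ref.~\cite{kempe2006complexity} to each remaining $3$-local term: introduce one mediator qubit per gadget, use a heavy diagonal coupling $J''$ to split its Hilbert space, and verify that the target $3$-local interaction is reproduced at second order in $1/J''$ in the low-energy subspace, with error $O(1/J'')$. Absorbing the mediator qubits into the ancilla register $\calQ_{n_a}$ yields a strictly $2$-local Hamiltonian whose low-energy sector is unitarily close to the ideal history Hamiltonian and whose null space equals $\calS_{hist}$.

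It remains to verify the null space and a spectral gap. The null space computation is direct: $H_{clock}$ forces valid-clock states, $H_{in}$ forces zero ancillas at time $0$, and $H_{prop}$ forces adjacent time slices to be related by the correct $U_t$, so the kernel is precisely $\calS_{hist}$. For the gap I would invoke Kitaev's path-graph argument: on the valid-clock subspace the propagation Hamiltonian is unitarily equivalent to the graph Laplacian of a path of length $T{+}1$, whose spectral gap is $\Omega(1/T^2)$; combined with the $\Omega(1)$ gap of $H_{clock}$ outside valid-clock states (via a second invocation of Lemma~\ref{lemma:KKRprojection}) this gives $\Delta=\Omega(1/\mathrm{poly}(T))$ on $\calS_{hist}^\perp$. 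Choosing $J'=\Theta(J^2/\Delta)$ ensures the smallest nonzero eigenvalue of $H_{hist}$ exceeds $2\|H_{out}\|+4\|H_{out}\|^2$ for any $\|H_{out}\|\le J$, so Lemma~\ref{lemma:KKRprojection} yields
\[
\lambda(H_{hist}+H_{out}) \;\ge\; \lambda(H_{out}|_{\calS_{hist}}) - \frac{\|H_{out}\|^2}{J'\Delta - 2\|H_{out}\|} \;\ge\; \lambda(H_{out}|_{\calS_{hist}}) - \tfrac14.
\]
The main obstacle I anticipate is the gadget bookkeeping: the $3$-to-$2$ reduction both perturbs the null space by $O(1/J'')$ and inflates operator norms polynomially in $J''$, so $J''$ and $J'$ must be chosen in the correct order (with $J''$ growing polynomially faster than $J'$) so that the residual perturbation and the shift of $\calS_{hist}$ both fit inside the $\tfrac14$ slack afforded by the Projection Lemma.
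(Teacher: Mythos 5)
The paper does not prove this lemma itself: its entire ``proof'' is the parenthetical remark that the statement is a corollary of the proof of Lemma~3 of Kempe--Kitaev--Regev, with $H_{hist}=J_{in}H_{in}+J_1H_{prop1}+J_2H_{prop2}+J_{clock}H_{clock}$ in their notation and the $\tfrac{1}{4}$ arising from a cascade of applications of the Projection Lemma with suitably separated coefficients. Your reconstruction via generic $3$-to-$2$ perturbative gadgets with fresh mediator qubits has a genuine gap that is not mere bookkeeping. The lemma demands two \emph{exact} properties: that $\la \phi|H_{hist}|\phi\ra=0$ for every $|\phi\ra$ in the exact history space $\calS_{hist}$ (the image of the isometry in Eq.~\eqref{isometry_hist}), and---in order to invoke Lemma~\ref{lemma:KKRprojection} with $G_2=H_{hist}$ and $\calS=\calS_{hist}$ as you propose---that $\calS_{hist}$ be an exact zero-eigenspace of $H_{hist}$ separated by a gap. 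A mediator-qubit gadget delivers neither: the target $3$-local interaction appears only at second order, so it is reproduced up to $O(1/J'')$ corrections inside a low-energy subspace that is a \emph{rotation} of $\calS_{hist}\otimes|0\ra_{\mathrm{med}}$, the compensating diagonal terms the gadget requires do not annihilate history states, and the true ground space entangles the mediators with the system. Your assertion that the gadgetized Hamiltonian's ``null space equals $\calS_{hist}$'' is therefore false, and the hypothesis of the Projection Lemma is not met. You flag this as ``gadget bookkeeping'' to be absorbed into the $\tfrac{1}{4}$ slack, but that slack lives only in the lower bound Eq.~\eqref{KKR2local}; the exact-zero property is a separate conclusion and is load-bearing downstream, where the appendix verifies condition~(i) of Theorem~\ref{thm:KKR} by computing $\la\mathrm{hist}(\psi)|H|\mathrm{hist}(\psi)\ra$ exactly and using $\la\mathrm{hist}(\psi)|H_{hist}|\mathrm{hist}(\psi)\ra=0$.

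Two further problems. First, your locality count is off by one: with the unary clock of Eq.~\eqref{unary_clock} the interior transition $|\hat t\ra\la\hat{t-1}|$ acts on the three clock qubits $c_{t-1},c_t,c_{t+1}$, so the naive propagation term is $4$-local for a single-qubit gate and $5$-local for CZ, not $3$ and $4$. The standard remedy---and the reason the coefficient $J_{clock}$ appears in KKR's decomposition---is to first apply the Projection Lemma with a heavy $J_{clock}H_{clock}$ to restrict to legal clock states, on which the transition may be realized by the $1$-local flip $|1\ra\la 0|_{c_t}$; you use this simplification implicitly for the CZ terms but never account for the step, nor for the fact that the single-qubit-gate terms then need no gadget at all. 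Second, the route the cited Lemma~3 actually takes exploits the restricted gate set (which is precisely why the lemma's hypothesis limits $U$ to single-qubit and CZ gates) and treats the residual CZ propagation term $H_{prop2}$ with its own coefficient $J_2$ inside the projection-lemma cascade, rather than via Oliveira--Terhal-style mediator gadgets. A self-contained proof should follow that construction and explain how exact annihilation of $\calS_{hist}$ is preserved; that is the crux your sketch elides.
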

	\noindent
	(Although Lemma~\ref{lemma:KKR2local} is not explicitly stated
	in~\cite{kempe2006complexity}, it is a straightforward corollary
	of the proof of Lemma~3 thereof\footnote{Using the notations
		of Ref.~\cite{kempe2006complexity},
		$H_{hist} = J_{in} H_{in} + J_1 H_{prop1} + J_2 H_{prop2} + J_{clock}H_{clock}$.}
	). 
	
	We shall choose a
	2-local Hamiltonian satisfying conditions (i) and (ii) 
	of Theorem~\ref{thm:KKR} as
	\be
	H=H_{hist} + (T+1) I \otimes  |0\ra\la 0|_{out} \otimes  |1\ra\la 1|_T
	\ee
	where $H_{hist}$ is the Hamiltonian of Lemma~\ref{lemma:KKR2local}
	with the parameter $J$ to be chosen later. 
	The second term in $H$ acts trivially on the witness register $\calQ_n$,
	the projector $|0\ra\la 0|_{out}$ acts on the output qubit measured by the verifier,
	and the projector $|1\ra\la 1|_T$
	acts on the last qubit of the clock register $\calQ_T$.
	Recall that the verifier accepts a witness if the measurement of the
	output qubit has outcome `1'. 
	Thus the second term in $H$
	penalizes witness
	states rejected by the verifier.
	Since the second term in $H$ is positive semidefinite, one
	has 
	$\la \mathrm{hist}(\psi)|H|\mathrm{hist}(\psi)\ra\ge 
	\la \mathrm{hist}(\psi)|H_{hist}|\mathrm{hist}(\psi)\ra
	=0$ for all $|\psi\ra\in \calQ_n$.
	Here the last equality follows from Lemma~\ref{lemma:KKR2local}.
	
	Let us verify condition~(i) of Theorem~\ref{thm:KKR}.
	Suppose the verifier accepts a witness state $|\psi\ra\in \calQ_n$ with the probability at least $a$.
	Equivalently, $\la \psi|A|\psi\ra\ge a$ where 
	\[
	A=\left( I\otimes\langle 0^{n_a}|\right)U^{\dagger} |1\rangle\langle1|_{\mathrm{out}} U \left(I\otimes |0^{n_a}\rangle\right).
	\]
	By Lemma~\ref{lemma:KKR2local},
	\[
	\la \mathrm{hist}(\psi)|H|\mathrm{hist}(\psi)\ra
	=(T+1)  \| \la \mathrm{hist}(\psi)| I \otimes |0\ra_{out} \otimes |1\ra_T\|^2
	=\la \psi |I-A|\psi\ra \le 1-a.
	\]
	Here we noted that
	the projector $|1\ra\la 1|_T$ acting on the clock register
	annihilates all clock states $|\hat{t}\ra$ with $t<T$
	and acts trivially on the clock state $|\hat{T}\ra$, see Eqs.~(\ref{unary_clock},\ref{isometry_hist}).
	Thus $H$ obeys condition~(i).
	
	Let us verify condition~(ii) of Theorem~\ref{thm:KKR}.
	By assumption, we are given a subspace $\calW\subseteq \calQ_n$ such that
	the verifier accepts any state $|\psi\ra\in \calW$
	with the probability at most $b$.
	Equivalently, $\la \psi |A|\psi\ra \le b$ for all $|\psi\ra\in \calW$. 
	Consider a subspace 
	\[
	\calR=\mathrm{hist}(\calW^\perp)\subseteq \calQ_n \otimes \calQ_{n_a} \otimes \calQ_T
	\]
	and let $\Pi_{\calR}$ be the projector onto $\calR$.
	Choose the operator $H_{out}$ in Lemma~\ref{lemma:KKR2local} as
	\be
	\label{Hout}
	H_{out} =  (T+1) I \otimes  |0\ra\la 0|_{out} \otimes  |1\ra\la 1|_T + 6\Pi_{\calR}.
	\ee
	From Eqs.~(\ref{unary_clock},\ref{isometry_hist}) one gets
	\be
	\la \mathrm{hist}(\psi)|H_{out}|\mathrm{hist}(\phi)\ra= \la\psi|I-A +6\Pi_{\calW^\perp}|\phi\ra
	\ee
	for all states $\psi,\phi\in \calQ_n$, where $\Pi_{\calW^\perp}$
	is the projector onto $\calW^\perp$.
	Equivalently, 
	$H_{out}|_{\calS_{hist}}=I-A+6\Pi_{\calW^\perp}$. Thus
	\[
	\lambda(H_{out}|_{\calS_{hist}}) = \lambda(I-A + 6\Pi_{\calW^\perp}).
	\]
	Applying the Projection Lemma with $G_1=I-A$, $G_2=6\Pi_{\calW^\perp}$, and $\calS=\calW$ one gets
	\be
	\lambda(H_{out}|_{\calS_{hist}}) = \lambda(I-A+6\Pi_{\calW^\perp})
	\ge \lambda((I-A)|_\calW) - \frac14.
	\ee
	Here we noted that $\|G_1\|\le 1$.
	Combining this and Eq.~(\ref{KKR2local}) of Lemma~\ref{lemma:KKR2local} gives
	\[
	\lambda(H_{hist}+H_{out})\ge 
	\lambda(H_{out}|_{\calS_{hist}})-\frac14
	\ge \lambda((I-A)|_\calW)-\frac12\ge \frac12-b
	\]
	if we choose the parameter $J$ of Lemma~\ref{lemma:KKR2local}
	as $J=\|H_{out}\| \le (T+1)+6=T+7$, see Eq.~(\ref{Hout}).
	Consider any normalized state $|\phi\ra \in \calR^\perp$.
	By definition, $\la \phi|\Pi_{\calR}|\phi\ra=0$
	and thus 
	\be
	\la \phi| H |\phi\ra = \la \phi|H_{hist} + H_{out}|\phi\ra
	\ge \lambda(H_{hist} + H_{out})
	\ge \frac12-b.
	\ee
	Thus $H$ satisfies condition~(ii) of Theorem~\ref{thm:KKR}.

	\bibliographystyle{ieeetr}
	\bibliography{bibliog}
\end{document}